\theoremstyle{plain}
\newtheorem{thm}{Theorem}[section]
\newtheorem{lem}{Lemma}[section]
\newtheorem{cor}[thm]{Corollary}
\newtheorem{prop}[lem]{Proposition}
\newtheorem{prop-defn}[lem]{Proposition/Definition}
\newtheorem{rem}[lem]{Remark}
\newtheorem{defn}[lem]{Definition}
\newcommand{\C}{\mathbb{C}}
\newcommand{\R}{\mathbb{R}}
\newcommand{\Q}{\mathcal{Q}}
\def\P{\mathcal{P}}
\newcommand{\Z}{\mathbb{Z}}
\newcommand{\T}{\mathcal{T}}
\renewcommand{\H}{\mathcal{H}}
\def\B{\mathcal{B}}
\newcommand{\F}{\mathbf{F}}
\DeclareMathOperator{\End}{End}
\DeclareMathOperator{\Hom}{Hom}
\DeclareMathOperator{\GM}{GM}
\DeclareMathOperator{\prim}{prim}
\DeclareMathOperator{\Jac}{Jac}
\DeclareMathOperator{\ord}{ord}
\DeclareMathOperator{\res}{res}
\DeclareMathOperator{\Id}{Id}
\DeclareMathOperator{\w}{wt}
\DeclareMathOperator{\Span}{Span}
\newcommand{\be}{\begin{equation}}
\newcommand{\ee}{\end{equation}}
\newcommand{\bc}{\begin{cases}}
	\newcommand{\ec}{\end{cases}}
\newcommand{\bes}{\begin{equation*}}
\newcommand{\ees}{\end{equation*}}
\newcommand{\ba}{\begin{align}}
\newcommand{\ea}{\end{align}}
\newcommand{\bas}{\begin{align*}}
\newcommand{\eas}{\end{align*}}
\newcommand{\es}{\end{split}}
\newcommand{\bs}{\begin{split}}
\newcommand{\A}{\mathcal A}
\renewcommand{\H}{\mathcal H}
\def\p{\partial}
\def\bap{\bar{\partial}}
\def\I{\mathcal{I}}
\begin{document}
	\title{CY/LG Correspondence for Weil-Petersson Metrics and $tt^*$ Structures}
	
	\date{}                                      

	\author{Xinxing Tang
		\footnote{Yanqi Lake Beijing Institute of Mathematical Sciences and Applications, Beijing, China, tangxinxing@bimsa.cn. Partially supported by Tsinghua Postdoc Grant 100410058 and research funding at BIMSA.
		}
		\and Junrong Yan
		\footnote{Beijing International Center for Mathematical Research, Peking University, Beijing, China 100871, j\_yan@bicmr.pku.edu.cn. Partially supported by Boya Postdoctoral Fellowship at Peking University.
		} 
	}
	
	\maketitle
	
	\abstract{The aim of this paper is to rigorously establish the Calabi-Yau/Landau-Ginzburg (CY/LG) correspondence for the $tt^*$ geometry structure--a generalized version of variation of Hodge structures. Although it is well-known that there exists a map between Hodge structures on the LG and CY's sides that preserves the Hodge filtration and bilinear form, it remains unclear whether the real structures are also preserved. In our paper, we conduct a detailed analysis of two period integrals on the LG's side. Based on this analysis, we modify the real structure (c.f. \cite[(4.2)]{CECOTTI1991N}) proposed by Cecotti on LG's side, and show that the aforementioned map is also preserved under the modified real structure (see \cref{last}). As a result, we establish full CY/LG correspondence for $tt^*$ structures.}
	\tableofcontents
	\section{Introduction}
	\subsection{Overview}
	

	In the 1990s, physicists discovered mirror symmetry, a relationship between the symplectic geometry (A-model) of a Calabi-Yau manifold $X$ and the complex geometry (B-model) of its mirror $X^{\vee}$. Since its formulation over three decades ago \cite{Candelas1991APO,Greene1990DualityI}, the mirror symmetry conjecture has had a profound influence on mathematics. The A-model on a Calabi-Yau manifold $X$ is the famous Gromov–Witten theory of $X$. The genus 0 theory of the Calabi-Yau B-model is related to the variation of Hodge structure, whereas the higher genus B-model theory is known as BCOV theory \cite{Bershadsky1994KodairaSpencerTO}, \cite{CostLi}.
	
	Simultaneously, physicists extended the above discussion to singularity theory, which became known as the Landau-Ginzburg model. A Landau-Ginzburg model is defined on the pair $(X,f)$, where $X$ is a complete noncompact  K\"{a}hler manifold and $f$ is a holomorphic function. For the A-side, the Landau-Ginzburg A-model is constructed by Fan-Jarvis-Ruan \cite{FJR2013} following Witten's proposal \cite{Witten1993}, which is known as FJRW-theory. Moreover, Chang-Li-Li \cite{chang2015witten} recently formulated FJRW-theory in an algebraic geometric formulation. On the B-side, Saito's theories of primitive form  \cite{1981Primitive} and  the higher residue pairing \cite{Saitohigher} generate the structure of a Frobenius manifold on the singularity's universal deformation space \cite{SaitoFM}, which yield the  genus-0 theory of Landau-Ginzburg B-model. Also, some Hodge theoretical aspect in LG B-model, called the $tt^*$-geometry, is discovered by Cecotti-Vafa \cite{cecotti1991topological}, whose integrability structure is studied by Dubrovin \cite{Dub1993tt}. Later, Hertling \cite{Hert2003tt} carefully researched the $tt^*$ geometric structure and organized a variety of known structures into the so-called TERP structure. Additionally, Fan \cite{fan2011schr} introduced an analytic approach in the spirit of $N=2$ geometry by studying the spectral theory of a twisted Laplacian operator. Besides, Li-Wen \cite{li2019l2} used the $L^2$ Hodge theory to give a Frobenius manifold structure for the case $f$ with compact critical locus.
	Furthermore, motivated by the Virasoro equations and localization calculation in the A-model at a higher genus, Givental \cite{Giv2001b} gave a remarkable formula for the partition function in the semi-simple case.

	The Calabi-Yau/Landau-Ginzburg correspondence connects nonlinear sigma models on Calabi-Yau manifolds with Landau-Ginzburg models. It turns out that CY/LG correspondence and mirror symmetry have served as guidelines in the study of many branched of mathematics (see the following diagram):
	
	
	\vskip 0.2cm
	\begin{center}
		\sffamily
		\footnotesize
		\begin{tikzpicture}[auto,
		block_center/.style ={rectangle, draw=black, thick, fill=white,
			text width=12em, text centered,
			minimum height=3em,inner sep=6pt,outer sep=4pt},
		block_left/.style ={rectangle, draw=black, thick, fill=white,
			text width=12em, text centered, minimum height=3em, inner sep=6pt,outer sep=4pt}]
		\matrix [column sep=25mm,row sep=12mm] {
			\node [block_left] (CYA) {Calabi-Yau A-model};
			& \node [block_center] (LGA) {Landau-Ginzburg A-model};\\
			\node [block_left] (CYB) {Calabi-Yau B-model};
			& \node [block_center] (LGB) {Landau-Ginzburg B-model};\\
		};
		\path[draw=black,latex-latex,thick] (CYA)  -- node[above,font=\tiny,outer sep=2pt]{CY/LG}(LGA);
		\path[draw=black,latex-latex,thick] (LGA) -- node[left, font=\tiny, align=left]{Mirror Symmetry}(LGB);
		\path[draw=black,latex-latex,thick] (CYB) -- node[below,font=\tiny,outer sep=2pt] {CY/LG} (LGB);
		\path[draw=black,latex-latex,thick] (CYA) -- node[right,font=\tiny,align=left] {Mirror Symmetry} (CYB);
		\end{tikzpicture}
	\end{center}
	The CY/LG correspondence for the $A$ model is partially done by Chiodo-Iritani-Ruan \cite{chiodo2014landau}. The question of whether the CY/LG correspondence holds for the B-model is both intriguing and difficult.  To address this question for the genus 0 cases, we show CY/LG correspondence for $tt^*$ structures. In fact, Carlson-Griffiths  \cite{carlson1980infinitesimal} compared the cup product and symplectic pairing on the CY and LG sides. Also, Cecotti \cite{CECOTTI1991N} investigated the $tt^*$ geometry structure on both sides from a physical standpoint.  
	Besides, Fan-Lan-Yang \cite{Fan2020LGCYCB} partially proved that the two $tt^*$ structures are isomorphic except for the real structures, and gave another proof of full CY/LG correspondence of $tt^*$ structures \cite{fan2022constructing} (See Remark \ref{diff} for more details). Here we use different methods than \cite{Fan2020LGCYCB,fan2022constructing} to show the full CY/LG correspondence: we introduce two $U(1)$ actions (see \cref{2.4}), which act as certain bi-grading for LG B-models.  We also establish CY/LG correspondence for Weil-Petersson-type metrics using these two $U(1)$ actions. Furthermore, the Agmon estimate derived in \cite{DY2020cohomology} plays an essential role in our method (See also \cref{witagm}). More precisely, we can use the Agmon estimate to compute  the transition matrices (c.f. \cite[(A.9)]{cecotti1991topological}) introduced by Cecotti-Vafa more explicitly (Theorem \ref{lgcy1}). 
	
	Lastly, we would like to emphasize the following: In \cite{CECOTTI1991N}, Cecotti also introduces a real structure on $\Jac(f)$ using oscillation integrals (c.f. \cite[(4.15)]{CECOTTI1991N}, so Cecotti introduced two real structures in his paper, the other is \cite[(4.2)]{CECOTTI1991N}) and subsequently partially shows CY/LG correspondences for $tt^*$ structures in the semi-infinite sense (see also \cite{iritani2021gromov}). While in this paper, we show CY/LG correspondence for $tt^*$ structures without passing to the semi-infinity setting.
	
	\subsection{$tt^*$ geometry structure on Calabi-Yau manifolds}
	
	Let $X$ be a compact Calabi-Yau $(n-2)$-fold and $\mathcal{M}$ be the moduli stack of the complex structure of $X$ with the universal family
	$\pi: \mathcal{X}\rightarrow\mathcal{M}$.

	Let $H_{\prim}^{n-2}$ be the Hodge bundle of primitive forms over $\mathcal{M}$, i.e., for any $[X]\in\mathcal{M}$, the fiber $H_{\prim}^{n-2}\big|_{[X]}$ is the space of primitive $(n-2)$-forms on $X$. $H_{\prim}^{n-2}$ is called state space by physists. 

We have the Hodge decomposition and Hodge filtration:
	\begin{align*}
	&H^{n-2}_{\prim}=H^{n-2,0}_{\prim}\oplus H^{n-3,1}_{\prim}\oplus\cdots\oplus H^{0,n-2}_{\prim},\quad \overline{H^{n-2-q,q}_{\prim}}=H^{q,n-2-q}_{\prim};\\
	&\F^{n-2}H^{n-2}_{\prim}\subset\cdots\subset \F^0H^{n-2}_{\prim}=H^{n-2}_{\prim},\quad \F^pH^{n-2}_{\prim}=\oplus_{p'=p}^{n-2}H_{\prim}^{p',n-2-p'}.
	\end{align*}
	Moreover, such a bundle is equipped with
	\begin{itemize}
		\item a flat Gauss-Manin connection $\nabla^{\GM}$ which satisfies the Griffiths transversality conditions:
		$$\nabla^{\GM}_{v}\F^p\subset \F^{p-1}, \quad v\in \mathcal{T}_{\mathcal{M}}.$$
		\item a pairing $\eta: H^{n-2}_{\prim}\otimes H^{n-2}_{\prim}\rightarrow C^{\infty}(\mathcal{M})$ such that
		$$g(\alpha,\beta):=i^{2p-(n-2)}\eta(\alpha,\bar\beta)=i^{2p-(n-2)}\int_{X}\alpha\wedge\bar{\beta},\quad \alpha,\beta\in H^{p,n-2-p}_{\prim}$$
		gives a positive Hermitian metric on $H^{n-2}_{\prim}$.
		We call $g$ the $tt^*$ metric in the CY B-model.
	\end{itemize}

	In particular, $H^{n-2,0}_{\prim}$ forms a line bundle over $\mathcal{M}$. Let $\Omega_u$ be a local holomorphic section of $H^{n-2,0}_{\prim}$, which, fiberwisely, is a holomorphic volume form on $X$. The K\"{a}hler potential $K(u,\bar{u})$ is defined via
	\be\label{kpo}e^{-K(u,\bar{u})}=i^{-n-2}\int_{X}\Omega_u\wedge\overline{\Omega_u}.\ee
	Moreover, the Weil-Petersson metric is defined by
	$$G^{CY}=\partial\bar{\partial}K(u,\bar{u}).$$
	It can be checked easily that $G^{CY}$ is independent of the choice of the local holomorphic section of $H^{n-2,0}_{\prim}$.
	
	The variation of Hodge structure is related to the so-called $tt^*$ geometry structure:
	\def\l{\lambda}
	
	\begin{defn}[$tt^*$ geometry] A $tt^*$ geometry structure $(K\rightarrow M, \kappa, g, D, \bar{D}, C, \bar{C})$ consists of the following data
		\begin{itemize}
			\item $M$ is a complex manifold, $K\rightarrow M$ is a holomorphic vector bundle,
			\item a complex anti-linear involution $\kappa: K\rightarrow K$, i.e. $\kappa^2=\Id$, $\kappa(\l\alpha)=\bar{\l}\kappa(\alpha)$, $\forall~ \l\in\C,\alpha\in\Gamma(K)$,
			\item a Hermitian metric $g$ on $K$,
			\item a one-parameter family of flat connections $\nabla^z=D+\bar{D}+\frac{1}{z}C+z\bar{C}$, where $z\in\C^*$, $D$ and $\bar{D}$ are the $(1,0)$ and $(0,1)$ part of the Chern connection of $g$ respectively, $C$, $\bar{C}$ are $C^{\infty}(M)$-linear maps
			$$C:\Gamma(K)\rightarrow \Gamma(K)\otimes_{C^{\infty}(M)}\mathcal{A}^{1,0}(M),\quad \bar{C}:\Gamma(K)\rightarrow \Gamma(K)\otimes_{C^{\infty}(M)}\mathcal{A}^{0,1}(M),$$
		\end{itemize}
		satisfying
		\begin{enumerate}
			\item $g$ is real with respect to $\kappa: g(\kappa(w_1),\kappa(w_2))=\overline{g(w_1,w_2)}$, $\forall w_1,w_2\in\Gamma(K)$,
			\item $(D+\bar{D})(\kappa)=0$, $\bar{C}=\kappa\circ C\circ\kappa$,
			\item $\bar{C}$ is the adjoint of $C$ with respect to $g$, i.e. $g(C_Xw_1,w_2)=g(w_1,\bar{C}_{\bar{X}}w_2)$, $X\in\mathcal{T}_M$.
		\end{enumerate}
		Then we say such a structure $(K\rightarrow M, \kappa, g, D, C, \bar{C})$ is a $tt^*$ geometry structure.
	\end{defn}

	Now consider the  $tt^*$ geometry structure induced by the variation of the Hodge structure on the Calabi-Yau B model. To begin, the $tt^*$ connection is the Chern connection of the $tt^*$ metric; denote the $(1,0)$ and $(0,1)$ parts of the $tt^*$ connection by $D^{CY}$ and $\bar{D}^{CY}$, respectively.
	
	Following that, let $$C^{CY}:T_{\mathcal{M}}\longrightarrow \bigoplus_p\Hom(H^{p,n-2-p}_{\prim},H^{p-1,n-2-p+1}_{\prim})\subset\End(H^{n-2}_{\prim})
	$$
	be the Kodaira-Spencer map and $\bar{C}^{CY}$ be its complex conjugate.

	Combined with the complex conjugation and the $tt^*$ metric $g$,  the Griffiths transversality implies that the above data define a $tt^*$ structure on CY's side.
	
	Let $u=(u_1,...,u_s)$ be local coordinates of the moduli $\mathcal{M}$, and $C^{CY}_i:=C^{CY}(\frac{\p}{\p u_i})$, $G^{CY}_{i\bar{j}}=\p_{u_i}\bar{\p}_{\bar{u}_j}\log K(u,\bar{u}) $, where the K\"ahler potential $K$ is defined via (\ref{kpo}).
	By Griffiths transversality, one can show that $G_{i\bar{j}}$ is related to the $tt^*$ metric as follows:
	$$G_{i\bar{j}}^{CY}=\frac{g\left(C_i^{CY}\Omega_u,C_j^{CY}\Omega_u\right)}{g(\Omega_u,\Omega_u)}.$$

	\subsection{State spaces in LG model}
	A Landau-Ginzburg model is defined by the pair $(X, f)$, where $X$ is a noncompact complete K\"ahler manifold and $f$ is a holomorphic function on $X$. The corresponding B-model is related to the deformation theory of singularities.
	
	It is well-known that the Landau-Ginzburg (LG) B-model is closely
	related to Sigma B-models on Calabi-Yau manifolds. For instance, if we consider the LG model  $(\mathbb{C}^5, f(\textbf{z})=z_1^5+\cdots+z_5^5)$, regarding $[z_1,\ldots,z_5]$ as the homogeneous coordinate system on $\mathbb{P}^4$, then $f=0$ defines a hypersurface $X_f$ (hence is a Calabi-Yau 3-fold) in $\mathbb{P}^4$. In particular, in the physical literature, the central charge $\hat{c}_f$ (defined below) of $f$ is given by $5-2=3$, which is equal to the dimension of the hypersurface. In fact, more quantum information of $X_f$ can be read from $(\mathbb{C}^5, f)$. Furthermore, consider the deformation
	$$F=f+u\prod_{i=1}^5z_i, \quad u\in\mathbb{C}.$$
	For sufficiently small $|u|$, $F$ determines a family of Calabi-Yau hypersurfaces $X_{F(\cdot,u)}$ parametrized by $u$, i.e., a deformation of $f$ induces a deformation of complex structure of $X_f$.

	 In the physical literature, this phenomenon is referred to as the Calabi-Yau/Landau-Ginzburg correspondence. Before we continue, let us briefly review some well-known results.
	


	Consider the pair $(\C^n,f)$, where $f:\C^n\rightarrow \C$ is a quasi-homogeneous polynomial with isolated singularity at the origin, i.e. there exist $q_1,\ldots,q_n\in\mathbb{Q}_+$ such that for any $\lambda\in\C^*$,
	$$f(\lambda^{q_1}z_1,\ldots, \lambda^{q_n}z_n)=\lambda f(z_1,...,z_n).$$
	Each $q_i$ is referred to as the weight of  $z_i$, denoted by the expression $\w(z_i)=q_i$. The system of equations $\partial_i f=0$ has only one solution: zero (with multiplicity).

	It is known that, as a vector space, the state space of Landau-Ginzburg model is isomorphic to the Jacobi ring $\Jac(f)$ of $f$, where
	$$\Jac(f)=\C[z_1,\ldots,z_n]/\langle\p_1f,\ldots,\p_nf\rangle\cong \Omega_{\C^n}^{n}/df\wedge\Omega_{\C^n}^{n-1}.$$
	
	According to the Griffiths-Carlson theory (see \cref{two}), for a non-degenerate homogeneous degree $n$ polynomial $f$, we consider the subspace $\Jac(f)'$ of the Jacobi ring of $f$, given by
	$$\Jac(f)'=\Span\left\{\phi_i~\bigg|~\frac{\deg\phi_i}{n}\in\mathbb{N}, \{\phi_i\}_{i=0}^{\mu-1} \text{ is a homogeneous basis of $\Jac(f)$}\right\}.$$
	Then we will establish the CY/LG correspondence between $\Jac(f)'$ and $H^{n-2}(X_f)_{\prim}$.

	\subsection{Main theorems}
	
	
	In this paper, we investigate the $tt^*$ structure of the Landau-Ginzburg B-model and CY/LG correspondence. We study the $tt^*$ geometric structure in this paper from the viewpoint of Schr\"{o}dinger representations of 1d Landau-Ginzburg model \cite{cecotti1991topological} and \cite{fan2011schr}, see next section for more details.
	
	Consider the real twisted Laplacian operator $\Delta_f$ associated to $(\C^n,f)$ (See \cref{Hodgedecomposition} for the definition of $\Delta_f$). The space of harmonic forms $H_f$ (with respect to $\Delta_f$) is called the vacuum space in the physical literature, and is isomorphic to the Jacobi ring of the polynomial $f$ via the so-called spectral flow $S$ (See the definitions in \cref{S}). The complex conjugate on differential forms gives a natural real structure on $H_f$.
	
	Furthermore, consider the marginal deformation $F$ of $f$ parametrized by the space $u\in M$
	$$F(z,u)=f(z)+\sum_{i=1}^su_i\psi_i,$$
	where $\psi_i$ has the same weight as $f$ (see Definition \ref{quasi}).
	
	The Hodge bundle is defined over the deformation space $M$. Fiberwise, at $u\in M$, it is given by the  harmonic forms of $\Delta_{F(\cdot,u)}$. Sometimes $\Delta_{F(\cdot,u)}$ is also denoted by $\Delta_F$ to avoid heavy notations. The $tt^*$ metric on $M$ is defined as
	$$h(\alpha,\beta)=\int_{\C^n}\alpha\wedge*\bar{\beta},\quad \alpha,\beta\in\ker(\Delta_{F}). $$
	Next, we construct a $C^{\infty}(M)$-linear isomorphism (see \cref{S} for details)
	$$S:C^{\infty}(M)\otimes\Omega_{\C^n\times M/M}^{n}/dF\wedge\Omega_{\C^n\times M/M}^{n-1}\longrightarrow\ker\Delta_F.
	$$
	Via the isomorphism $S$ and some $U(1)$ action analysis, we will study the vacuum line bundle and Weil-Petersson-type metric in Laundau-Ginzburg B-model. 
	We will derive the similar relation between $tt^*$ metric and Weil-Petersson metric as in Calabi-Yau's case:

	Let $u=(u_1,...,u_s)$ be local coordinates of  $M$, and $\p_i:=\partial_{u_i} $.
	
	\begin{thm} \label{131}The Weil-Petersson-type metric $G_{i\bar{j}}$ is related to the $tt^*$ metric $h$ in the following way:
		$$G_{i\bar{j}}=\frac{h(w_i,w_j)}{h(w_0,w_0)},$$
		where
		\begin{itemize}
			\item $w_0=S(dz_1\wedge\cdots\wedge dz_n)$ is a holomorphic section of the vacuum line bundle, and
			$w_i=S(\psi_idz_1\wedge\cdots\wedge dz_n)$ is the holomorphic section corresponding to the deformation $\psi_i$.
			\item $G_{i\bar{j}}:=-\partial_i\bar{\p}_{\bar{j}}\log h(w_0,w_0)$.
		\end{itemize}
	\end{thm}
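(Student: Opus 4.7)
The plan is to mirror the analogous Calabi--Yau computation in which $G_{i\bar j}^{CY} = g(C_i^{CY} u, C_j^{CY} u)/g(u,u)$ is deduced from Griffiths transversality together with the $tt^*$ flatness equations. The LG analog requires three structural inputs that are to be established earlier in the paper: (i) $w_0$ is holomorphic in the sense that $\bar D_{\bar i} w_0 = 0$; (ii) the $(1,0)$-part $D$ of the Chern connection of $h$ preserves the vacuum line bundle, so one may write $D_i w_0 = \xi_i w_0$ for a scalar function $\xi_i$; and (iii) in the $tt^*$ decomposition $\nabla^{GM,(1,0)} = D + C$ the Higgs field satisfies $C_i w_0 = w_i$, while $\bar C_{\bar i} w_0 = 0$. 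Inputs (ii) and (iii) are consequences of the bi-grading produced by the two $U(1)$ actions of Section~\ref{2.4}: the Hermitian metric $h$ is compatible with this bi-grading, $D$ and $\bar D$ preserve it, and $C$, $\bar C$ shift it in opposite directions, so the vacuum line (the extremal bi-degree component) is preserved by $D$ and annihilated by $\bar C$. Moreover, the spectral flow $S$ intertwines multiplication by $\partial_{u_i} F = \psi_i$ on $\Omega^n_{\C^n \times M/M}/dF \wedge \Omega^{n-1}_{\C^n \times M/M}$ with the Higgs operator $C_i$ on $\ker \Delta_F$, which furnishes the identity $C_i w_0 = w_i$.

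Granting these inputs, the computation is purely algebraic. Write $H := h(w_0, w_0)$. Using the Chern-connection identities $\partial_i h(s,t) = h(D_i s, t) + h(s, \bar D_{\bar i} t)$ and $\bar\partial_{\bar j} h(s,t) = h(\bar D_{\bar j} s, t) + h(s, D_j t)$ together with holomorphicity of $w_0$,
\begin{align*}
\partial_i H &= h(D_i w_0, w_0) = \xi_i H, \\
\bar\partial_{\bar j} H &= h(w_0, D_j w_0) = \bar\xi_j H, \\
\partial_i \bar\partial_{\bar j} H &= h(D_i w_0, D_j w_0) + h(w_0, \bar D_{\bar i} D_j w_0) = \xi_i \bar\xi_j H + h(w_0, \bar D_{\bar i} D_j w_0).
\end{align*}
The $tt^*$ flatness relation $[D_i, \bar D_{\bar j}] + [C_i, \bar C_{\bar j}] = 0$ applied to $w_0$, combined with $\bar D_{\bar i} w_0 = 0$, $\bar C_{\bar i} w_0 = 0$, and $C_j w_0 = w_j$, yields
$$\bar D_{\bar i} D_j w_0 = D_j \bar D_{\bar i} w_0 + [C_j, \bar C_{\bar i}] w_0 = -\bar C_{\bar i} w_j.$$
The adjointness $h(u, \bar C_{\bar X} v) = h(C_X u, v)$ then gives $h(w_0, \bar D_{\bar i} D_j w_0) = -h(w_i, w_j)$. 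Substituting into
$$G_{i\bar j} = -\frac{\partial_i \bar\partial_{\bar j} H}{H} + \frac{(\partial_i H)(\bar\partial_{\bar j} H)}{H^2} = -\xi_i \bar\xi_j + \frac{h(w_i, w_j)}{H} + \xi_i \bar\xi_j,$$
the $\xi_i \bar\xi_j$ cross terms cancel, producing the claimed identity $G_{i\bar j} = h(w_i, w_j)/H$.

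The substantive work, therefore, is not the formal manipulation above but the verification of inputs (i)--(iii) in the LG setting: constructing the holomorphic structure on $\ker \Delta_F$ via $S$, performing the bi-grading analysis of the twisted Laplacian $\Delta_F$ under the two $U(1)$ actions, and proving the LG Kodaira--Spencer identity $C_i w_0 = w_i$. These steps are expected to be the main obstacle, since they require controlling harmonic forms of $\Delta_F$ uniformly in $u$ on the noncompact space $\C^n$; the Agmon-type decay estimates from \cite{DY2020cohomology} are the crucial analytic tool that makes such uniform control possible, as indicated already in the introduction.
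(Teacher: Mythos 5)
Your proposal is correct and follows essentially the same route as the paper: the paper likewise derives $\bar C_{\bar i}w_0=0$ and $D_iw_0=(h_{\bar 00}^{-1}\partial_i h_{0\bar 0})w_0$ from the $U(1)$-charge argument (Corollary \ref{wawb}), identifies $C_iw_0=w_i$ through the fact that $C$ agrees with Jacobi-ring multiplication under $S$, and then performs the same curvature computation using the $tt^*$ equation $[D_i,\bar D_{\bar j}]=-[C_i,\bar C_{\bar j}]$ and the adjointness of $C$ and $\bar C$. The structural inputs you flag as the real work are indeed what the paper establishes beforehand (holomorphicity of $w_a$, the $U(1)$ actions, and the Agmon-estimate-based arguments in Sections \ref{S}--\ref{2.4}).
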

	
	Then, to understand $tt^*$ structure, we investigate the following two types of period integrals (see \cref{periodintegral} for more details).
	$$\int_{\gamma^-}e^{F+\bar{F}}w\quad\text{and}\quad \int_{\gamma^-}e^FA,$$
	where $w=S(A)$.
	
	By choosing a basis $\{\gamma_k^-\}$  (Lefschetz thimble) in the homology class $H_n(\C^n, F^{-\infty};\mathbb{Z})$ and a holomorphic basis $\{A_a\}$ of $C^{\infty}(M)\otimes\Omega_{\C^n\times M/M}^n/dF\wedge\Omega_{\C^n\times M/M}^{n-1}$, there exists a matrix-valued function $\T(u,\bar{u})$ such that
	$$\int_{\gamma_k^-}e^FA_a=\sum_b \T_{ab}(u,\bar{u})\int_{\gamma_k^-}e^{F+\bar{F}}w_b,$$
	where $w_a=S(A_a).$
	
	In \cite{cecotti1991topological}, using Leznov-Saveliev method, Cecotti-Vafa shows that there exist (c.f. (A.9) in \cite{cecotti1991topological}) a block-diagonal and holomorphic matrix $\mathcal{F}$ and a unit lower triangular matrix $\mathcal{N}$, such that $\T=e^{\mathcal{F}}\mathcal{N}.$ As a result, there exists a holomorphic function $\lambda(u)$, such that
	\[\int_{\gamma_k^-}e^Fdz_1\wedge\cdots\wedge dz_n=e^{\lambda(u)}\int_{\gamma_k^-}e^{F+\bar{F}}w_0.\]
	
	However, by introducing some $U(1)$ actions and using the Agmon estimate (see, for example, \cite{DY2020cohomology} or \cref{witagm}), we have the following structure theorem for the matrix-valued functions $\T$. Moreover, we show $\lambda(u)\equiv0:$
	
	Let $\mu=\dim\Jac(f)=\dim\Jac(F)(\cdot,u)$ (if $|u|$ is small). To avoid heavy notation, $\Jac(F(\cdot,u))$ is also denoted by $\Jac(F).$ 
	\begin{thm}\label{lgcy1} When we choose a specific monomial representation $\{\phi_a\}_{a=0}^{\mu-1}$ of a basis in $\Jac(F)$ and require the elements of that basis listed in a specific order (See \cref{ttgeometry}), then the matrix-valued function $\T$ is a (block) unit lower triangular matrix. More explicitly,
		\begin{align*}
		(\T)_{aa}&=1 \quad\mbox{ for $0\leq a\leq \mu-1$};\\
		(\T)_{ab}&=0 \quad\mbox{ if $a<b$};\\
		(\T)_{ab}&=\begin{cases}
		&0, \quad\mbox{ if $\frac{l(A_a)-l(A_b)}{n}\notin \mathbb{Z}^+$};\\
		&\frac{(-1)^{l_{ab}}}{l_{ab}!}\sum_{l_c=l_b}\int_{\C^n}\bar{F}^{l_{ab}}A_a\wedge*\bar{w}_c h^{cb}, \quad\mbox{ if $l_{ab}:=\frac{l(A_a)-l(A_b)}{n}\in \mathbb{Z}^+$.}
		\end{cases}
		\end{align*}
		
		Here $(h^{ab})$ is the inverse of $(h_{ab})$, $h_{ab}:=h(w_a,w_b).$
		In particular,
		\[\int_{\gamma_k^-}e^{F+\bar{F}}w_0=\int_{\gamma_k^-}e^Fdz_1\wedge\cdots\wedge dz_n.\]
	\end{thm}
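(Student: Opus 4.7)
The plan is to write $e^F=e^{F+\bar F}\cdot e^{-\bar F}$, Taylor-expand $e^{-\bar F}=\sum_{m\ge 0}\frac{(-1)^m}{m!}\bar F^m$ inside the period integral, and then extract a single term of the resulting series using the bi-grading furnished by the two $U(1)$ actions of Section \ref{2.4}. Concretely, I would first establish the identity
$$\int_{\gamma_k^-}e^F A_a=\sum_{m\ge 0}\frac{(-1)^m}{m!}\int_{\gamma_k^-}e^{F+\bar F}\bar F^m A_a,$$
using the Agmon estimate of \cite{DY2020cohomology} recalled in Section \ref{witagm}, which supplies the uniform exponential decay on the Lefschetz thimbles $\gamma_k^-$ needed for the absolute convergence of the series and for interchanging the sum with the integral.

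Next I would decompose each smooth form $\bar F^m A_a$ via the $L^2$ Hodge theory for $\Delta_F$ into a harmonic part $\sum_b\lambda^{(m)}_{ab}\,w_b$ plus an exact tail. The exact tail contributes nothing against $\int_{\gamma_k^-}e^{F+\bar F}(\cdot)$ because $\gamma_k^-$ realises a class in $H_n(\C^n,F^{-\infty};\Z)$ and the Agmon decay legalises Stokes' theorem. Pairing with $w_b$ against the $tt^*$ metric $h$, the two $U(1)$ weight counts force $\lambda^{(m)}_{ab}=0$ unless $mn=l(A_a)-l(A_b)$: the holomorphic $U(1)$ matches the $z$-weight of the pairing, while the anti-holomorphic $U(1)$ kills every $\bar u_i\bar\psi_i$ cross term in $\bar F^m$, effectively reducing $\bar F^m$ to $\bar f^m$. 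Thus a single index $m=l_{ab}=(l(A_a)-l(A_b))/n$ survives, producing both the lower-triangular vanishing of $\T$ and the explicit formula
$$\T_{ab}=\frac{(-1)^{l_{ab}}}{l_{ab}!}\int_{\C^n}\bar f^{l_{ab}}A_a\wedge *\bar w_b.$$
The diagonal value $\T_{aa}=1$ comes from the $m=0$ case together with the ordering convention of Section \ref{ttgeometry}: inside each weight block the basis is arranged so that $w_a=S(A_a)$ agrees with $A_a$ modulo $\Delta_F$-exact forms in the normalisation of $h$.

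For the last assertion, $A_0=dz_1\wedge\cdots\wedge dz_n$ corresponds to $\phi_0=1\in\Jac(f)$ and carries the unique minimal weight in the chosen ordering, so row $a=0$ of the lower-triangular matrix $\T$ has only the diagonal entry $\T_{00}=1$ nonzero. The defining relation collapses to
$$\int_{\gamma_k^-}e^F\,dz_1\wedge\cdots\wedge dz_n=\int_{\gamma_k^-}e^{F+\bar F}w_0,$$
as required.

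I expect the main obstacle to be the weight-matching step: one has to verify carefully that the spectral flow $S$ intertwines both $U(1)$ gradings, so that every cross term $\bar u_i\bar\psi_i$ inside $\bar F^m$ is orthogonal to $w_b$ and $\bar f^{l_{ab}}$ is the only piece that survives, and that the resulting $L^2$ projection constants combine with the Taylor factor $(-1)^m/m!$ to reproduce $(-1)^{l_{ab}}/l_{ab}!$ with the correct sign from the $*$-pairing. The Agmon estimate of Section \ref{witagm} is needed throughout to justify the non-compact $L^2$ manipulations and the Stokes integrations on $\gamma_k^-$.
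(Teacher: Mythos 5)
There is a genuine gap at the heart of your plan, in the step where you decompose each Taylor term $\bar F^m A_a$ into ``a harmonic part plus an exact tail'' and discard the tail against $\int_{\gamma_k^-}e^{F+\bar F}(\cdot)$. That step only works for $d_{2\Re(F)}$-closed forms, and the individual terms are not closed: $d\bigl(e^{F+\bar F}\bar F^m A_a\bigr)=e^{F+\bar F}\bigl(\bar F^m+m\bar F^{m-1}\bigr)\bar{\p}\bar F\wedge A_a\neq 0$, whereas only the full sum $e^{-\bar F}A_a$ is $d_{2\Re(F)}$-closed (equivalently $e^F A_a$ is $d$-closed). Consequently the Hodge decomposition of $\bar F^m A_a$ has a nonzero $d_{2\Re(F)}^{\dagger}$-exact component, the pairing of a non-closed form with the relative cycle $\gamma_k^-$ is not cohomological, and there is no Stokes argument making the non-harmonic remainder drop out term by term. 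This is exactly the difficulty the paper circumvents by \emph{not} expanding inside the thimble integral: it first proves the Riemann bilinear formula for the mixed pairing (Proposition \ref{riebil1}), whose appendix proof constructs a decomposition $e^{-\bar F}A_a=w'+d_{2\Re(F)}\beta$ with growth control for the closed form as a whole, and thereby converts the defining relation for $\T$ into the $L^2$ identity $h(e^{-\bar F}A_a,w_c)=\sum_{b:l_b=l_c}\T_{ab}\,h(w_b,w_c)$; only then does it extract the entries, using the $U(1)$ action together with holomorphicity of $v_{ac}(s)=h(e^{-s\bar F}A_a,w_c)$ on $|s|<2$ (Lemma \ref{Agmonhol}, which is where the Agmon estimate really enters). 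Your proposal omits the bilinear-formula step entirely, and also glosses over the fact that one must invert the block Gram matrices $(h(w_b,w_c))_{l_b=l_c}$ to pass from pairings to the entries of $\T$ (and to get the unit diagonal).

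Two further points. First, your claim that the anti-holomorphic $U(1)$ ``kills every $\bar u_i\bar\psi_i$ cross term, reducing $\bar F^m$ to $\bar f^m$'' is not justified by weights: the marginal directions $\psi_i$ have the same degree $n$ as $f$, hence $\bar\psi_i$ carries exactly the same charge as $\bar f$, and indeed the paper's own computation produces $\bar F^{l_{ab}}$ in the off-diagonal entries (differentiating $e^{-s\bar F}$ in $s$), not $\bar f^{l_{ab}}$. Second, your underlying idea of expanding $e^{-\bar F}$ and matching charges is salvageable, but only \emph{after} the bilinear formula: expanding inside $h(e^{-\bar F}A_a,w_c)$, with the interchange of sum and integral justified by the Agmon decay $|w_c|\lesssim e^{-b\rho}$ and $\rho\geq 2|f|$, would give a dominated-convergence alternative to the paper's holomorphy-in-$s$ argument. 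As written, however, the expansion is applied where it cannot be processed term by term, so the lower-triangular structure, the explicit off-diagonal formula, and the final identity $\int_{\gamma_k^-}e^{F+\bar F}w_0=\int_{\gamma_k^-}e^F dz_1\wedge\cdots\wedge dz_n$ do not follow from your argument.
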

	\begin{rem}\label{diff}
	    Our approach to constructing the CY/LG correspondence differs from that presented in \cite{fan2022constructing}. In their paper, they modified the real structure on the CY side and established an isomorphism map denoted by $R_{FLY}$ between the $tt^*$ structures on the LG side given by harmonic forms and the $tt^*$ structures on the CY side given by primitive forms (with modified real structure). However, in our paper, we use the transition matrix $\T$ to define the isomorphism $r$, 
     while keeping the natural real structures on both sides, i.e., the obvious complex conjugation.

Additionally, we can describe the $tt^*$ structure on $\Jac(F)'$ directly (rather than on the space of harmonic forms), with real structures modified from \cite[(4.2)]{carlson1980infinitesimal} (see \cref{last}). In this case, the natural residue map $R$ provides a complete isomorphism of $tt^*$ structures. In particular, it preserves the real structure as well.
	\end{rem}
	
	Let
	$$X_{F(\cdot,u)}=\{[z]\in \C P^{n-1}:F(z,u)=0 \}.$$
To avoid heavy notations, sometimes $X_{F(\cdot,u)}$ is also denoted by $X_F.$

 In \cref{five}, we define a map $R_u:\Jac(F)'\to H^{n-2}(X_F)_{\prim}[-1]$ that preserves the Hodge filtration. In particular, $R_u(1)$ is a nowhere vanishing holomorphic $(n-2)$-form on $X_{F}$. Now let $G^{CY}:=-\partial\bar{\p}\log\int_{X_{F(\cdot,u)}}R_u(1)\wedge*\overline{R_u(1)}$, then $G^{CY}$ is the Weil-Petersson metric on $M.$ It follows from Theorem \ref{lgcy1},  that
	\begin{thm}\label{133}
		$G=G^{CY}$(See Theorem \ref{131} for the definition of the metric $G$).
	\end{thm}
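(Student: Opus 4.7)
The plan is to reduce the equality $G = G^{CY}$ to a pluriharmonic identification of the two K\"ahler potentials $h(w_0, w_0)$ and $\int_{X_F} R_u(1) \wedge *\overline{R_u(1)}$. Since both $G$ and $G^{CY}$ are of the form $-\partial\bar\partial \log(\cdot)$, it suffices to show
\begin{equation*}
    \frac{h(w_0, w_0)}{\int_{X_F} R_u(1) \wedge *\overline{R_u(1)}} = |g(u)|^2
\end{equation*}
locally on $M$ for some nowhere-vanishing holomorphic function $g(u)$, since then $\log |g|^2$ is pluriharmonic and annihilated by $\partial\bar\partial$. The strategy is to exhibit both sides as Hermitian bilinear forms in a common vector of period integrals (up to a universal constant and a holomorphic rescaling).

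On the Landau-Ginzburg side, I would fix a basis $\{\gamma_k^-\}$ of Lefschetz thimbles for $H_n(\C^n, F^{-\infty}; \Z)$, with dual basis $\{\gamma_k^+\}$ under the intersection pairing, and apply the $tt^*$-theoretic Riemann bilinear expansion on the vacuum line bundle to write
\begin{equation*}
h(w_0, w_0) = \sum_{k,l} I^{kl}\, \Pi_k(u,\bar u)\, \overline{\Pi_l(u,\bar u)}, \qquad \Pi_k := \int_{\gamma_k^-} e^{F+\bar F}\, w_0,
\end{equation*}
where the intersection coefficients $I^{kl}$ are topological constants independent of $u$. The crucial input from Theorem \ref{lgcy1} is that each period simplifies to
\begin{equation*}
\Pi_k(u, \bar u) = \int_{\gamma_k^-} e^F\, dz_1 \wedge \cdots \wedge dz_n,
\end{equation*}
which depends only holomorphically on $u$. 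Hence $h(w_0, w_0)$ is a Hermitian form in holomorphic functions of $u$ with $u$-independent coefficients.

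On the Calabi-Yau side, I would choose a compatible basis $\{\delta_k\}$ of $H_{n-2}(X_F; \Z)_{\prim}$; the analogous Hodge-Riemann expansion gives
\begin{equation*}
\int_{X_F} R_u(1) \wedge *\overline{R_u(1)} = \sum_{k,l} J^{kl}\, P_k(u)\, \overline{P_l(u)}, \qquad P_k := \int_{\delta_k} R_u(1).
\end{equation*}
The bridge between the two expansions is the classical Leray/Poincar\'e residue correspondence: via the coboundary $\tau: H_{n-2}(X_F) \to H_{n-1}(\mathbb{P}^{n-1} \setminus X_F)$, the Lefschetz thimble $\gamma_k^-$ is identified with the tube $\tau(\delta_k)$, and the homogeneity $\deg F = n$ together with a radial rescaling and the Gamma-function identity $\int_0^\infty t^{k-1}e^{-t}\,dt = \Gamma(k)$ yields the classical proportionality
\begin{equation*}
\int_{\gamma_k^-} e^F\, dz_1 \wedge \cdots \wedge dz_n = c_n \int_{\tau(\delta_k)} \frac{\Omega}{F} = 2\pi i\, c_n\, P_k(u),
\end{equation*}
for a dimension-dependent constant $c_n$. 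After checking that the intersection matrices $I^{kl}$ and $J^{kl}$ agree (up to the same scalar) under this topological identification, the two Hermitian forms coincide up to a positive $u$-independent constant; applying $-\partial\bar\partial \log$ then kills the constant and gives $G = G^{CY}$. The main obstacle is a careful treatment of the LG-to-CY period identification---justifying the radial integration in the presence of oscillating phases and tracking signs and normalizations from the Leray coboundary---but these are standard residue-theoretic manipulations once the degree $n$ homogeneity of $F$ is used.
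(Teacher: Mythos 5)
Your overall architecture (show the two K\"ahler potentials are proportional by expanding both as Hermitian forms in periods, identify LG and CY periods through the Leray tube and a Gamma-function integration, then match intersection matrices) is the same route the paper takes, and your period identification $\int_{\gamma_k^-}e^F\,dz_1\wedge\cdots\wedge dz_n=2\pi i\int_{\delta_k}R_u(1)$ is exactly Lemma \ref{ruan}. However, the two steps you treat as citable or routine are precisely where the work lies, and as written they are gaps. First, there is no off-the-shelf ``$tt^*$-theoretic Riemann bilinear expansion'' of the form $h(w_0,w_0)=\sum_{k,l}I^{kl}\Pi_k\overline{\Pi_l}$ with constant coefficients and only $\gamma^-$-periods. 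The actual bilinear relation (Proposition \ref{riebil}) pairs $H^n(\C^n,F^{-\infty})$ with the \emph{dual} relative homology $H_n(\C^n,F^{+\infty})$: taking $w'=\bar w_0$ one gets $h(w_0,w_0)=\sum_{k,l}\bigl(\int_{\gamma_k^-}e^{F+\bar F}w_0\bigr)\,\I_{kl}\,\overline{\bigl(\int_{\gamma_l^+}e^{-F-\bar F}*w_0\bigr)}$, and the conjugate of a $\gamma^-$-period of $w_0$ is \emph{not} a priori a $\gamma^+$-period of $*w_0$. Converting this into your Hermitian form requires the second identity of Theorem \ref{132} (the $\gamma^+$ statement, which Theorem \ref{lgcy1} as quoted does not contain) together with the identification \eqref{periodlgcy0} of $\gamma^+$-periods with $\int_{\delta_k}*r_u(w_0)$, i.e.\ essentially the full machinery of Proposition \ref{rpqdegree}. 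Second, your two sums do not even run over index sets of the same size: there are $\mu$ thimbles but only $\mu'$ relevant cycles $\delta_k$ on $X_F$; cutting the LG sum down to the monodromy-invariant block $\I'$ requires the vanishing \eqref{monovan} of the periods over the non-invariant thimbles, which your proposal never addresses.

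Third, the statement that the intersection matrices agree up to one scalar is not a bookkeeping ``check'': in the paper this is Theorem \ref{intersec}, and it is proved not topologically but analytically, by first establishing Theorem \ref{fanlan}, i.e.\ $\int_{\C^n}w_a\wedge *w_b=\tfrac{(2\pi)^2}{2^{n-2}}\int_{X_F}r_u(w_a)\wedge *r_u(w_b)$, using the Carlson--Griffiths comparison of residue pairings (Proposition \ref{carlson}) and Shen's explicit formula, and only then comparing the two bilinear expansions to extract $\I'=\tfrac{\pi^2}{2^{n-4}}\I^{CY}$. Note also that the LG intersection matrix is defined via an $L^2$ Riesz-representation/Poincar\'e-duality construction (Definition \ref{intermatrix}), so even identifying it with a topological pairing, let alone with the intersection form on $H_{n-2}(X_F)_{\prim}$, needs an argument. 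The exact constant is harmless (it is killed by $-\partial\bar\partial\log$), but the existence of a single $u$-independent proportionality between the two Hermitian forms is exactly the content you would need to prove; as it stands your proposal assumes it.
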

	
	Finally, we show the full CY/LG correspondence
	\begin{thm}\label{134}
		The small $tt^*$ structure on LG's side (See Definition \ref{smallLG}) and the $tt^*$ structure on CY's side (See in the last subsection) are isomorphic.
	\end{thm}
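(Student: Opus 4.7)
The plan is to construct an explicit bundle isomorphism $\Phi\colon K^{LG}\to K^{CY}$ and then check in turn that it intertwines each datum of the $tt^*$ structure: the real structure $\kappa$, the Hermitian metric $g$, and the Higgs fields $C$ and $\bar C$; the Chern connection will then match automatically from the uniqueness of the Chern connection of a Hermitian holomorphic bundle. The natural candidate is $\Phi:=R_u\circ S^{-1}$, where $S$ is the spectral flow identifying $\ker\Delta_F$ with the relevant subquotient of $\Omega^{n}/dF\wedge\Omega^{n-1}$ and $R_u$ is the residue-type map of Section~4. By Proposition~\ref{res}, $\Phi$ is a filtered isomorphism (Hodge filtration on the CY side, weight filtration on the LG side) and sends the vacuum section $w_0$ to a nowhere-vanishing holomorphic volume form on $X_F$.

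To match the Hermitian metric I would argue in two steps. On the vacuum line, the identity $h(w_0,w_0)=\int_{X_F}R_u(1)\wedge *\overline{R_u(1)}$ together with Theorem~\ref{133} gives the match; on higher Hodge summands, Proposition~\ref{carlson} rewrites the primitive Poincar\'e pairing on the CY side as a Grothendieck residue $\int_{\Gamma(\epsilon)}\phi\psi\,dz_1\wedge\cdots\wedge dz_n/(\partial_1F\cdots\partial_nF)$, which by a direct computation coincides with the $L^2$ pairing of the corresponding harmonic representatives transported by $S$. The Higgs-field intertwining $\Phi\circ C^{LG}_i=C^{CY}_i\circ\Phi$ is the statement that multiplication by $\psi_i$ in $\Jac(F)$ corresponds under $R_u$ to the Kodaira--Spencer derivative of the Hodge filtration, which is the classical content of Griffiths' infinitesimal variation of Hodge structure.

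The main obstacle, and the chief novelty relative to \cite{Fan2020LGCYCB}, is matching the real structures. On the LG side $\kappa$ is complex conjugation of forms on $\C^n$, while on the CY side it is complex conjugation of primitive cohomology classes on $X_F$, and these are defined on quite different objects. My strategy is to identify, in both descriptions, a common integral lattice of flat sections of $\nabla^z$. The LG flat sections arise as period integrals $\int_{\gamma^-_k}e^{F+\bar F}w$, while the CY flat sections (obtained from $R_u$ through the Leray coboundary $\tau$) arise as $\int_{\gamma^-_k}e^F A$. Theorem~\ref{lgcy1} supplies the block unit lower triangular change-of-basis matrix $\T$ between these two families, together with the crucial vacuum identity $\int_{\gamma^-_k}e^{F+\bar F}w_0=\int_{\gamma^-_k}e^F dz_1\wedge\cdots\wedge dz_n$; by induction along the triangular blocks this identifies the $\Z$-lattice of Lefschetz thimbles $\{\gamma^-_k\}$ on the LG side with the $\Z$-lattice of cycles in $H_{n-2}(X_F;\Z)$ on the CY side. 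Since $\kappa^{LG}$ and $\kappa^{CY}$ are both characterized as complex conjugation relative to their respective integral lattices, they coincide under $\Phi$; once $\kappa$, $g$, and $C$ are intertwined, the relation $\bar C=\kappa\circ C\circ\kappa$ handles $\bar C$ and the uniqueness of the Chern connection finishes the proof.
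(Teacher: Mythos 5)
Your choice of map is where the argument breaks. You take $\Phi=R_u\circ S^{-1}$, i.e.\ $w_a\mapsto R_u(\phi_a)$ up to constants. By Proposition \ref{res} this only preserves the Hodge \emph{filtration}: $R_u(\phi_a)\in\F^{n-n_a-1}H^{n-2}(X_F)_{\prim}$ is in general a mixture of Hodge types $(n-2,0),\dots,(n-n_a-1,n_a-1)$, not a class of pure type. But the data you must intertwine --- the Hermitian metric, the real structure, and the splitting of the Gauss--Manin connection into $D+C$ --- live at the level of the Hodge \emph{decomposition}. Concretely, $h(w_a,w_b)\neq 0$ forces $l_a=l_b$ (Corollary \ref{wawb}), whereas $g\bigl(R_u(\phi_a),R_u(\phi_b)\bigr)$ picks up cross terms from the lower-type components of the residues, so $\Phi$ is not an isometry; for the same reason $\overline{\Phi(w_a)}\neq\Phi(\tau_R w_a)$ in general, and Griffiths' IVHS only intertwines $C$ on the associated graded, not on the nose. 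The isomorphism the paper actually uses is not $\Phi$ but the corrected map
$r_u(w_a)=2\pi i\bigl((-1)^{n_a-1}(n_a-1)!\,R_u(\phi_a)-\sum_{b:l_b<l_a}\T_{ab}(u,\bar u)(-1)^{n_b-1}(n_b-1)!\,R_u(\phi_b)\bigr)$,
where $\T$ is the non-holomorphic unipotent matrix of Theorem \ref{lgcy1}; the content of Proposition \ref{rpqdegree} is that precisely this correction makes $r_u(w_a)$ of pure type $(n-n_a-1,n_a-1)$ and yields the period identity $\int_{\gamma_k^-}e^{F+\bar F}w=\int_{\delta_k}r_u(w)$, which is what drives the matching of the metric (Theorem \ref{fanlan}), of the real structure ($r_u(\bar w_a)=\overline{r_u(w_a)}$), and, via Lemma \ref{gmc}, of $D,\bar D,C,\bar C$ separately.

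Relatedly, your lattice argument misuses $\T$: it is a change of basis between the two families of periods $\int_{\gamma_k^-}e^FA_a$ and $\int_{\gamma_k^-}e^{F+\bar F}w_a$, not between homology lattices, and no induction on its triangular blocks identifies thimbles with cycles; that identification is geometric (in the paper $\gamma_k^-$ is the cone over $\sigma_k=\tau(\delta_k)$ for the monodromy-invariant part of $H_{n-1}(V_{-1})$). Moreover, saying that $\kappa^{LG}$ is ``conjugation relative to the thimble lattice'' amounts to two claims: that the thimble periods of $\tau_R w$ are the conjugates of those of $w$ (automatic, since $\gamma_k^-$ is a real chain and $e^{F+\bar F}$ is real), and that these periods coincide with the CY periods of the image class --- but the latter is exactly the identity $\int_{\gamma_k^-}e^{F+\bar F}w=\int_{\delta_k}r_u(w)$, which holds for $r_u$ and fails for your $\Phi$ precisely because of the $\T$-correction. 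So the ingredients you cite (Theorem \ref{lgcy1}, the residue map, the thimble periods) are the right ones, but they must be folded into the definition of the isomorphism itself rather than applied to $R_u\circ S^{-1}$ after the fact.
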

	\begin{rem}
		The existence of a map between Hodge structures on the LG and CY sides that preserves the Hodge filtration and bilinear form is widely acknowledged \cite{carlson1980infinitesimal,Fan2020LGCYCB,Hertling2004} . Therefore, the focus now turns to the preservation of the real structures. We can actually modify the real structure on $\Jac(F)$ given by Cecotti in \cite{CECOTTI1991N} using the transition matrix obtained in Theorem \ref{lgcy1}. As shown in \cref{last}, the modified real structure is preserved by the map constructed in \cite[Theorem 7.3]{Hertling2004}(which is the map $R$ constructed in Definition \ref{resmap} ).
	\end{rem}

	\subsection{Organizations}
	
	The following is the organization of the paper. In \cref{two}, we summarize the Griffiths-Carlson theory. In \cref{three}, we introduce the $tt^*$ geometric structure on the Hodge bundle in the Landau-Ginzburg B-model and use the spectral flow $S$ to establish a relationship between the space of harmonic forms and the Jacobi ring. Then,  we explore the vacuum line bundle and the Weil-Petersson metric on Landau-Ginzburg's side. Moreover, we would like to emphasize that following this section, we choose a specific monomial representation of basis of $\Jac(F)$ and required them listed in a specific order (See \cref{ttgeometry}). \cref{four} discusses the Lefschetz thimbles and investigates two different types of period integrals.  Lastly, we prove the Calabi-Yau/Landau-Ginzburg correspondence for $tt^*$ structure in \cref{five} using period integrals. In the appendix, we summarize some useful results about Agmon estimate, and prove the Riemann bilinear formula. Moreover, we modify the real structure on $\Jac(F)'$ given in \cite{CECOTTI1991N} and show that it is preserved by the residue map $R.$\\

	\noindent \textbf{Acknowledgments}.  The authors thank Huijun Fan for useful discussions on related subjects.  Part of this work was done while J.Y. was visiting Peking University in the spring of 2021. J.Y. thanks for their hospitality and provision of an excellent working environment. J. Y. also values Xianzhe Dai's encouraging and thought-provoking conversations.
	
	\section{The Griffiths-Carlson theory}\label{two}

	For simplicity, now assume that $f$ is homogeneous of degree $n,$ then $$X_f:=\{[z_1,...,z_n]\in \mathbb{P}^n:f(z_1,...,z_n)=0\}$$
	is a Calabi-Yau $(n-2)$-fold. The state space of CY B-model on $X_f$ is given by $$H^{n-2}(X_f)_{\prim}.$$ 
	
	It is known that, as a vector space, the state space of Landau-Ginzburg model is isomorphic to the Jacobi ring $\Jac(f)$ of $f$, where
	$$\Jac(f)=\C[z_1,\ldots,z_n]/\langle\p_1f,\ldots,\p_nf\rangle.$$
	
	It follows from Proposition \ref{res} that
	the (small) state spaces of LG B-model and CY B-model are related via the residue map.
	More explicitly, let $$\Omega=\sum_{i=1}^n(-1)^{i+1}z_idz_1\wedge\cdots\wedge\widehat{dz_i}\wedge\cdots\wedge dz_n,$$
	be a nonzero section of $\Omega_{\mathbb{P}^{n-1}}\cong\mathcal{O}_{\mathbb{P}^{n-1}}(-n)$,
	and assume that $\phi$ is a homogeneous polynomial of degree $kn-n$, then a rational $(n-1)$-form with polar locus $X_f$ can be constructed as:
	$$\Omega_{\phi}=\frac{\phi\Omega}{f^k}\in \Omega^{n-1}(\mathbb{P}^{n-1}-X_f).$$
	
	Here for a complex manifold $X$, $\Omega^k(X)$ denotes the space of holomorphic $k$-forms.

	There is a well known pole order filtration on the space of rational differential $(n-1)$-forms, 
	$$\F_{\ord}^1\subset\F_{\ord}^2\subset\cdots\subset\F_{\ord}^{n-1},$$
	where $\F_{\ord}^k$ is the subgroup with a pole of order $\leq k$ along $X_f$.
	
	The residue map is defined to be
	\begin{align*}
	\res:H^{n-1}(\mathbb{P}^{n-1}-X_f)&\rightarrow H^{n-2}(X_f,\C)\\
	\frac{1}{2\pi i}\int_{\tau(\gamma)}\Omega_{\phi}&=\int_{\gamma}\res \Omega_{\phi}
	\end{align*}
	where $\tau:H_{n-2}(X_f)\rightarrow H_{n-1}(\mathbb{P}^{n-1}-X_f)$ is the Leray coboundary map, i.e. $\tau(\gamma)$ is a tube neighborhood of $\gamma$.

		The Griffiths-Carlson theory 
	\cite{carlson1980infinitesimal} provides the CY/LG correspondence for the cup product and the pairing:
	\begin{prop}[\cite{carlson1980infinitesimal}]\label{res}\label{carlson} The residue map satisfies the following properties:
		\begin{enumerate}
			\item The image of the residue map is the $(n-2)$-th primitive cohomology of $X_f$;
			\item The residue map takes the pole order filtration on the space of rational differentials to the Hodge filtration on $H^{n-2}(X_f;\C)_{\prim}$, i.e., if $\phi$ is a homogeneous polynomial of degree $n(k-1),$
			$$\res\Omega_{\phi}\in\F^{n-k-1}H^{n-2}(X_f;\C)_{\prim}.$$
			Moreover, the residue of a form $\Omega_{\phi}$ has Hodge level $n-k$  if and only if $\phi$ lies in the Jacobian ideal $\langle\p_1f,\cdots,\p_nf\rangle$.
		\end{enumerate}
	Let $\phi$ and ${\psi}$ be homogeneous polynomials of degree $n(a-1)$ and $n(b-1)$ respectively, then
	\begin{enumerate}
		\item $\int_{X_f}\res\Omega_{\phi}\wedge\res\Omega_{\psi}=0$ if and only if $\phi\psi$ lies in the Jacobian ideal $\langle\p_1f,\cdots,\p_nf\rangle$;
		\item Let $\delta:H^{n-2}(X_f)\rightarrow H^{n-1}(\mathbb{P}^{n-1})$ be the coboundary operator in the Poincar\'{e} residue sequence, then the cup product in the projective space is given by
		$$\delta(\res\Omega_{\phi}\wedge\res\Omega_{\psi})=c_{ab}\frac{\phi\psi\Omega}{\p_1f\cdots\p_nf},$$
		where $c_{ab}=\frac{(-1)^{a(a-1)/2+b(b-1)/2+n+(b-1)^2}}{(a-1)!(b-1)!}\deg f$.
		\item 
		$$\int_{X_f}\res\Omega_{\phi}\wedge\res\Omega_{\psi}=\frac{\tilde{c}_{ab}}{(2\pi i)^n}\int_{\Gamma(\epsilon)}\frac{\phi\psi dz_1\wedge\cdots\wedge dz_n}{\p_1f\cdots\p_nf},$$
		where $\tilde{c}_{ab}=\frac{(-1)^{a(a-1)/2+b(b-1)/2+n+(b-1)^2}}{(a-1)!(b-1)!},$ $\Gamma(\epsilon)=\{|\p_if|=\epsilon: i=1,\ldots,n\}$, and the right hand side is the Grothendieck residue symbol.
	\end{enumerate}
	\end{prop}

	\section{Deformation Theory and Weil-Petersson Metric on the Marginal Deformation Space}\label{three}
	
	\subsection{Hodge decomposition}\label{Hodgedecomposition}
	
	\begin{defn}[Quasi-homogeneous polynomials]\label{quasi}
		We say a polynomial $f: \C^n\rightarrow\C$ is quasi-homogenous (of $(q_1,...,q_n)$ for $q_1,\ldots,q_n\in\mathbb{Q}$), if for any $\lambda\in\C^*$,
		$$f(\lambda^{q_1}z_1,\ldots, \lambda^{q_n}z_n)=\lambda f(z_1,...,z_n).$$
		Each $q_i$ is called the weight of $z_i$, and let $\w(z_i):=q_i$. We extend the definition of $\w$ to the monomial $\prod_{i=1}^nz_i^{k_i}$,  such that $\w(\prod_{i=1}^nz_i^{k_i})=\sum_{i=1}^nk_iq_i$. In particular,
		$\w(f)=1$.
		
		In the physical literature, the central charge $\hat{c}_f$ is defined to be
		$$\hat{c}_f=\sum_{i=1}^n(1-2q_i).$$
	\end{defn}
	

	While, for convenience, we assume that $f$ is non-degenerate, i.e. we require that
	\begin{enumerate}
		\item $f$ contains no monomial of the form $z_iz_j$ for $i\neq j$,
		\item $f$ has only an isolated singularity at the origin.
	\end{enumerate}
	
	Next, we'll go over some physics terminology. We consider the pair $(\C^n,f)$. The polynomial $f$ is called the superpotential for LG model. The Schr\"{o}dinger representation of 1d LG model consists of a Hilbert space, which is the space of $L^2$-integrable forms $L^2\Lambda^*(\C^n)$, and four charge operators $Q_{+},Q_{-}, Q_{+}^{\dag},Q_{-}^{\dag}$, which are operators on $L^2\Lambda^*(\C^n)$ given by
	\begin{flalign*}
	&Q_{+}=\bar{\p}_{f}=\bar{\p}+df\wedge,\quad Q_{-}=\p_{f}=\p+d\bar{f}\wedge,\\
	&Q_{+}^{\dag}=\bar{\p}_{f}^{\dag}=-\ast \p_{-f}\ast, \quad Q_{-}^{\dag}=\p_{f}^{\dag}=-\ast \bar{\p}_{-f}\ast.
	\end{flalign*}
	Here $\ast$ is the Hodge star operator with respect to the standard hermitian metric on $\C^n$. Then we have the following relation, called supersymmetric algebra structure in the physical literature
	\begin{flalign*}
	&\p_{f}^2=\p_{f}^{\dag 2}=\overline{\p}_{f}^2=\overline{\p}_{f}^{\dag 2}=0,\\
	&\{\p_{f},\p_{f}^{\dag}\}=\{\overline{\p}_{f},\overline{\p}_{f}^{\dag}\}=\Delta_{f},\\
	&\{\overline{\p}_{f},\p_{f}\}=\{\overline{\p}_{f}^{\dag},\p_{f}^{\dag}\}=\{\p_{f},\overline{\p}_{f}^{\dag}\}=\{\overline{\p}_{f},\p_{f}^{\dag}\}=0,
	\end{flalign*}
	where $\{\cdot,\cdot\}$ is the anti-commutative bracket, that is, $\{P,Q\}=PQ+QP$.
	
	In the spirit of Cecotti-Vafa's paper \cite{cecotti1991topological}, Klimek-Lesniewski \cite{Klimek1991local} and Fan \cite{fan2011schr} used the $L^2$ analysis to study the $L^2$-cohomology $H^{*}_{(2),\overline{\p}_{f}}(\C^n)$ of $(L^2\Lambda^*(\C^n), \overline{\p}_{f})$ and the harmonic forms $H_f^*:=\ker\Delta_{f}$. Let us recall their results as follows.
	
	In \cite{Klimek1991local}, Klimek-Lesniewski studied the system $(\C^n, V=\sum_{j=1}^nV_j(z)dz_j)$, where $V$ is a holomorphic 1-form satisfying the elliptic condition
	\begin{enumerate}
		\item $|V(z)|^2:=\sum_{j=1}^n|V_j(z)|^2\rightarrow\infty$, as $|z|\rightarrow\infty$;
		\item For any $\epsilon>0$, there is a constant $C$ such that
		$$|\p_jV_k(z)|\leq\epsilon|V(z)|^2+C,\quad\text{for all }z\in\C^n,\text{ and }j,k=1,\ldots,n.$$
	\end{enumerate}
	In particular, they pointed out that when $V=d f$, where $f:\C^n\rightarrow \C$ is a holomorphic polynomial, their Laplacian coincides with $\Delta_f$ above.
	
	In \cite{fan2011schr}, Fan studied a more general system $((M,g),f)$, where 
	\begin{enumerate}
		\item $(M,g)$ is a K\"ahler manifold with bounded geometry;
		\item $f$ is a holomoprhic function on $M$ satisfying the following strongly tame condition: for any constant $C>0$, there is
		$$|\nabla f|^2(z)-C|\nabla^2f|(z)\rightarrow\infty,\text{ as } d(z,z_0)\rightarrow\infty$$ \text{ for a fix point $z_0\in M$}
	\end{enumerate}
	
	In particular, Fan proved that the following two cases are such strongly tame systems
	\begin{itemize}
		\item $(\C^n,f)$, where $f$ is a non-degenerate quasi-homogeneous polynomial on $\C^n$;
		\item $((\C^*)^n,f)$, where $f$ is a convenient and non-degenerate Laurent polynomial defined on the algebraic torus $(\C^*)^n$. \footnote{Sabbah studied the corresponding deformation theory in an algebraic formulation, see for example \cite{Sabbah2008}.}
	\end{itemize}
	
	According to Fan's proof, one can also check that $df$ satisfy the above elliptic condition when $f$ is a non-degenerate quasi-homogeneous polynomial. Thus, one has the Hodge theorem in such a special case.
	

	\begin{thm}[Hodge Theorem \cite{fan2011schr}, \cite{Klimek1991local}]\label{Hodgethm} Let $f\in\C[z_1,\ldots,z_n]$ be a non-degenerate quasi-homogeneous polynomial. Then
		\begin{enumerate}
			\item $\dim H_f^*\leq\infty$.
			\item \emph{(Hodge decomposition)} There are orthogonal decompositions
			$$L^{2}\Lambda^{*}(\C^n)=H_{f}^{*}\oplus\Im(\overline{\p}_{f})\oplus \Im(\overline{\p}_{f}^{\dag}).$$
			More precisely, there is a self-adjoint compact operator $G_{f}$ on $L^2\Lambda^*(X)$ such that
			$$L^2\Lambda^k(\C^n)=H_f^k\oplus \overline{\p}_{f}\left(\overline{\p}_{f}^{\dag}G_{f}L^2\Lambda^k(\C^n)\right)\oplus \overline{\p}_{f}^{\dag}\left(\overline{\p}_{f}G_{f}L^2\Lambda^k(\C^n)\right).$$
			\item \emph{(Vanishing Theorem)}
			\begin{equation}
			\label{eq:abs}
			H_{f}^{k}\cong H^k_{(2),\overline{\p}_{f}}(\C^n)\cong
			\begin{cases}
			0& \text{if $k \neq n$}\\
			\Omega^{n}(\C^n)/df\wedge\Omega^{n-1}(\C^n)\cong\Jac(f) & \text{if $k=n$}.
			\end{cases}
			\end{equation}
		\end{enumerate}
	\end{thm}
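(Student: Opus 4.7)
My plan is to treat $\Delta_f$ as a Schr\"odinger-type operator with a confining potential on forms, reducing (1) and (2) to standard spectral theory, and then computing the cohomology in (3) algebraically via a double complex.

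For parts (1) and (2), I would first expand the twisted Laplacian into Weitzenb\"ock form
\[
\Delta_f = \Delta_0 + |df|^2 + R,
\]
where $\Delta_0$ is the flat Hodge Laplacian and $R$ is a zeroth-order bundle endomorphism built from the complex Hessian $\p_i\p_j f$ and its conjugate. The first elliptic hypothesis forces $|df|^2 \to \infty$ at infinity, while the second provides $|R| \leq \epsilon|df|^2 + C_\epsilon$ for every $\epsilon > 0$. Combining these on $C_c^\infty$ forms yields a coercive quadratic form bound
\[
\langle \Delta_f \omega,\omega\rangle \;\geq\; c\|\omega\|_{H^1}^2 + c\langle |df|^2\omega,\omega\rangle - C\|\omega\|^2,
\]
so a Friedrichs construction makes $\Delta_f$ essentially self-adjoint with form domain compactly embedded in $L^2\Lambda^*(\C^n)$. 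Compact resolvent then gives discrete spectrum with finite multiplicity and hence $\dim H_f^* < \infty$. The Hodge decomposition in (2) is the usual consequence of the spectral theorem together with the supersymmetric identity $\Delta_f = \{\bar{\p}_f,\bar{\p}_f^\dagger\}$ and the spectral gap away from zero; the Green operator $G_f$ is $\Delta_f^{-1}$ on $(H_f^*)^\perp$ extended by zero on $H_f^*$, and is compact as a composition of a bounded operator with the compact resolvent.

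For the vanishing in (3), I would first identify harmonic forms with $\bar{\p}_f$-cohomology via Hodge theory, and then compute the latter algebraically. Equip $\Lambda^k = \bigoplus_{p+q=k}\Lambda^{p,q}$ with the usual bigrading. Since $df$ is holomorphic, $\bar{\p}_f = \bar{\p} + df\wedge$ splits into pieces of bidegree $(0,1)$ and $(1,0)$ with $\{\bar{\p},df\wedge\}=0$, giving a double complex. The associated spectral sequence starts from column-wise $\bar{\p}$-cohomology, and the $L^2$ Dolbeault lemma on $\C^n$ collapses $E_1$ onto the holomorphic top columns $\Omega^{p,0}$. The remaining horizontal differential $df\wedge$ is precisely the Koszul complex of the sequence $(\p_1f,\ldots,\p_n f)$; since $f$ has an isolated singularity at the origin this complex is exact except in top degree, where it computes $\Omega^n(\C^n)/df\wedge\Omega^{n-1}(\C^n) \cong \Jac(f)$, as required.

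The main obstacle is the analytic justification of the $L^2$ Dolbeault vanishing in the weighted norm implicit in the Schr\"odinger picture: the natural weight $e^{-t|df|^2}$ is not obviously plurisubharmonic with curvature dominating the lower-order terms, so a direct H\"ormander-type estimate needs care. I would either prove such a weighted vanishing by hand using the quasi-homogeneity and the isolated singularity, or bypass the spectral sequence entirely via a Witten-type deformation $f \rightsquigarrow tf$ plus local Morse-Bott analysis at the unique critical point $0$, which identifies the asymptotic harmonic forms with $\Jac(f)$ concentrated in degree $n$. Matching these algebraic representatives back to the genuine $L^2$ harmonic ones at finite $t$ is exactly what the spectral flow $S$ introduced later in the paper is engineered to do.
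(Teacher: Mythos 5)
The paper does not prove Theorem \ref{Hodgethm} itself; it quotes it from \cite{Klimek1991local} and \cite{fan2011schr}, so your proposal must be judged against the standard arguments in those references. Your treatment of (1) and (2) is essentially that standard argument and is sound: the Weitzenb\"ock expansion $\Delta_f=\Delta_0+|df|^2+R$, the two ellipticity hypotheses giving $|R|\leq\epsilon|df|^2+C_\epsilon$, coercivity, compact resolvent, discrete spectrum, and the Green operator $G_f=\Delta_f^{-1}$ on the orthogonal complement of the kernel. This is exactly how Klimek--Lesniewski and Fan obtain finite-dimensionality and the decomposition.

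Part (3), however, has a genuine gap. Your spectral-sequence argument hinges on "the $L^2$ Dolbeault lemma on $\C^n$" collapsing the $E_1$ page onto holomorphic columns, but in the unweighted $L^2$ setting this is false: $L^2$ holomorphic functions (and $L^2$ holomorphic $(p,0)$-forms) on $\C^n$ are zero, and the flat $\bar\partial$-Laplacian on $\C^n$ has no spectral gap, so $\bar\partial$ does not even have closed range columnwise; the $E_1$ page is not $\Omega^{p,0}$ and the Koszul reduction does not follow this way. Your proposed fallback, a Witten deformation $f\rightsquigarrow tf$ with "local Morse--Bott analysis at the unique critical point," also fails as stated: for a quasi-homogeneous $f$ of degree larger than two the origin is a degenerate critical point, so Morse--Bott localization does not apply and does not by itself produce the count $\mu=\dim\Jac(f)$ in degree $n$. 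The actual route in \cite{Klimek1991local}, \cite{fan2011schr} (and in \cite{li2013primitive}, which this paper reuses in Section \ref{S}) is different: one builds explicit homotopy operators from the contraction $V_f=\frac{(\partial f)^{\dagger}}{|\nabla f|^{2}}$ away from the critical point (the operators $T_\rho$, $R_\rho$ of the paper), proving that the inclusion of the compactly supported (equivalently, polynomial) $\bar\partial_f$-complex into the $L^2$ complex is a quasi-isomorphism; only then does the algebraic Koszul computation give vanishing outside degree $n$ and $\Omega^n/df\wedge\Omega^{n-1}\cong\Jac(f)$ in degree $n$. Your final remark that the spectral flow $S$ "matches representatives" presupposes exactly this comparison, so it cannot be used to close the gap without circularity; you need the $V_f$-type homotopy (or an equivalent weighted H\"ormander estimate adapted to $\bar\partial_f$, not to $\bar\partial$ alone) as an independent ingredient.
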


	\begin{rem}\label{KHD} One can also introduce the Witten deformation $d_{2\Re(f)}:=d+2d\Re(f)\wedge$ of de Rham differential $d$. By K\"ahler identity,
		$$\Delta_{2\Re(f)}=\{d_{2\Re(f)},d_{2\Re(f)}^{\dagger}\}=2\Delta_f.$$
		Moreover, we have the Kodaira-Hodge decomposition, see also for example, \cite{DY2020cohomology},
		$$L^{2}\Lambda^{*}(\C^n)=\ker(\Delta_{2\Re(f)})\oplus \Im(d_{2\Re(f)})\oplus \Im(d_{2\Re(f)}^{\dag}).$$
		
		In addition, Fan \cite{fan2011schr} shows that $*\Delta_f=\Delta_{-f}*$ (where $*$ is the Hodge star operator), which plays a crucial role in the study of torsion type invariants in the LG model \cite{fanfang2016torsion}.
	\end{rem}
	

	\subsection{$tt^*$ geometry}\label{ttgeometry}
	
	Fix  a monomial representation $\{\phi_a\}_{a=0}^{\mu-1}$ of basis of $\Jac(f)$, where $\mu=\dim\Jac(f)$ and $\phi_0=1$. Moreover, we assume that $\w(\phi_a)\leq \w(\phi_b)$ if $a<b$ (See Definition \ref{quasi} for the definition of $\w$). In particular, let $\{\psi_i:=\phi_{a_i}\}_{i=1}^s$ be the subset of $\{\phi_a\}_{a=0}^{\mu-1}$, such that each $\psi_i$ has weight $1$, $i=1,\ldots,s$.  We will study marginal deformation of $f$
	$$F(z,u)=f(z)+\sum_{i=1}^{s}u_{i}\psi_i(z).$$
	
	We denote  the space of parameters ${u_{i}}$ by $M$, which could be taken as a small neighborhood of the origin in $\C^s$. Therefore, we have a family of supersymmetric algebra operators $\bar{\p}_{F},\p_{F}, \bar{\p}^{\dag}_{F},\p_{F}^{\dag},\Delta_{F}$ parameterized by $u\in M$. When $|u|$ is small enough, the results in the previous subsection apply to $\Delta_F$, see a rigourous proof in \cite{fan2011schr}. In particular, when $|u|$ is small, $\Jac(F(\cdot, u))$ is isomorphic to $\Jac(f)$ as a vector space, and $\{\phi_a\}_{a=0}^{\mu-1}$ is still a representation of basis of $\Jac(F(\cdot,u)).$ To avoid heavy notation, we will denote $\Jac(F(\cdot,u))$ as $\Jac(F).$

	We have the trivial complex Hilbert bundle $L^2\Lambda^{*}(\C^n)\times M\rightarrow  M$. For simplicity, denote by $L^2\mathcal{A}$ its $L^2$-integrable section space. There are two natural parings,
	\begin{flalign*}
	h: ~&L^2\A\times L^2\mathcal{A}\longrightarrow C^{\infty}(M)\qquad h(\alpha,\beta)=\int_{X}\alpha\wedge\ast\bar{\beta},\\
	\eta: ~&L^2\mathcal{A}\times L^2\mathcal{A}\longrightarrow C^{\infty}(M)\qquad \eta(\alpha,\beta)=\int_{X}\alpha\wedge\ast\beta.
	\end{flalign*}
	
	Note that there is a canonical real structure on the Hilbert bundle induced from the complex conjugate in $L^2\Lambda^*(\C^n)$, which was denoted by $\bar{\cdot}$. Then $h(\alpha,\beta)=\eta(\alpha,\bar\beta)$.
	
	Let ${H}^n:=H_{F}^{n}$ be the Hodge bundle over $M$. Fiberwisely,  $H^n\big|_u$ is the space of harmonic forms of $\Delta_{F(\cdot,u)}$. We denote the space of its section by $\H$.
	
	Let $\Pi_u: L^2\A\rightarrow \H$ be the harmonic projection, $G$ be the inverse of $\Delta_{F}$ on $\Im(\bar{\p}_{F})\oplus\Im(\bar{\p}_{F}^{\dag})$, then $G$ commutes with the operators $\bar{\p}_{F},\p_{F}, \bar{\p}^{\dag}_{F},\p_{F}^{\dag},\Delta_{F}$, and the Hodge decomposition reads
	$$\Id=\Pi_u+\Delta_{F}G=\Pi_u+G\Delta_{F}.$$

	Now we introduce some important operators on the Hodge bundle that give the $tt^*$ structures. Let $u=(u_1,...,u_s)$ be local coordinates of  $M$, and $\partial_i:=\partial_{u_i}$. 
	
	\vskip 0.1cm
	\noindent(1) The connection $D$, $\bar{D}$
	\vskip 0.1cm
	Notice that the Hodge bundle is embedded into the trivial Hilbert bundle, so we can define $D$, $\bar{D}$ in a natural way:
	$$D_i:=D_{\frac{\p}{\p u_i}}=\Pi_u\circ \p_i,\quad \bar{D}_{\bar{i}}:=\bar{D}_{\frac{\p}{\p\bar{u_i}}} =\Pi_u\circ\bar{\p}_{\bar{i}}\quad i=1,...,s.$$

	\noindent(2) The operators $C_i, \bar{C}_{\bar{i}}$
	\vskip 0.1cm
	We define $C_{i}=\Pi_u\circ\p_{i}F=\Pi_u\circ\psi_i$, $\bar{C}_{\bar{i}}=\Pi_u\circ\overline{\p_iF}=\Pi_u\circ\overline{\psi_i}$. We can easily show that
	$$C_{i}\alpha_{a}=(\p_{i}F\alpha_{a})-\bar{\p}_{F}\bar{\p}^{\dag}_{F}G(\p_{i}F)\alpha_{a}, \quad \bar{C}_{\bar{i}}\alpha_{a}=(\overline{\p_{i}F}\alpha_{a})-\p_{ F}\p^{\dag}_{F}G(\overline{\p_{i}F})\alpha_{a}.$$
	By definition, $\bar{C}_{\bar{i}}$ is the adjoint operator of $C_i$ with respect to the $tt^*$ metric $h$, i.e.
	$$h(C_i\alpha_a,\alpha_b)=h(\alpha_a,\bar{C}_{\bar{i}}\alpha_b).$$
	\begin{prop}[$tt^*$ equation]  $D_i,\bar{D}_{\bar{j}}$, $C_i, \bar{C}_{\bar{j}}$ satisfy the following equations
		\begin{enumerate}
			\item $[C_{i},C_{j}]=0, \quad [\bar{C}_{\bar{i}}, \bar{C}_{\bar{j}}]=0,\quad [D_{i},\bar{C}_{\bar{j}}]=[\bar{D}_{\bar{i}},C_{j}]=0;$
			\item $[D_{i},C_{j}]=[D_{j},C_{i}], \quad [\bar{D}_{\bar{i}},\bar{C}_{\bar{j}}]=[\bar{D}_{\bar{j}},\bar{C}_{\bar{i}}]$;
			\item $[D_i, D_j]=0,\quad [\bar{D}_{\bar{i}},\bar{D}_{\bar{j}}]=0,\quad [D_{i},\bar{D}_{\bar{j}}]=-[C_{i},\bar{C}_{\bar{j}}].$
		\end{enumerate}
	\end{prop}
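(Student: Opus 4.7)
My plan is to verify each of the three groups of identities directly from the definitions of $D_i$, $\bar D_{\bar j}$, $C_i$, $\bar C_{\bar j}$, using four preparatory ingredients. (a) The Hodge decomposition $\Id=\Pi_u+\bar{\partial}_F\bar{\partial}_F^{\dag}G+\bar{\partial}_F^{\dag}\bar{\partial}_F G$ (and its $\partial_F$-analogue), so that $\Pi_u$ annihilates every $\bar{\partial}_F$- and $\bar{\partial}_F^{\dag}$-image. (b) Holomorphicity of $\psi_i$ in $z$ gives $[\psi_i,\bar{\partial}_F]=0$; its conjugate gives $[\overline{\psi_j},\partial_F]=0$. (c) Holomorphicity of $F$ in $u$ yields $[\bar{\partial}_{\bar j},\bar{\partial}_F]=[\partial_i,\partial_F]=0$, $[\partial_i,\bar{\partial}_F]= \partial\psi_i\wedge$, and $[\bar{\partial}_{\bar j},\partial_F]= \bar{\partial}\,\overline{\psi_j}\wedge$. (d) Any harmonic $\alpha\in\H^n$ is annihilated by all of $\bar{\partial}_F,\bar{\partial}_F^{\dag},\partial_F,\partial_F^{\dag}$.

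For the first group, harmonicity of $\alpha$ together with (b) gives $\bar{\partial}_F(\psi_j\alpha)=0$, so the Hodge decomposition simplifies to $\psi_j\alpha= C_j\alpha+\bar{\partial}_F\beta_j$ with $\beta_j= \bar{\partial}_F^{\dag}G(\psi_j\alpha)$. Multiplying by $\psi_i$, using (b) to pull $\psi_i$ past $\bar{\partial}_F$, and projecting by $\Pi_u$ gives $C_iC_j\alpha= \Pi_u(\psi_i\psi_j\alpha)$, which is manifestly symmetric in $i,j$; the $\bar C$-$\bar C$ identity is the same computation with $\partial_F$ in place of $\bar{\partial}_F$. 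For $[C_i,\bar D_{\bar j}]=0$, part (c) shows $\bar{\partial}_F(\bar{\partial}_{\bar j}\alpha)=0$, hence $\bar{\partial}_{\bar j}\alpha= \bar D_{\bar j}\alpha+\bar{\partial}_F\gamma$, and a short expansion reduces both $C_i\bar D_{\bar j}\alpha$ and $\bar D_{\bar j}C_i\alpha$ to $\Pi_u(\psi_i\bar{\partial}_{\bar j}\alpha)$. The identity $[D_i,\bar C_{\bar j}]=0$ is the conjugate argument using $[\partial_i,\partial_F]=0$.

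For the second group, I would differentiate $\psi_j\alpha= C_j\alpha+\bar{\partial}_F\beta_j$ in $u_i$; the commutator $[\partial_i,\bar{\partial}_F]= \partial\psi_i\wedge$ of (c) produces a correction term $\partial\psi_i\wedge\beta_j$. Forming the antisymmetric combination collects all corrections into $\Pi_u(\partial\psi_i\wedge\beta_j- \partial\psi_j\wedge\beta_i)$; using $\partial_i\psi_j= \partial_j\psi_i=0$ (the $\psi_k$ are $u$-independent) together with the vanishing of $\Pi_u$ on $\bar{\partial}_F$- and $\bar{\partial}_F^{\dag}$-exact forms, this combination vanishes and $[D_i,C_j]=[D_j,C_i]$ follows. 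The conjugate symmetry is handled identically with $\partial_F$.

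For the third group, $[D_i,D_j]= [\bar D_{\bar i},\bar D_{\bar j}]=0$ follow formally from $[\partial_i,\partial_j]= [\bar{\partial}_{\bar i},\bar{\partial}_{\bar j}]=0$ combined with the Hodge decomposition. The mixed curvature relation $[D_i,\bar D_{\bar j}]=-[C_i,\bar C_{\bar j}]$ is the main obstacle and the only place where the Chern curvature of the $tt^*$ metric interacts non-trivially with the Higgs fields: expanding $(D_i\bar D_{\bar j}-\bar D_{\bar j}D_i)\alpha$ via the Hodge decomposition and the commutators in (c), the Green operator $G$ brings in multiplications by $\psi_i$ and $\overline{\psi_j}$, and invoking the supersymmetric relations $\{\bar{\partial}_F,\partial_F\}=0$, $\Delta_F= \{\bar{\partial}_F,\bar{\partial}_F^{\dag}\}= \{\partial_F,\partial_F^{\dag}\}$, and $G\Delta_F= \Id-\Pi_u$ collapses the residual terms to $-[C_i,\bar C_{\bar j}]\alpha$. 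A conceptually cleaner packaging is the flatness of the pencil $\nabla^z= D+\bar D+z^{-1}C+z\bar C$ for $z\in\C^*$: expanding $(\nabla^z)^2=0$ in powers of $z$ from $z^{-2}$ to $z^2$ recovers all three groups of equations at once, and this flatness can be derived by realizing $\nabla^z$ as the orthogonal projection onto the Hodge bundle of a flat operator on the ambient trivial Hilbert bundle assembled from $d_M$, $\bar{\partial}_F$, and $\partial_F$.
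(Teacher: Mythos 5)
The paper does not actually prove this proposition; it simply cites \cite{Tang2017ttGS}, so your attempt has to stand on its own as a direct verification. Your first group of identities is fine: the key points you use — $[\psi_j,\bar{\partial}_F]=0$, $[\bar{\partial}_{\bar j},\bar{\partial}_F]=0$, $[\partial_i,\partial_F]=0$, harmonicity, and the fact that $\Pi_u$ kills $\Im(\bar{\partial}_F)$ and $\Im(\partial_F)$ — do reduce both sides of each bracket to the same projected expression (modulo the usual integration-by-parts justification via Agmon estimates, which the paper supplies elsewhere).

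The gaps are in groups (2) and (3). For $[D_i,C_j]=[D_j,C_i]$ your bookkeeping drops a term: since $[\partial_i,\bar{\partial}_F]=\partial\psi_i\wedge\neq 0$, the form $\partial_i\alpha$ is \emph{not} $\bar{\partial}_F$-closed, so its Hodge decomposition contains a $\bar{\partial}_F^{\dag}$-exact piece $\bar{\partial}_F^{\dag}b_i$ with $b_i=-G(\partial\psi_i\wedge\alpha)$, and multiplication by $\psi_j$ preserves $\Im(\bar{\partial}_F)$ but not $\Im(\bar{\partial}_F^{\dag})$. Hence $C_jD_i\alpha=\Pi_u(\psi_j\Pi_u\partial_i\alpha)$ differs from $\Pi_u(\psi_j\partial_i\alpha)$ by $\Pi_u(\psi_j\bar{\partial}_F^{\dag}b_i)$, and one gets $[D_i,C_j]\alpha=\Pi_u(\psi_j\bar{\partial}_F^{\dag}b_i)-\Pi_u(\partial\psi_i\wedge\beta_j)$; neither term is visibly exact, so your claim that the antisymmetrized correction $\Pi_u(\partial\psi_i\wedge\beta_j-\partial\psi_j\wedge\beta_i)$ dies because "$\Pi_u$ kills exact forms" is unjustified, and the $\bar{\partial}_F^{\dag}$-term is missing altogether. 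Showing the antisymmetrization of the full correction vanishes is genuine work involving the Green operator — this is exactly the computation carried out in \cite{Tang2017ttGS} and \cite{fan2011schr}. Likewise, $[D_i,D_j]=0$ is not "formal from $[\partial_i,\partial_j]=0$ plus the Hodge decomposition": the projected connection $\Pi_u\circ d$ on a subbundle of a trivial bundle generically has nonzero $(2,0)$-curvature (a direct expansion gives $[D_i,D_j]\alpha=\Pi_u(\partial\psi_j\wedge\bar{\partial}_F^{\dag}G\partial_i\alpha)-\Pi_u(\partial\psi_i\wedge\bar{\partial}_F^{\dag}G\partial_j\alpha)$, not obviously zero). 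The correct route is the one the paper's Section 2.3 makes available: the frame $w_a=S(A_a)$ satisfies $\bar D_{\bar i}w_a=0$, so $\bar D$ is an integrable holomorphic structure, $D+\bar D$ is metric-compatible and hence the Chern connection of $h$, whose curvature is pure $(1,1)$. Finally, your "cleaner packaging" is not a proof: orthogonal projection of a flat operator onto the Hodge bundle does \emph{not} preserve flatness — the failure of flatness under projection is precisely what the $tt^*$ equations quantify, and the mixed relation $[D_i,\bar D_{\bar j}]=-[C_i,\bar C_{\bar j}]$ is where it shows up. A legitimate alternative to the brute-force computation is the period-map argument implicit in the paper's Section 4 (Lemma \ref{gmc}): $D+C$ and $\bar D+\bar C$ are intertwined with the flat Gauss--Manin connection through the nondegenerate pairing with Lefschetz thimbles, which yields flatness of $\nabla^{z}$ at $z=1$, and the $U(1)$-charge rescaling extends this to all $z\in\C^*$; but that has to be argued, not asserted.
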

	
	\begin{proof} See, for example, \cite{Hao} or \cite{Tang2017ttGS}.
	\end{proof}
	
	\subsection{The isomorphism $S$}\label{S}

	
	Now take a $C^{\infty}(M)$-basis $\{A_{a}(t)\}_{a=1}^{\mu}$ of $C^{\infty}(M)\otimes\Omega^{n}_{\C^n\times M/M}/dF\wedge\Omega^{n-1}_{\C^n\times M/M}$\footnote{where $\Omega^*_{\C^n\times M/M}$ is the sheaf of relative holomorphic differential forms.}, we will construct a basis of $\H$ from it. To do this, we follow the method in \cite{li2013primitive} to show that
	$$H^{n}_{\bar{\p}_{F}}(\C^n)\cong H^{n}_{c,\bar{\p}_{F}}(\C^n),$$
	where the left side is the cohomology of smooth complex, and the right side is the cohomology of smooth complex having a compact support.
	
	Let $(\C^n)^*:=\C^n-\{0\}$, we define the operator$$V_{F}=\sum_{\nu=1}^{n}\frac{\overline{\p_{\nu}F}}{|\nabla F|^2}(dz_{\nu})^{\dagger}=\frac{(\p F)^{\dagger}}{|\nabla F|^2}:\quad \mathcal{A}^{*,*}((\C^n)^*)\rightarrow \mathcal{A}^{*-1,*}((\C^n)^*),$$
	where for a complex manifold $X$, $\mathcal{A}^{p,q}(X)$ denotes the space of smooth $p,q$ forms on $X.$

	Next, given a cut-off function $\rho$, which is compactly supported in a small disc around the origin, we define
	$$T_{\rho}: \mathcal{A}^{*}(\C^n)\rightarrow \mathcal{A}_{c}^{*}(\C^n),\qquad R_{\rho}:\mathcal{A}^{*}(\C^n)\rightarrow \mathcal{A}^{*}(\C^n)$$
	by
	$$T_{\rho}(A)=\rho A+(\bar{\p}\rho)V_{F}\frac{1}{1+[\partial, V_{F}]}(A),\qquad R_{\rho}(A)=(1-\rho)V_{F}\frac{1}{1+[\partial, V_{F}]}(A),$$
	where for a manifold $X$, $\mathcal{A}^*(X)$ and ($\mathcal{A}_c^*(X)$) denotes the space of smooth $p$ forms (with compact support) on $X.$ 
	
	\begin{lem}[\cite{li2013primitive}]
		$[\bar{\p}_{F}, R_{\rho}]=1-T_{\rho}$ on $\mathcal{A}^*(\C^n)$. Moreover, the embedding $(\mathcal{A}_{c}^*(\C^n),\bar{\p}_{F})\hookrightarrow (\mathcal{A}^*(\C^n),\bar{\p}_{F})$ is a quasi-isomorphism.
	\end{lem}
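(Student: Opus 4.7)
The lemma has an algebraic part (the homotopy identity $[\bar{\partial}_F, R_\rho] = 1 - T_\rho$) and a geometric consequence (the inclusion $\mathcal{A}_c^*(\C^n)\hookrightarrow\mathcal{A}^*(\C^n)$ is a $\bar{\partial}_F$-quasi-isomorphism). I would establish the identity first and then deduce the quasi-isomorphism from it.

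For the identity, the key algebraic input is the Clifford-type relation $\{dF\wedge, V_F\}=1$ on $(\C^n)^*$, which follows immediately from $\{dz_\nu\wedge,(dz_\mu)^{\dagger}\}=\delta_{\nu\mu}$ together with $dF=\sum_\nu\partial_\nu F\,dz_\nu$ and the definition of $V_F$. Writing $K:=V_F\tfrac{1}{1+[\partial,V_F]}$, the formal inverse is arranged precisely so that $\{\bar{\partial}_F,K\}=1$: one expands $\{\bar{\partial}_F,V_F\}$ into its bidegree-$(0,0)$ piece $\{dF\wedge,V_F\}=1$ plus a correction, then inverts the correction by a geometric series that, crucially, commutes with $\bar{\partial}_F$ because $\bar{\partial}_F^2=0$ implies $[\bar{\partial}_F,[\partial,V_F]]=0$. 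Combining this with the Leibniz-type identity $[\bar{\partial}_F,\phi]=\bar{\partial}\phi\wedge$ for a smooth function $\phi$ gives
\begin{equation*}
[\bar{\partial}_F,R_\rho]=[\bar{\partial}_F,(1-\rho)]\,K+(1-\rho)\,\{\bar{\partial}_F,K\}=-\bar{\partial}\rho\wedge K+(1-\rho)=1-T_\rho.
\end{equation*}

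For the quasi-isomorphism, applying $\bar{\partial}_F$ once more to the homotopy identity yields $[\bar{\partial}_F,T_\rho]=0$, so $T_\rho$ is a chain map; moreover $T_\rho$ sends $\mathcal{A}^*(\C^n)$ into $\mathcal{A}_c^*(\C^n)$ because both $\rho$ and $\bar{\partial}\rho$ are compactly supported. The homotopy identity then exhibits $R_\rho$ as a chain homotopy between $\iota\circ T_\rho$ and $\mathrm{id}$ on $\mathcal{A}^*(\C^n)$, giving surjectivity of $\iota$ on $\bar{\partial}_F$-cohomology. For the other direction, I would check that $R_\rho$ preserves compact support: $[\partial,V_F]$ is a local differential operator so it does not enlarge supports, and the prefactor $(1-\rho)$ annihilates the only potentially new support locus, namely the origin where $V_F$ is singular. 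Hence $R_\rho$ restricts to a chain homotopy $T_\rho\circ\iota\sim\mathrm{id}$ on $\mathcal{A}_c^*(\C^n)$, which together with the previous step makes $\iota$ a chain homotopy equivalence.

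The main obstacle I anticipate is making rigorous sense of the series $\tfrac{1}{1+[\partial,V_F]}=\sum_{k\geq 0}(-1)^k[\partial,V_F]^k$ in the smooth setting; in the algebraic setup of \cite{li2013primitive} the sum terminates for graded/polynomial reasons, but in $\mathcal{A}^*(\C^n)$ one needs a convergence or support argument (and a careful verification that $\bar{\partial}_F$ indeed commutes with the inverse, as needed for $\{\bar{\partial}_F,K\}=1$). Once this analytic point is settled, the remainder of the proof reduces to the algebraic manipulation and the support bookkeeping outlined above.
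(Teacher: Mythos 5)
The paper itself gives no proof of this lemma (it is quoted verbatim from \cite{li2013primitive}), so your attempt has to stand on its own; its overall architecture is indeed the intended one, and the second half is fine: once the homotopy identity is known, $[\bar{\partial}_F,T_\rho]=0$, $T_\rho$ lands in compactly supported forms, $R_\rho$ preserves compact supports, and the same homotopy $R_\rho$ gives both $\iota\circ T_\rho\simeq\mathrm{id}$ and $T_\rho\circ\iota\simeq\mathrm{id}$.

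The genuine gap is in your justification of the central identity $\{\bar{\partial}_F,K\}=1$. You describe the correction to be inverted as ``the part of $\{\bar{\partial}_F,V_F\}$ beyond its bidegree-$(0,0)$ piece'' and simultaneously identify it with $[\partial,V_F]$; these are different operators. Writing $g_\nu=\overline{\partial_\nu F}/|\nabla F|^2$, one has $\{\bar{\partial}_F,V_F\}=1+[\bar{\partial},V_F]$ with $[\bar{\partial},V_F]=\sum_\nu(\bar{\partial}g_\nu)\wedge(dz_\nu)^{\dagger}$, a zeroth-order operator of bidegree $(-1,1)$, whereas $[\partial,V_F]=\sum_\nu(\partial g_\nu)\wedge(dz_\nu)^{\dagger}+\sum_\nu g_\nu\partial_{z_\nu}$ is a genuine first-order operator of bidegree $(0,0)$. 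Your claimed implication ``$\bar{\partial}_F^2=0$ implies $[\bar{\partial}_F,[\partial,V_F]]=0$'' does not follow and is false in general: the Jacobi identity only gives $[\bar{\partial}_F,\{\bar{\partial}_F,V_F\}]=0$, i.e.\ commutation with $1+[\bar{\partial},V_F]$, not with $[\partial,V_F]$ (already for $n=1$, $F=z^2$ one has $[\bar{\partial},V_F]=0$ away from the origin while $[\partial,V_F]=-\tfrac{1}{2z^2}dz\wedge(dz)^{\dagger}+\tfrac{1}{2z}\partial_z$). Moreover, with the literal $[\partial,V_F]$ the ``geometric series'' for the inverse is an infinite series of differential operators of unbounded order, which is exactly the unresolved obstacle you flag at the end; it cannot be fixed in that form. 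The repair is to take the correction operator to be $[\bar{\partial},V_F]=\{\bar{\partial}_F,V_F\}-1$ (the occurrence of $[\partial,V_F]$ in the paper's transcription of \cite{li2013primitive} appears to be a typo): this operator is nilpotent, since it lowers holomorphic degree, so $\bigl(1+[\bar{\partial},V_F]\bigr)^{-1}$ is a finite sum; it commutes with $\bar{\partial}_F$ by the Jacobi argument above; and then $\{\bar{\partial}_F,V_F(1+[\bar{\partial},V_F])^{-1}\}=\{\bar{\partial}_F,V_F\}(1+[\bar{\partial},V_F])^{-1}=1$ on $(\C^n)^*$, which suffices because the prefactors $1-\rho$ and $\bar{\partial}\rho$ vanish near the origin (this last point, that the identity also holds across the origin because both sides vanish where $\rho\equiv1$, is worth a sentence in your write-up). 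With that substitution your outline becomes a complete proof; as written, the key step fails.
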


	Let $A$ be a representative of $C^{\infty}(M)\otimes\Omega^n_{\C^n\times M/M}/dF\wedge\Omega^{n-1}_{\C^n\times M/M}$, then $T_{\rho}(A)$ is $L^2$-integrable, thus we can define $w$ to be its harmonic projection, that is,
	$$w=\Pi_u(T_{\rho}(A)).$$
	
	For simplicity, we write it as $w=A+\bar{\p}_{F}\nu$, where, by construction, $\nu$ is a $n-1$ form with polynomial growth. We define the following map
	\begin{align*}
	S:C^{\infty}(M)\otimes\Omega^n_{\C^n\times M/M}/dF\wedge\Omega^{n-1}_{\C^n\times M/M}&\longrightarrow \mathcal{H},\\
	A&\longmapsto w.
	\end{align*}
	
	\begin{prop}\label{spectral} The map $S$ satisfies the following properties:
		\begin{enumerate}
			\item $S$ is a $C^{\infty}(M)$-linear map;
			\item $S$ is well defined, that is, $S$ is independent of the choice of the representative in $C^{\infty}(M)\otimes\Omega^n_{X\times M/M}/dF\wedge\Omega^{n-1}_{X\times M/M}$ and also independent of the cut-off function $\rho$.
		\end{enumerate}
	\end{prop}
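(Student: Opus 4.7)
The plan is to deduce (1) directly from the fibrewise nature of the construction, to deduce independence of $\rho$ in (2) from the Lemma together with a compact-support argument, and to deduce independence of the representative by an integration-by-parts argument in which the decisive input is the exponential decay of harmonic forms provided by the Agmon estimate recalled in Section~\ref{witagm}.

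For~(1), observe that the cut-off $\rho$, the operator $V_F$ and the resolvent $(1+[\partial,V_F])^{-1}$ all act on the $z$-fibres and depend on $u\in M$ only through the potential $F(z,u)$; in particular $T_\rho(g\cdot A) = g\cdot T_\rho(A)$ for any $g\in C^\infty(M)$. The harmonic projector $\Pi_u$ is fibrewise projection onto $\ker\Delta_{F(u)}$, hence also $C^\infty(M)$-linear, and therefore $S=\Pi_u\circ T_\rho$ is $C^\infty(M)$-linear. Since $A$ is a top-degree form one has $\bar{\partial}_F A=0$, and the Lemma collapses to $T_\rho(A) = A - \bar{\partial}_F R_\rho(A)$. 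For two cut-offs $\rho_1,\rho_2$ (each identically $1$ near $0$ so that the singularity of $V_F$ is tamed), this gives
\[T_{\rho_1}(A)-T_{\rho_2}(A) \;=\; \bar{\partial}_F\bigl((R_{\rho_2}-R_{\rho_1})(A)\bigr) \;=\; \bar{\partial}_F\Bigl[(\rho_1-\rho_2)\,V_F(1+[\partial,V_F])^{-1}(A)\Bigr].\]
The bracketed form is smooth and compactly supported, so the difference lies in $\bar{\partial}_F(L^2\Lambda^{n-1})$. By the Hodge decomposition of Theorem~\ref{Hodgethm} this subspace is orthogonal to $H_F^n$, so the harmonic projections coincide and $S(A)$ is independent of $\rho$.

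For independence of the representative, let $A=dF\wedge\eta$ with $\eta$ a relative holomorphic $(n-1)$-form, so that $A=\bar{\partial}_F\eta$. Using $\bar{\partial}_F^2=0$,
\[T_\rho(A) \;=\; \bar{\partial}_F\eta - \bar{\partial}_F R_\rho(\bar{\partial}_F\eta) \;=\; \bar{\partial}_F\bigl(\eta - R_\rho(\bar{\partial}_F\eta)\bigr).\]
The subtle point is that the primitive $\eta-R_\rho(\bar{\partial}_F\eta)$ has only polynomial growth and is not $L^2$ in general, so the preceding Hodge-orthogonality argument does not apply directly. Instead I would test $T_\rho(A)$ against an arbitrary harmonic form $\phi\in H_F^n$:
\[h(T_\rho(A),\phi) \;=\; \int_{\C^n}\bar{\partial}_F\bigl(\eta - R_\rho(\bar{\partial}_F\eta)\bigr)\wedge\ast\bar{\phi}.\]
By the Agmon estimate for the twisted Laplacian $\Delta_F$ (Section~\ref{witagm} and \cite{DY2020cohomology}), $\phi$ decays exponentially at infinity, whereas $\eta-R_\rho(\bar{\partial}_F\eta)$ grows at most polynomially. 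The integrand is therefore exponentially decaying, integration by parts on $\C^n$ is legitimate (the boundary contributions on spheres of radius $R$ vanish as $R\to\infty$), and
\[h(T_\rho(A),\phi) \;=\; h\bigl(\eta-R_\rho(\bar{\partial}_F\eta),\,\bar{\partial}_F^\dagger\phi\bigr) \;=\; 0,\]
since $\bar{\partial}_F^\dagger\phi=0$. As $\phi$ ranges over $H_F^n$ this forces $\Pi_u T_\rho(A)=0$, which is the desired independence.

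The main obstacle is precisely the integration by parts in the last display: without exponential decay of $\phi$, the primitive $\eta-R_\rho(\bar{\partial}_F\eta)$ is only polynomially bounded, and the adjointness identity $\langle\bar{\partial}_F\xi,\phi\rangle=\langle\xi,\bar{\partial}_F^\dagger\phi\rangle$ is not available on general $L^2$ grounds. The Agmon/Witten exponential decay for $\Delta_F$ with $F$ non-degenerate quasi-homogeneous is exactly the analytic input that suppresses the boundary contributions at infinity and renders the computation rigorous; it is also the same input that will be used repeatedly in the sequel when comparing LG period integrals against harmonic representatives.
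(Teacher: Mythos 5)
Your proof is correct and follows essentially the same route as the paper: in both arguments the decisive step is that a $\bar{\partial}_F$-exact form with polynomially growing primitive pairs to zero against $\Delta_F$-harmonic forms, justified by the Agmon decay of Lemma \ref{epde12} together with integration by parts (you test $T_\rho(dF\wedge\eta)$ against an arbitrary harmonic $\phi$, while the paper pairs the harmonic difference $w-w'$ with itself; these are the same computation). Your treatment of cut-off independence is a mild simplification, since writing $T_{\rho_1}(A)-T_{\rho_2}(A)=\bar{\partial}_F\bigl[(\rho_1-\rho_2)V_F(1+[\partial,V_F])^{-1}A\bigr]$ with a compactly supported primitive lets plain $L^2$ Hodge orthogonality do the work there, whereas the paper simply reruns the Agmon argument.
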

	
	\begin{proof} \ 
		\begin{enumerate}[(1)]
			\item The $C^{\infty}(M)$-linearity follows from the definitions of $T_{\rho}$ and $\Pi_u$.
			
			\item 
			\begin{itemize}
				\item The independence of the choice of the representative:
				
				Let $A'=A+dF\wedge B$, for some $B\in C^{\infty}(M)\otimes\Omega_{\C^n\times M/M}^{n-1}$ with polynomial growth. By definition, there exist $n-1$ forms $\nu$ and $\nu'$ with polynomial growth such that
				$$w:=S(A)=A+\bar{\p}_F\nu,\quad w':=S(A')=A'+\bar{\p}_F\nu'.$$
				Moreover, since $B$ is holomorphic in the $\C^n$ direction,
				$$w':=S(A')=A+dF\wedge B+\bar{\p}_F\nu'=A+\bar{\p}_F(B+\nu').$$
				Then one has
				\[w-w'=\bar{\p}_{F(\cdot,u)}(\nu+B-\nu'),\]
				where $(\nu+B-\nu')$ has polynomial growth. Then by Agmon estimate (Lemma \ref{epde12}) and integration by parts, one has
				\[h(w-w',w-w')=h(w-w',\bar{\p}_{F(\cdot,u)}(\nu-\nu'))=h(\bar{\p}_{F(\cdot,u)}^{\dagger}(w-w'),\nu-\nu')=0,\]
				which implies that $w=w'$.\\
				\item The independence of the choice of the cut-off function:
				
				Let $S$ and $S'$ be the map with respect to the cut-off functions $\rho$ and $\rho'$ respectively. Then
				$$A=S(A)-\bar{\p}_{F(\cdot,u)}\nu=S'(A)-\bar{\p}_{F(\cdot,u)}\nu'$$ for some differential forms $\nu$ and $\nu'$ with polynomial growth.
				Using the same argument as above, one can easily verify that $S(A)=S'(A)$.
				
			\end{itemize}
		\end{enumerate}
	\end{proof}
	
	\begin{rem} In the physical literature, such as in \cite{CECOTTI1991N}, $S$ is called the spectral flow.
	\end{rem}
	
	Let $u=(u_1,...,u_s)$ be local coordinates of  $M$, and $\partial_i:=\partial_{u_i}$.
	If $A$ is holomorphic in $u\in M$, then $w:=S(A)=A+\bar{\p}_F\nu$ is holomorphic. Since 
	$$\bap_{\bar{i}}w=\bap_{\bar{i}}A+\bap_{\bar{i}}\bar{\p}_{F}\nu=\bar{\p}_F\bap_{\bar{i}}\nu\in\Im(\bar{\p}_{F})\quad (\mbox{Note that $[\bap_{\bar{i}}, \bar{\p}_{F}]=0$}).$$

	In particular, when $|u|$ is small, we can simply choose
	$$A_a=\phi_a dz_1\wedge\cdots\wedge dz_n, \quad \{\phi_a\}_{a=0,\ldots,\mu-1}\text{ is a monomial representation of basis of}\Jac(f).$$
	Moreover, we let $\phi_0=1$. Since $|u|$ is small, $\{\phi_a\}_{a=0,\ldots,\mu-1}$ is also a monomial representation of basis of $\Jac(F)$. Then $A_a$'s are $\bar{\p}_F$-closed and one can find a unique harmonic form $w_a=S(A_a)$ such that
	\begin{equation}\label{dec}
	\phi_a dz_1\wedge\ldots\wedge dz_n=w_a-\bar{\p}_F\nu_a
	\end{equation}
	for some $n-1$ form $\nu_a$, $a=0,\ldots,\mu-1$. Here $\nu_a$ has at most polynomial growth.
	
	Thus, we get a holomorphic basis $\{w_a\}$ for $\H$.

	Now we would like to explore the relationship between the operator $C$ and the product in the Jacobi ring of $F$:
	
	On the one hand, by the definition, $\partial_{i}Fw_{a}=C_{ia}^{b}w_{b}+\bar{\partial}_{F}\nu$ for some differential forms $\nu$ with polynomial growth and matrix $C_{ia}^b$. On the other hand, there exists $\tilde{C_{ia}^b}\in \mathcal{O}_M$, such that \be\label{productjac}\partial_{i}F\phi_{a}=\widetilde{C}_{ia}^{b}\phi_{b}\ee
	in $\Jac(F)$.
	
	\begin{align*}
	\partial_{i}Fw_{a}&=\partial_{i}F(A_{a}+\bar{\partial}_{F}\nu_a)\\
	&=\partial_{i}FA_{a}+\bar{\partial}_{F}\partial_iF\nu_a\\
	&\overset{(a)}=(\widetilde{C}_{ia}^{b}A_{b}+dF\wedge\delta_{a})+\bar{\partial}_{F}\nu_a\\
	&\overset{(b)}=\widetilde{C}_{ia}^{b}\left(w_{b}-\bar{\partial}_{F}\nu_b+\bar{\partial}_{F}\delta_a\right)+\bar{\partial}_{F}\nu_a=\widetilde{C}_{ia}^{b}w_{b}+\bar{\partial}_{F}\widetilde{\nu},
	\end{align*}
	Here for the equality (a), it follows from (\ref{productjac}) that $\partial_{i}FA_{a}=\widetilde{C}_{ia}^{b}A_{b}+dF\wedge\delta_{a}$ for some $\delta_{a}\in\Omega_{X\times M/M}^{n-1}$. For the equality (b), notice that $dF\wedge\delta_{a}=\bar{\partial}_{F}\delta_{a}$.
	By the similar arguments as in Proposition \ref{spectral}, we get $C_{ia}^{b}=\widetilde{C}_{ia}^{b}$, i.e., the operator $C$ coincides with the product in the Jacobi ring $\Jac(F)$.


	\subsection{$U(1)$ action}\label{2.4}
	To further study the structure of the Hodge bundle, let us introduce the $U(1)$ action associated to $f$ on differential forms, which is closely related to the pole order filtration in \cref{two}.
	
	First, assume that the weight $q_i=\frac{a_i}{b_i}$ with $(a_i,b_i)=1$, and $d=\mathrm{lcm}\left(b_{1}, \ldots, b_{n}\right)$, i.e. $d$ is the least common multiple of $b_1,...,b_n$. We put $Q_{i}=q_{i} d$.
	
	Let us consider the $U(1)$ action on $\mathbb{C}^n$ which is related to the weight of the variables:
	$$e^{i\theta}\cdot(z_1,\ldots,z_n)=\left(e^{iQ_1\theta}z_1,\ldots,e^{iQ_n\theta}z_n\right).$$
	In particular, if $f$ is homogeneous,
	$$e^{i\theta }\cdot(z_1,\ldots,z_n)=\left(e^{i\theta }z_1,\ldots,e^{i\theta }z_n\right).$$

	This action induces a unitary action $\T$ of $U(1)$ on differential forms, such that for any $(p,q)$-form $\alpha=\alpha(z_1,...,z_n,\bar{z}_1,...\bar{z}_n)dz_{i_1}...dz_{i_p}d\bar{z}_{j_1}...d\bar{z}_{j_q}$, \begin{align*}&\ \ \ \ \T(\theta)\alpha\\&=e^{i\theta\left(-\sum_{k=1}^pQ_{i_k}+\sum_{k=1}^q{Q_{j_k}}\right)}\alpha(e^{-i\theta Q_1}z_1,...,e^{-i\theta Q_n}z_n,e^{i\theta Q_1}\bar{z}_1,...,e^{i\theta Q_n}\bar{z}_n)dz_{i_1}...dz_{i_p}d\bar{z}_{j_1}...d\bar{z}_{j_q}.\end{align*}
	Note that $f$ is a $(0,0)$-form, thus,
	$$\T(\theta)f=e^{-id\theta}f.$$
	Similarly, if $f$ is homogeneous,
	$$\T(\theta)\alpha=e^{i(q-p)\theta}\alpha(e^{-i\theta}z_1,...,e^{-i\theta}z_n,e^{i\theta}\bar{z}_1,...,e^{i\theta}\bar{z}_n)dz_{i_1}...dz_{i_p}d\bar{z}_{j_1}...d\bar{z}_{j_q}.$$
	
	Now consider the following shifted $U(1)$ action $\P$ on $L^2\Lambda^{p,q}(\C^n)$, $$\P(\theta)(\alpha)=e^{idp\theta}\T(\theta)\alpha,\quad \alpha\in L^2\Lambda^{p,q}(\C^n).$$
	Similarly, we have another shifted $U(1)$ action $\Q$ on $L^2\Lambda^{p,q}(\C^n)$,
	$$\Q(\theta):=e^{idq\theta}\T(-\theta).$$
	
	Immediately, we have,
	
	\begin{lem}\label{Atheta} The $U(1)$ actions $\P$ and $\Q$ satisfy the following properties:
		\begin{enumerate}[(1)]
			\item $\P$ and $\Q$ are unitary, i.e.,
			$$h(\Q(\theta)\alpha,\Q(\theta)\beta)=h(\P(\theta)\alpha,\P(\theta)\beta)=h(\alpha,\beta), \quad \alpha,\beta\in L^2\Lambda^{p,q}(\C^n).$$
			\item \label{sf6}$[\P(\theta),\bar{\partial}_F]=0,~[\P(\theta),\bar{\partial}_F^{\dagger}]=0,~[\P(\theta),\Delta_F]=0$.
			\item $\P(\theta){\partial}_F\P(\theta)^{-1}=e^{id\theta}{\partial}_F,$ $\P(\theta){\partial}_F^{\dagger}\P(\theta)^{-1}=e^{-id\theta}{\partial}_F^{\dagger}.$
			\item\label{sf7} $[\Q(\theta),{\partial}_F]=0,~[\Q(\theta),{\partial}_F^{\dagger}]=0,~[\Q(\theta),\Delta_F]=0$.
			\item $\Q(\theta)\bar{\partial}_F\Q(\theta)^{-1}=e^{id\theta}\bar{\partial}_F,$ $\P(\theta)\bar{\partial}_F^{\dag}\P(\theta)^{-1}=e^{-id\theta}\bar{\partial}_F^{\dagger}.$
		\end{enumerate}
		In particular, (\ref{sf6}) and (\ref{sf7}) imply that $\P:\H\to\H$, $\Q:\H\to\H.$
		
	\end{lem}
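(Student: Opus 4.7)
The plan is to recognize the unnormalized action $\T(\theta)$ as the pullback $(\phi_{-\theta})^*$ along the diagonal unitary map $\phi_\theta(z) = (e^{iQ_1\theta}z_1,\ldots,e^{iQ_n\theta}z_n)$ on $\C^n$. Since each $\phi_\theta$ is a biholomorphic isometry of the standard Hermitian $\C^n$, the pullback $\T(\theta)$ is automatically unitary, commutes with the Hodge star $*$ and with $\p,\bap$; and because $f$ is quasi-homogeneous of weight $1$ with $Q_i = q_i d$, we read off the scaling relations $\T(\theta)f = e^{-id\theta}f$ and $\T(\theta)\bar f = e^{id\theta}\bar f$, hence $\T(\theta)(df) = e^{-id\theta}df$ and $\T(\theta)(d\bar f) = e^{id\theta}d\bar f$. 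Multiplying by the scalar phase $e^{idp\theta}$ (resp.\ $e^{idq\theta}$) preserves unitarity on each $L^2\Lambda^{p,q}$ summand, so $(1)$ is immediate.

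Next, I would verify $(2)$--$(5)$ by bookkeeping of bidegree. Writing $\bap_F = \bap + df\wedge$ and applying $\P(\theta)$ to $\alpha\in\Lambda^{p,q}$, I track where each summand lands: $\bap\alpha\in\Lambda^{p,q+1}$ receives the prefactor $e^{idp\theta}$, whereas $df\wedge\alpha\in\Lambda^{p+1,q}$ gets $e^{id(p+1)\theta}$ but also $e^{-id\theta}$ from $\T(\theta)(df) = e^{-id\theta}df$, so the two phases cancel. This gives $[\P(\theta),\bap_F]=0$. The analogous computation for $\p_F = \p + d\bar f\wedge$ produces $\P(\theta)\p_F\P(\theta)^{-1} = e^{id\theta}\p_F$, because now both the $\p$-part and the $d\bar f\wedge$-part contribute a net factor of $e^{id\theta}$. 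The computations for $\Q(\theta) = e^{idq\theta}\T(-\theta)$ are entirely parallel, with the roles of $\p_F$ and $\bap_F$ swapped and with every $\theta$ inside $\T$ flipped in sign.

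Finally, the adjoint statements follow without new calculation. Since $\P(\theta)$ is unitary, taking Hilbert adjoints of $\P\p_F\P^{-1}=e^{id\theta}\p_F$ and of $[\P,\bap_F]=0$ yields the asserted conjugation rule for $\p_F^{\dag}$ and the commutation with $\bap_F^{\dag}$; the same trick handles $\Q$. Commutation with $\Delta_F$ is then formal via $\Delta_F = \{\bap_F,\bap_F^{\dag}\} = \{\p_F,\p_F^{\dag}\}$. No serious obstacle is anticipated; the only point requiring care is that $\P$ and $\Q$ are \emph{graded} operators whose definition depends on the bidegree of the form being acted on, so each intertwining identity must be verified summand by summand, and the phase matching between the differential-operator part ($\p$ or $\bap$) and the wedging part ($df\wedge$ or $d\bar f\wedge$) is exactly what makes the prefactors $e^{idp\theta}$ and $e^{idq\theta}$ do their job.
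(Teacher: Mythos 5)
Your proof is correct, and it is precisely the routine verification that the paper treats as immediate (the lemma is stated with no proof), so there is nothing to compare against beyond the definitions themselves. One cosmetic point: the operators in the lemma involve $F=f+\sum_i u_i\psi_i$ rather than $f$, but since each marginal deformation $\psi_i$ has the same weight as $f$, one still has $\T(\theta)F=e^{-id\theta}F$ (with $u$ held fixed), so your phase bookkeeping for $df\wedge$ and $d\bar f\wedge$ applies verbatim to $dF\wedge$ and $d\bar F\wedge$.
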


	\begin{rem} Let $X$ be a compact Calabi-Yau manifold, one can define the $U(1)$ actions $\P^{CY}$ and $\Q^{CY}$ as follows, for $\alpha\in \A^{p,q}(X)$,
		$$\P^{CY}(\theta)\alpha:=e^{ip\theta}\alpha,\quad \Q^{CY}(\theta)\alpha:=e^{iq\theta}\alpha.$$
		It is easy to see that $\P^{CY}$ and $\Q^{CY}$ are unitary; $\P^{CY}$ commutes with $\bar{\partial}$; $$\P^{CY}(\theta) \p\P^{CY}(\theta)^{-1}=e^{i\theta}\p.$$
		Thus one can regard $\P$ and $\Q$ as the bi-grading in the Landau-Ginzburg models. More explicitly, the infinitesimal action of $\P$ and $\Q$ plays the role of holomorphic and anti-holomorphic degrees in Landau-Ginzburg models.
	\end{rem}

	\begin{defn} For $A=z_1^{\beta_1}\cdots z_n^{\beta_n}dz_1\wedge\cdots\wedge dz_n$, we define
		$$l(A):=\sum_{i=1}^n(\beta_i+1)Q_i.$$
		We also call $l(A)$ the $U(1)$ charge of $A$.
	\end{defn}
	For the homogeneous basis $\{A_a\}_{a=0}^{\mu-1}$, set $l_a=l(A_a).$

	\begin{lem}\label{uone} The $U(1)$ actions $\P$ and $\Q$ commute with the map $S$, that is
		$$S(\P(\theta)A)=\P(\theta)(S(A)),\quad A\in C^{\infty}(M)\otimes\Omega^n_{\C^n\times M/M}/dF\wedge\Omega^{n-1}_{\C^n\times M/M}$$
		and
		$$S(\Q(\theta)A)=\Q(\theta)(S(A)),\quad A\in C^{\infty}(M)\otimes\Omega^n_{\C^n\times M/M}/dF\wedge\Omega^{n-1}_{\C^n\times M/M}.$$
		More explicitly, let $w_a=S(A_a)$, then
		\begin{align*}
		\P(\theta)(w_a)&=e^{i(nd-l_a)\theta}w_a,\\
		\Q(\theta)(w_a)&=e^{il_a\theta}w_a.
		\end{align*}
		Thus, we could define the $U(1)$ charge $l(w_a)$ of $w_a$ to be $l_a$. This lemma could be  rephrased as follows: if a harmonic form $w$ has $U(1)$ charges $l$, that is, $w=S(A)$ for some homogeneous $n$ form $A$ with $l(A)=l$, then 
		\begin{align*}
		\P(\theta)(w)&=e^{i(nd-l)\theta}w,\\
		\Q(\theta)(w)&=e^{il\theta}w.
		\end{align*}
		
	\end{lem}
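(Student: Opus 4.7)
The plan is to exploit the characterization of $w=S(A)$ as the unique harmonic form of the shape $w=A+\bar{\p}_F\nu$ with $\nu$ of polynomial growth, established in Proposition \ref{spectral}, and then to push this characterization through the two $U(1)$-actions using the commutation relations in Lemma \ref{Atheta}. First I would compute the $U(1)$-weights of the basis element $A_a = z_1^{\beta_1}\cdots z_n^{\beta_n}\,dz_1\wedge\cdots\wedge dz_n$ directly from the definitions: since $\T(\theta)A_a = e^{-i\theta l_a}A_a$, we get $\P(\theta)A_a = e^{i(nd-l_a)\theta}A_a$ and $\Q(\theta)A_a = e^{il_a\theta}A_a$, so both actions simply rescale each basis element by a phase.

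Next I would apply $\P(\theta)$ to the defining relation $w_a = A_a + \bar{\p}_F\nu_a$. Because $[\P(\theta),\bar{\p}_F]=0$ and $[\P(\theta),\Delta_F]=0$, the form $e^{-i(nd-l_a)\theta}\P(\theta)w_a$ is harmonic and equals $A_a + \bar{\p}_F\bigl(e^{-i(nd-l_a)\theta}\P(\theta)\nu_a\bigr)$; the correction term is still of polynomial growth since the underlying action $z_j\mapsto e^{-i\theta Q_j}z_j$ is a unitary rotation, so the uniqueness in Proposition \ref{spectral} forces $e^{-i(nd-l_a)\theta}\P(\theta)w_a = w_a$. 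The $\Q$-case is the subtler one, since $\Q(\theta)\bar{\p}_F = e^{id\theta}\bar{\p}_F\Q(\theta)$ is a twisted commutation rather than an honest one; writing $\Q(\theta)\bar{\p}_F\nu_a = \bar{\p}_F\bigl(e^{id\theta}\Q(\theta)\nu_a\bigr)$ and rescaling by $e^{-il_a\theta}$ yields
\[
e^{-il_a\theta}\Q(\theta)w_a \;=\; A_a + \bar{\p}_F\bigl(e^{i(d-l_a)\theta}\Q(\theta)\nu_a\bigr),
\]
and the same uniqueness argument — the difference with $w_a$ is a harmonic form sitting in $\bar{\p}_F$(polynomial-growth), hence vanishes — closes this case.

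The general statement $S\circ\P(\theta) = \P(\theta)\circ S$ (and its analog for $\Q$) now follows by $C^\infty(M)$-linearity of $S$: writing an arbitrary section as $A=\sum_a c_a(u)A_a$ with $c_a\in C^\infty(M)$, both $S(\P(\theta)A)$ and $\P(\theta)S(A)$ reduce to $\sum_a c_a e^{i(nd-l_a)\theta}w_a$, because $\P$ and $\Q$ act trivially on the $u$-coordinates; the analogous identity for $\Q$ gives $\sum_a c_a e^{il_a\theta}w_a$. The only step deserving care is that $\P(\theta)\nu_a$ and $\Q(\theta)\nu_a$ remain of polynomial growth, which is immediate from the isometric nature of $z_j\mapsto e^{\pm i\theta Q_j}z_j$ on $(\C^n,|\cdot|)$. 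The main obstacle of the argument is therefore the book-keeping of the twisted commutation in the $\Q$-case; once the extra phase $e^{id\theta}$ is absorbed inside $\bar{\p}_F$, uniqueness of the harmonic representative (the Agmon-based rigidity underlying Proposition \ref{spectral}) does all the work.
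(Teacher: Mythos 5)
Your proposal is correct and follows essentially the same route as the paper: evaluate the phase of $\P(\theta)$, $\Q(\theta)$ on $A_a$, apply the action to the decomposition $w_a=A_a+\bar{\p}_F\nu_a$ using the (possibly twisted) commutation relations of Lemma \ref{Atheta}, and conclude via the Agmon-estimate rigidity (the uniqueness argument of Proposition \ref{spectral}) that the two harmonic forms differing by $\bar{\p}_F$ of a polynomial-growth form coincide. The only cosmetic difference is that you write out the twisted $\Q$-case and the $C^{\infty}(M)$-linear reduction explicitly, which the paper leaves implicit.
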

	
	\begin{proof} It suffices to check the result for $A_a=\phi_adz_1\wedge\cdots\wedge dz_n$. On the one hand,
		\begin{equation}\label{sf4}\P(\theta)A_a=e^{i(nd-l_a)\theta}A_a.\end{equation}
		On the other hand, since
		\begin{equation} \label{sf1}
		A_a=S(A_a)-\bar{\p}_{F}\nu_a,
		\end{equation}
		by Lemma \ref{Atheta}, we have
		\begin{equation}\label{sf2}
		\P(\theta)A_a=\P(\theta)S(A_a)-\bar{\p}_F(\P(\theta)\nu_a);
		\end{equation}
		
		By \eqref{sf4}, the left hand side of \eqref{sf2} is
		\begin{equation}\label{sf3}
		\P(\theta)A_a=e^{i(nd-l_a)\theta}A_a=e^{i(nd-l_a)\theta}S(A_a)-\bar{\p}_F\left(e^{i(nd-l_a)\theta}\nu_a\right).
		\end{equation}

		Then (\ref{sf2}) and (\ref{sf3}) imply that
		$$e^{i(nd-l_a)\theta}S(A_a)-\P(\theta)S(A_a)=\bar{\p}_F\mu_a$$
		for some at most polynomial growth differential form $\mu_a$.
		
		By Agmon estimate (Lemma \ref{epde12}) and the fact that $[\P(\theta),\Delta_F]=0$ , we get
		$$\P(\theta)S(A_a)=e^{i(nd-l_a)\theta}S(A_a)=S(\P(\theta)A_a).$$
	\end{proof}
	
	\def\lan{\langle}
	\def\ran{\rangle}
	\def\dz{dz_1...dz_n}

	In CY's case, if $u\in \A^{p,n-p}(Y),v\in \A^{p',n-p'}(Y)$, then $\int_Yu\wedge*\bar{v}\neq0$ if and only if $p=p'$. Similarly,
	by unitary property of $\P$ and Lemma \ref{uone}, it is easy to obtain
	\begin{cor}\label{wawb} If $h(w_a, w_b)\neq0$, then $l_a=l_b$.
	\end{cor}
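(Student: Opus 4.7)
The plan is to exploit the unitarity of the shifted $U(1)$ action $\P(\theta)$ with respect to $h$, combined with the fact that each $w_a$ is a $\P(\theta)$-eigenvector with a prescribed eigenvalue. Concretely, Lemma \ref{uone} tells us $\P(\theta) w_a = e^{i(nd - l_a)\theta} w_a$ and $\P(\theta) w_b = e^{i(nd - l_b)\theta} w_b$, while Lemma \ref{Atheta}(1) gives $h(\P(\theta) w_a, \P(\theta) w_b) = h(w_a, w_b)$ for every $\theta \in \R$.

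Expanding the left-hand side using the Hermitian (conjugate-linear in the second slot) property of $h$ yields
\[
h(\P(\theta) w_a, \P(\theta) w_b) = e^{i(nd - l_a)\theta}\, \overline{e^{i(nd - l_b)\theta}}\, h(w_a, w_b) = e^{i(l_b - l_a)\theta}\, h(w_a, w_b).
\]
Comparing the two expressions produces
\[
\bigl(e^{i(l_b - l_a)\theta} - 1\bigr)\, h(w_a, w_b) = 0 \qquad \text{for all } \theta \in \R.
\]
Under the hypothesis $h(w_a, w_b) \neq 0$, this forces $e^{i(l_b - l_a)\theta} = 1$ for every $\theta$, and hence $l_a = l_b$ (the charges $l_a$ are integers, since each $Q_i \in \N$).

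The argument is essentially a one-step eigenvalue computation; the only thing to be careful about is the complex conjugation in the second argument of $h$, which converts the $\P(\theta)$ weight of $w_b$ into its negative and produces the key exponent $l_b - l_a$ rather than a sum. No estimates or analytic input are required beyond what is already encoded in Lemmas \ref{Atheta} and \ref{uone}. This corollary is the Landau-Ginzburg counterpart of the familiar Calabi-Yau fact that $\int_Y u \wedge * \bar v$ vanishes unless $u$ and $v$ share the same bidegree, with the $U(1)$ charge $l_a$ playing the role of the holomorphic degree, exactly as anticipated in the remark following Lemma \ref{Atheta}.
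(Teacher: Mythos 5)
Your proof is correct and is essentially identical to the paper's argument: both apply the unitarity of $\P(\theta)$ from Lemma \ref{Atheta} to the eigenvector relations of Lemma \ref{uone}, obtaining $h(w_a,w_b)=e^{i(l_b-l_a)\theta}h(w_a,w_b)$ and concluding $l_a=l_b$ when the pairing is nonzero. Your explicit handling of the conjugate-linearity in the second slot is exactly the step the paper performs implicitly.
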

	\begin{proof}
		By Lemma \ref{uone},
		\[h(w_a,w_b)=h(\P(\theta) w_a,\P(\theta) w_b)=h(e^{i(nd-l_a)\theta}w_a,e^{i(nd-l_b)\theta}w_b)=e^{i(l_b-l_a)\theta}h(w_a,w_b).\]
		Hence, $h(w_a,w_b)\neq 0$ implies $l_a=l_b.$
	\end{proof}

	Now we investigate the relation between $U(1)$ charges and the real structures in the Landau-Ginzburg B-model. Since $w_a$ is a basis of $\mathcal{H}^n$, there exists a matrix-valued function $(M_{\bar{a}b}(u,\bar{u}))$ such that
	$$\bar{w}_a=\sum_bM_{\bar{a}b}(u,\bar{u})w_b,$$
	then we can define a complex anti-linear map
	$$\kappa:C^{\infty}(M)\otimes\Omega^n_{\C^n\times M/M}/dF\wedge\Omega^{n-1}_{\C^n\times M/M}\longrightarrow C^{\infty}(M)\otimes\Omega^n_{\C^n\times M/M}/dF\wedge\Omega^{n-1}_{\C^n\times M/M}$$ via the basis
	\begin{equation}\label{realform}
	\kappa(A_a):=\sum_bM_{\bar{a}b}(u,\bar{u}) A_b.
	\end{equation}
	Since $S$ is $C^{\infty}(M)$-linear, we have
	\begin{equation}\label{taukappa}
	S(\kappa(A_a))=\sum_bM_{\bar{a}b}(u,\bar{u})S(A_b)=\sum_bM_{\bar{a}b}(u,\bar{u})w_b=\overline{S(A_a)}.
	\end{equation}
	
	Moreover, we have
	
	\begin{prop}\label{conjcharge} Suppose $A\in C^{\infty}(M)\otimes\Omega^n_{\C^n\times M/M}/dF\wedge\Omega^{n-1}_{\C^n\times M/M}$ has (quasi-)homogeneous charge $l(A)$, then $\kappa(A)$ is also (quasi-)homogeneous with charge $l(\kappa(A))$. Furthermore,
		$$l(A)+l(\kappa(A))=nd.$$
	\end{prop}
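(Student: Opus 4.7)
The plan is to extract the charge relation by analyzing how the $U(1)$ action $\P$ interacts with the real structure $\tau_R$, using the key identification $S(\kappa(A))=\tau_R(S(A))$ from \eqref{taukappa} together with Lemma \ref{uone}. Since $\kappa$ is complex anti-linear and a homogeneous $A$ of charge $l(A)$ can be expanded as $A=\sum_{l_a=l(A)}\lambda_a A_a$ in the monomial basis, it suffices to treat the case $A=A_a$; linearity then yields the statement for a general homogeneous $A$.

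First I would record two auxiliary observations about the basic action $\T(\theta)$. From its defining formula, $\T(\theta)$ preserves the $(p,q)$-bigrading and commutes with $\tau_R$, so any $(p,q)$-form $\alpha$ with $\T$-weight $e^{ic\theta}$ has conjugate $\tau_R(\alpha)$ of type $(q,p)$ and $\T$-weight $e^{-ic\theta}$. Passing to $\P(\theta)=e^{idp\theta}\T(\theta)$ on $(p,q)$-forms: a $(p,q)$-form $\alpha$ with $\P$-eigenvalue $e^{im\theta}$ has $\T$-weight $e^{i(m-dp)\theta}$, and therefore $\tau_R(\alpha)$ is a $(q,p)$-form with $\P$-eigenvalue $e^{i(d(p+q)-m)\theta}$. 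I would then apply this to the harmonic form $w_a=S(A_a)$, which by Lemma \ref{uone} satisfies $\P(\theta)w_a=e^{i(nd-l_a)\theta}w_a$: decomposing $w_a=\sum_{p+q=n}w_a^{(p,q)}$ into bigrade components (each a $\P$-eigenvector with the same weight, because $\T$ preserves the bigrading) and summing the component-wise conjugation calculations, I obtain
\[\P(\theta)\bar w_a=e^{il_a\theta}\bar w_a.\]

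To conclude, I would invoke \eqref{taukappa} to write $\bar w_a=S(\kappa(A_a))=\sum_b M_{\bar ab}w_b$. Applying $\P(\theta)$ and using Lemma \ref{uone} once more gives
\[e^{il_a\theta}\sum_b M_{\bar ab}w_b=\sum_b M_{\bar ab}e^{i(nd-l_b)\theta}w_b,\]
and the linear independence of $\{w_b\}$ forces $M_{\bar ab}=0$ whenever $l_b\neq nd-l_a$. Hence $\kappa(A_a)$ is supported on basis elements of charge $nd-l_a$ and is therefore homogeneous of that charge. By anti-linearity $\kappa$ sends a general homogeneous $A$ of charge $l(A)$ to a homogeneous element of charge $nd-l(A)$, yielding the claimed relation $l(A)+l(\kappa(A))=nd$. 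The main subtlety is that the harmonic forms $w_a$ are typically of mixed $(p,q)$-type, so the $\P$-eigenvalue alone cannot be distributed to each bigrade component until one exploits that the unshifted action $\T$ preserves the bigrading; this is what allows the conjugation computation to be carried out piece by piece at the level of $\T$-weights.
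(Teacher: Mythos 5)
Your argument is correct, and its second half (expand $\bar w_a=\sum_b M_{\bar a b}w_b$, apply $\P(\theta)$, use Lemma \ref{uone} and linear independence of $\{w_b\}$ to kill $M_{\bar a b}$ unless $l_b=nd-l_a$) is exactly the paper's end-game. Where you genuinely differ is in how you obtain the key intermediate identity $\P(\theta)\bar w_a=e^{il_a\theta}\bar w_a$, i.e.\ (\ref{bara1}). The paper gets it analytically: it conjugates the defining relation $w_a=A_a+\bar\p_F\nu_a$ to $\bar w_a=\bar A_a+\p_F\bar\nu_a$, uses the commutation $\P(\theta)\p_F\P(\theta)^{-1}=e^{id\theta}\p_F$ from Lemma \ref{Atheta}, and then repeats the argument of Lemma \ref{uone} (harmonicity of both sides plus the Agmon estimate, Lemma \ref{epde12}, and integration by parts to discard a $\p_F$-exact, polynomially growing discrepancy). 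You instead observe that $\T(\theta)$ preserves the $(p,q)$-bigrading and commutes with $\tau_R$, so each bigrade component of the $\P$-eigenvector $w_a$ is itself a $\T$-eigenvector, and conjugation flips the $\T$-weight while swapping $(p,q)\mapsto(q,p)$; since $p+q=n$ for every component, the shifted eigenvalue of $\bar w_a$ comes out as $e^{il_a\theta}$ with no further analysis. Your route is more elementary at this step — once Lemma \ref{uone} is in hand, no additional Agmon/integration-by-parts argument is needed — and it makes transparent the general principle that the $\P$- and $\Q$-charges of an $n$-form and its conjugate sum to $nd$. The paper's route has the virtue of reusing verbatim the mechanism already set up for Lemma \ref{uone}, working directly from the defining decomposition of $\bar w_a$ rather than through the bigrade decomposition. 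Your reduction of a general homogeneous $A$ to the basis elements $A_a$ via anti-linearity of $\kappa$ is also fine and mirrors the paper's opening line.
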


	\begin{proof}
		It suffices to prove it for $\{A_a\}_{a=0}^{\mu-1}$. Let $w_a=S(A_a)$, i.e.
		$$w_a=A_a+\bar{\p}_F\nu_a \quad \text{for some $(n-1)$-form }\nu_a \text{ (with polynomial growth)}.$$
		Then
		\begin{equation}\label{bara}
		\bar{w}_a=\bar{A}_a+\p_F\bar{\nu}_a.
		\end{equation}
		First by Lemma \ref{Atheta}, (\ref{bara}), and $\P(\theta)\p_F=e^{i\theta}\p_F\P(\theta)$, proceed as what we did in Lemma \ref{uone}, one has \begin{equation}\label{bara1}\P(\theta)\bar{w}_a=e^{il_a\theta}\bar{w}_a=e^{il_a\theta}\sum_bM_{\bar{a}b}w_b.
		\end{equation}
		
		On the other hand,
		$$\bar{w}_a=\kappa(A_a)+\bar{\p}_F\tilde{\nu}_a \quad \text{for some $(n-1)$-form } \tilde{\nu}_a \text{ (with polynomial growth)},$$
		and $\kappa(A_a)=\sum_bM_{\bar{a}b}(u,\bar{u})A_b$, we have
		\begin{equation}\label{bara2}
		\P(\theta)\bar{w}_a=\sum_be^{i(nd-l_b)\theta}M_{\bar{a}b}w_b.
		\end{equation}
		(\ref{bara1})-(\ref{bara2}), one has
		\[\sum_b(e^{i(nd-l_b)\theta}-e^{il_a\theta})M_{\bar{a}b}w_b=0.\]
		Since $\{w_a\}$ is a basis, one has $M_{\bar{a}b}=0$ if $nd-l_b\neq l_a$, which implies that $\kappa(A_a)$ is homogeneous of degree $nd-l_a.$
	\end{proof}

	%


	\begin{rem}\label{n-2}
		If $f$ contains a unique isolated singularity at the origin, the strong nullstellensatz implies the existence of a sufficiently large $N$ such that $$z_i^N\in\lan \partial_{z_1}f,...,\partial_{z_n}f\ran$$ for all $i=1,2,...,n$. As a result, if $\phi$ is homogeneous with a sufficient large degree $K$, $\phi \in\lan \partial_{z_1}f,...,\partial_{z_n}f\ran$. In fact, Lemma \ref{Atheta} could provide an explicit description of $K$. To keep things simple, we will consider only the case where $f$ is homogeneous of degree $n$. A harmonic form's $U(1)$ charge cannot exceed $n(n-1)$ in this case. Thus, if $w=S(\phi \dz)$ for some homogeneous polynomial $\phi$ of degree $n(n-2)+1$, $w$ must have a $U(1)$ charge of $n(n-1)+1$. As a result, $w$ should be trivial, implying that $\phi\in \lan \partial_{z_1}f,...,\partial_{z_n}f\ran. $ As a result, we have $K=n(n-2)+1$ in this case.
	\end{rem}
	
	Now suppose $f\in\C[z_1,\ldots,z_n]$ is homogeneous of degree $n$. Then the above analysis implies that:
	
	\begin{prop-defn}[small $tt^*$ structure in the Landau-Ginburg B models \cite{Fan2020LGCYCB}]\label{smallLG}
		Let $f$ be a non-degenerate homogeneous degree $n$ polynomial and $F$ be its marginal deformation with parameter space $M$. Let $\H'\subset\H$ be the subbundle generated by $w_{k_a}=S(A_{k_a})$, where $l(A_{k_a})/n\in\mathbb{Z}.$ By Proposition \ref{conjcharge}, restriction of $\bar{\cdot},h,D,C,\bar{C}$ to $\H'$ defines a $tt^*$ structure, called small $tt^*$ geometry structure in the Landau-Ginzburg B models. Here $\bar{\cdot}$ denotes the complex conjugation.
	\end{prop-defn}
	
	\begin{proof} 
		Let $u=(u_1,...,u_s)$ be local coordinates of  $M$, and $\p_i=\p_{u_i} $.
		
		The restriction gives a well defined $tt^*$ structure, since
		\begin{enumerate}
			\item $D_i$ preserves the $U(1)$ charge of each $w_a$:\\
			Since $w_a$ is holomorphic, i.e. $\bar{D}_{\bar{i}}w_a=0$, we have
			$$\p_ih(w_a,w_b)=h(D_iw_a,w_b)+h(w_a,\bar{D}_{\bar{i}}w_b)=h(D_iw_a,w_b).$$
			By the non-degeneracy of $h$ and Corollary \ref{wawb}, we can see that $D_iw_a$ and $w_a$ have the same $U(1)$ charge.
			\item $C_i$ increases the $U(1)$ charge by $n$, since the operator $C_i$ is induced from multiplication by $\psi_i$ in $\Jac(F)$, where $\psi_i$ has the same $U(1)$ charge $n$ as $f$.
			\item $\mathcal{H}'$ is stable under $\bar{\cdot}$, by Proposition \ref{conjcharge}. Combining with (1) and (2), it implies that $\mathcal{H}'$ is stable under $\bar{D}_{\bar{i}}$ and $\bar{C}_{\bar{i}}$.
		\end{enumerate}
		
	\end{proof}
	
	It follows from the above analysis that one can also endow the following filtration on $\mathcal{H}'$
	$$\H'=\F^0\H'\supset\F^1\H'\supset\cdots\supset \F^{n-2}\H'$$
	with the filtration $\F^{k}\H'$ generated by
	$$\{w_{a}=S(A_a)~|~l_a\leq n(n-k-1)\}.$$
	It follows from the definition that under the isomorphism $S$, such filtration is the same as the pole order filtration on the rational differential forms after a degree shift.
	
	The above analysis implies that $\nabla^{(1,0)}=D+C$ satisfies the Griffiths transversality.
	
	Next, we denote $\Jac(F)'$ to be the subspace of $\Jac(F)$ that is generated by the monomial basis $\phi$, with $n|\deg(\phi)$. We have,
	$$\Jac(F)'\stackrel{\cong}{\longrightarrow}\mathcal{H}',\quad \phi\mapsto S(\phi dz_1\wedge\cdots\wedge dz_n).$$
	Denote $\dim\Jac(F)'=\mu'$. Moreover, we let $\phi_0=1$.
	
	\subsection{The vacuum line bundle and the Weil-Petersson metric}
	
	In this subsection, we assume that $f$ is homogeneous of degree $n$. Let us introduce the vacuum line bundle and Weil-Petersson metric in the LG models. Recall that $A_0=dz_1\wedge\cdots\wedge dz_n$ is the unique representative of lowest $U(1)$-charges $l_0=n$. As before, set $w_0=S(A_0)$ and $h_{0\bar{0}}=h(w_0,w_0)$. 
	Let $u=(u_1,...,u_s)$ be local coordinates of  $M$, and $\p_{i}=\p_{u_i} $.
	By Corollary \ref{wawb}, we have
	\begin{align*}
	&\p_ih(w_0,w_0)=h(D_iw_0,w_0)+h(w_0,\bar{D}_{\bar{i}}w_0)=h(D_iw_0,w_0),\\
	&0=\p_ih(w_0,w_a)=h(D_iw_0,w_a)+h(w_0,\bar{D}_{\bar{i}}w_a)=h(D_iw_0,w_a),\quad a>0.
	\end{align*}
	
	Hence, we have
	\begin{equation}\label{diazero}
	D_i w_0=\left(h_{\bar{0}0}^{-1}\p_ih_{0\bar{0}}\right)w_0,
	\end{equation}
	which means that the line subbundle $\mathcal{L}$ of $\H$ generated by $w_0$ is preserved by the Chern connection $D+\bar{D}$. It is called the vacuum line bundle in the physical literature.
	
	The Weil-Petersson metric is given by
	$$G_{i\bar{j}}=-\p_i\bar{\p}_{\bar{j}}\log h_{0\bar{0}}.$$

	\begin{thm} The Weil-Petersson metric $G_{i\bar{j}}$ is related to the $tt^*$ metric as follows:
		$$G_{i\bar{j}}=\frac{h(C_iw_0,C_jw_0)}{h(w_0,w_0)}.$$
	\end{thm}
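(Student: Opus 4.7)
The plan is a direct two-step differentiation of $\log h_{0\bar 0}$, using the previously established relations to reduce everything to an expression involving $C_i w_0$ and $C_j w_0$. First I would use the Chern connection compatibility of $h$ with $D+\bar D$ together with the fact that $w_0$ is holomorphic, so $\bar D_{\bar j}w_0 = 0$, to get
\[
\partial_i h_{0\bar 0} = h(D_i w_0,w_0),\qquad \bar\partial_{\bar j} h_{0\bar 0} = h(w_0, D_j w_0).
\]
Combined with the identity $D_i w_0 = (h_{0\bar 0}^{-1}\partial_i h_{0\bar 0})w_0$ from (\ref{diazero}), the "quadratic" cross term that appears when applying $\bar\partial_{\bar j}$ to $h_{0\bar 0}^{-1}\partial_i h_{0\bar 0}$ cancels exactly against the cross term produced by the Leibniz rule applied to $h(D_i w_0, w_0)$; concretely, writing $\alpha_i := \partial_i\log h_{0\bar 0}$ one finds $h(D_i w_0, D_j w_0) = \alpha_i\bar\alpha_j h_{0\bar 0}$, and this cancels the $-\bar\alpha_j\alpha_i$ coming from $\bar\partial_{\bar j}(h_{0\bar 0}^{-1})\cdot \partial_i h_{0\bar 0}$. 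After this cancellation what remains is
\[
-\partial_i\bar\partial_{\bar j}\log h_{0\bar 0} \;=\; -\frac{h(\bar D_{\bar j} D_i w_0,\, w_0)}{h(w_0,w_0)}.
\]

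Next I would apply the third $tt^*$ equation $[D_i,\bar D_{\bar j}] = -[C_i,\bar C_{\bar j}]$, which, together with $\bar D_{\bar j}w_0 = 0$, yields $\bar D_{\bar j} D_i w_0 = C_i\bar C_{\bar j}w_0 - \bar C_{\bar j}C_i w_0$. The key step is now to show that $\bar C_{\bar j}w_0 = 0$; this is where the $U(1)$-charge analysis enters. By definition $\bar C_{\bar j}w_0$ lies in $\mathcal{H}^n$, and by duality/pairing with any basis vector $w_a$ one has $h(w_a,\bar C_{\bar j}w_0) = h(C_j w_a, w_0)$. Since $C_j$ increases the $U(1)$ charge by $n$ while $w_0$ has the minimal charge $l_0 = n$, the form $C_j w_a$ has charge $l_a + n > n = l_0$ for every $a$, so Corollary \ref{wawb} forces $h(C_j w_a, w_0) = 0$ for all $a$; non-degeneracy of $h$ on $\mathcal{H}^n$ then gives $\bar C_{\bar j}w_0 = 0$.

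With that vanishing in hand, $\bar D_{\bar j} D_i w_0 = -\bar C_{\bar j} C_i w_0$, and the adjoint property $h(C_j u, v) = h(u,\bar C_{\bar j} v)$ combined with Hermitian symmetry gives
\[
h(\bar D_{\bar j} D_i w_0, w_0) \;=\; -h(\bar C_{\bar j}C_i w_0, w_0) \;=\; -h(C_i w_0, C_j w_0).
\]
Substituting back produces the claimed formula
\[
G_{i\bar j} \;=\; -\partial_i\bar\partial_{\bar j}\log h_{0\bar 0} \;=\; \frac{h(C_i w_0, C_j w_0)}{h(w_0,w_0)}.
\]

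I expect the only real obstacle to be verifying the vanishing $\bar C_{\bar j}w_0 = 0$; everything else is bookkeeping with the Chern connection and the $tt^*$ equations. That vanishing, however, is exactly the kind of statement the subsection \ref{2.4} was designed for: $w_0$ sits at the bottom of the $U(1)$-charge filtration, and $\bar C_{\bar j}$ lowers charge by $n$, so there is simply no room below $w_0$ for its image to land. Once that point is secured, the rest of the argument is a routine computation parallel to the familiar Calabi--Yau derivation of the Weil--Petersson metric from the Hodge metric.
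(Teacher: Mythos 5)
Your proposal is correct and follows essentially the same route as the paper: expand $-\partial_i\bar{\partial}_{\bar j}\log h_{0\bar 0}$ using $\bar D_{\bar j}w_0=0$ and $D_iw_0\propto w_0$ to cancel the cross terms, convert $[\bar D_{\bar j},D_i]$ into $[C_i,\bar C_{\bar j}]$ via the $tt^*$ equations, and kill $\bar C_{\bar j}w_0$ by the $U(1)$-charge argument (Corollary \ref{wawb} plus adjointness and positivity of $h$). The only cosmetic difference is that the paper establishes $\bar C_{\bar j}w_0=0$ first and then does the same second-derivative computation, so no further changes are needed.
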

	
	\begin{proof}
		First, since $w_0=S(dz_1\wedge\cdots\wedge dz_n)$ has the lowest $U(1)$ charge and $C_i$ increases the $U(1)$ charge by $n$, by a similar argument as in Lemma \ref{uone}
		\[h(w_0,C_iw_a)=0.\]
		While $h(\bar{C}_{\bar{i}}w_0,w_a)=h(w_0,C_iw_a)$
		implies that $\bar{C}_{\bar{i}}w_0=0.$
		
		Then,
		\begin{align*}
		\bar{\p}_{\bar{j}}\p_i\log h(w_0,w_0)&=-\frac{h(\bar{D}_{\bar{j}}D_iw_0,w_0)+h( D_iw_0,D_jw_0)}{h(w_0,w_0)}+\frac{h( D_iw_0,w_0)h(w_0,D_jw_0)}{h(w_0,w_0)^2}\\
		&=-\frac{h([\bar{D}_{\bar{j}},D_i]w_0,w_0)}{h(w_0,w_0)}-h_{\bar{0}0}^{-1}\p_ih_{0\bar{0}}\cdot\overline{h_{\bar{0}0}^{-1}\p_jh_{0\bar{0}}}+ h_{\bar{0}0}^{-1}\p_ih_{0\bar{0}}\cdot\overline{h_{\bar{0}0}^{-1}\p_jh_{0\bar{0}}}\\
		&=-\frac{h([C_i,\bar{C}_{\bar{j}}]w_0,w_0)}{h(w_0,w_0)}\\
		&=\frac{h( C_iw_0, C_jw_0)}{h(w_0,w_0)}.
		\end{align*}
	\end{proof}

	\section{Lefschetz Thimble, Riemann Bilinear Formula and Period Integral}\label{four}
	
	In this section, we express the $tt^*$ metric in terms of period integrals using the Riemann bilinear relation. Next, we present an explicit construction of the Lefschetz thimble in order to study the period integral.  Lastly, we investigate two types of period integrals.  We would like to point out that, Siebert-van Garrel-Ruddat \cite{Rud} also computes period integrals for Landau-Ginzburg models, which shows that, as predicted by mirror symmetry, the period integrals store enumerative information of the mirror dual.

	From now on, assume that $f$ is a non-degenerate homogeneous polynomial of degree $n$, $F$ is a marginal deformation of $f$ parametrized by $u\in M:$
	$$F(z,u)=f(z)+\sum_{i=1}^su_i\psi_i,$$
	where $\psi_i$ has the same degree as $f$.

	\subsection{Lefschetz thimble and intersection matrix}\label{not}
	
	Let $a>0$, and consider the sets
	$$F^{\geq a}=\{z\in\C^n\big|\Re(F(z,u))\geq a\},\quad F^{\leq -a}=\{z\in\C^n\big|\Re(F(z,u))\leq-a\}.$$
	By Morse theory, we know that if there are no critical values of $\Re(F(z,u))$ between $[a,b]$, then the set $F^{\leq a}$ is the deformation kernel of $F^{\leq b}$ by the flow generated by the vector field $\nabla\Re(F(z,u))/|\nabla\Re(F(z,u))|$. So if $a$ is large enough, there are no critical points in $F^{\leq-a}$ and each $F^{\leq-a}$ has the same homotopy type. We denote the equivalence class by $F^{-\infty}$. Similarly, we have $F^{+\infty}$. Then we can consider the relative homology groups $H_*(\C^n, F^{-\infty};\Z)$ and $H_*(\C^n,F^{+\infty};\Z)$.
	
	\begin{thm} $H_k(\C^n,F^{-\infty};\Z)$ is trivial except for $k=n$. $H_n(\C^n,F^{-\infty};\Z)$ is a free abelian group, and it is generated by the Lefschetz thimble $\{\gamma_k^-\}$. The same holds true for $H_*(\C^n, F^{+\infty};\Z)$.
	\end{thm}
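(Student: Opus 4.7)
The plan is to prove this by applying standard Morse theory to the real-valued function $\Re(F)$ on $\mathbb{C}^n$. First, since $f$ is non-degenerate and quasi-homogeneous, the total Milnor number at the origin is $\mu = \dim \Jac(f)$, and after a generic small Morsification of $F$ (which does not change the homotopy type of the pair $(\mathbb{C}^n, F^{\leq -a})$ for $a$ large, by structural stability of Morse data), we may assume $F$ has exactly $\mu$ non-degenerate critical points $p_1, \ldots, p_\mu$. The key observation is the computation of Morse indices: by the holomorphic Morse lemma there exist local holomorphic coordinates $w = (w_1, \ldots, w_n)$ near each $p_k$ such that $F - F(p_k) = \sum_i w_i^2$, and writing $w_i = x_i + i y_i$ gives
\[
\Re(F) - \Re(F(p_k)) = \sum_{i=1}^n (x_i^2 - y_i^2),
\]
so the real Hessian has signature $(n,n)$ and hence Morse index exactly $n$ at every critical point.

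Next I would establish the tameness of the gradient flow. The non-degeneracy assumption on $f$ combined with quasi-homogeneity gives $|\nabla F(z)| \to \infty$ as $|z| \to \infty$ (this is the ellipticity condition already invoked in Section 2.1). This implies that the flow of $-\nabla \Re(F)/|\nabla \Re(F)|^2$ is complete outside the critical locus and that every integral curve either converges to a critical point (in forward or backward time) or has $\Re(F)$ going to $\pm\infty$. Consequently, for $a$ sufficiently large (larger than $|\Re(F(p_k))|$ for all $k$), standard Morse theory applies: $(\mathbb{C}^n, F^{\leq -a})$ is obtained, up to homotopy equivalence of pairs, by attaching $\mu$ cells of dimension $n$ to $F^{\leq -a}$, one for each critical point, where the attached $n$-cell is the truncated descending manifold of $\Re(F)$ at $p_k$ with respect to $-\nabla\Re(F)$. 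By definition this descending manifold is the Lefschetz thimble $\gamma_k^-$.

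Finally, by excision and the long exact sequence of the pair together with the handle description,
\[
H_k(\mathbb{C}^n, F^{\leq -a}; \mathbb{Z}) \;\cong\; \bigoplus_{k=1}^{\mu} H_k(D_k^n, \partial D_k^n; \mathbb{Z})
\;=\; \begin{cases} \mathbb{Z}^{\mu} & k = n, \\ 0 & k \neq n, \end{cases}
\]
with the classes $[\gamma_k^-]$ forming a $\mathbb{Z}$-basis in degree $n$. Since the inclusion $F^{\leq -a} \hookrightarrow F^{\leq -a'}$ for $a' < a$ (below all critical values) is a homotopy equivalence by the non-critical Morse lemma, the relative homology stabilizes and equals $H_*(\mathbb{C}^n, F^{-\infty}; \mathbb{Z})$. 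The analogous statement for $F^{+\infty}$ follows by replacing $F$ with $-F$ and using the thimbles $\gamma_k^+$ (ascending manifolds of $\Re(F)$).

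The main technical obstacle is the verification that global Morse theory genuinely applies on the non-compact $\mathbb{C}^n$, where $\Re(F)$ is neither bounded nor proper: one must rule out flow lines escaping to infinity in finite time and confirm that the descending manifolds are embedded $n$-disks with boundary in $F^{\leq -a}$. Both issues are resolved by the growth of $|\nabla F|$ at infinity guaranteed by non-degeneracy (and available explicitly through the Agmon-type estimates cited from \cite{DY2020cohomology}), so the flow is tame enough for the usual cell-attachment arguments to go through.
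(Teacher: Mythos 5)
Your argument is correct, and it is essentially the same route the paper relies on: the paper does not prove this theorem itself but defers to \cite{Fan2020LGCYCB}, where the statement is established by exactly this kind of Morse/Picard--Lefschetz argument (Morsification, holomorphic Morse lemma giving index $n$ at every critical point, and tameness at infinity from the growth of $|\nabla F|$ for a non-degenerate quasi-homogeneous deformation). The only place deserving a bit more care than your sketch gives is the claim that a small Morsification leaves $(\C^n,F^{\leq -a})$ unchanged; this follows from the uniform lower bound on $|\nabla F_s|$ outside a fixed compact set for the perturbed family (tameness), which is the same growth input you already invoke.
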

	
	\begin{proof} See, for example, \cite{Fan2020LGCYCB}.
	\end{proof}
	
	Before moving on, let us give an explicit construction of the Lefschetz thimble $\gamma_k^-\in H_n(\C^n,F^{-\infty};\Z)$.
	
	Let $V_t$ be the level manifold $V_t=F^{-1}(t)$, then the hypersurface $X_{F}\subset\mathbb{P}^{n-1}$ could be regarded as the complement component of the projection of $V_t$:
	
	Consider $G(z_1,\ldots, z_n,z_{n+1})=F-tz_{n+1}^n$, $\bar{V}_t:=G^{-1}(0)\subset\mathbb{P}^n$, then $$X_F=\bar{V}_t-V_t.$$
	
	Let $c_t(s):=te^{2\pi i s}$, $0\leq s\leq1$, then $c_t$ induces a monodromy operator
	$$M_{c_t}: H_{n-1}(V_t)\to H_{n-1}(V_t).$$
	Following \cite{Fan2020LGCYCB}, we fix a basis $\{\sigma_k\}_{k=0}^{\mu-1}$ of $H_{n-1}(V_{-1})$, such that $\sigma_{k}\in\ker(M_{c_t}-\Id)$ for $0\leq k\leq\mu'-1$ (Recall that $\mu=\dim_{}\Jac(f),\mu'=\dim_{}\Jac(f)'$). Hence, there exists $\delta_k\in H_{n-2}(X_F)_0:=(H^{n-2}(X_F)_{\prim})^*$, such that $\sigma_k=\tau(\delta_k)$ for $0\leq k\leq\mu'-1$ (c.f., dualizing  \cite[(3.54)]{CECOTTI1991N}), where $\tau: H_{n-2}(X_F)\to H_{n-1}(\bar{V}_{-1}-X_F)=H_{n-1}(V_{-1})$ is the Leray coboundary map. For $t>0$, let $\Phi_t:V_{-t}\to V_{-1}$ be the map $(z_1,..,z_n)\to(t^{-\frac{1}{n}}z_1,...,t^{-\frac{1}{n}}z_n)$, and set
	\[\gamma_k^-:=\cup_{t>0}(\Phi_t)^*\sigma_k,\]
	then one can see that $\{\gamma_k^-\}_{k=0}^{\mu-1}$ is a basis of $H_n(\C^n,F^{-\infty};\mathbb{Z})$.
	
	Similarly, one can construct a basis $\{\gamma_k^+\}_{k=0}^{\mu-1}$ of $H_n(\C^n,F^{+\infty};\mathbb{Z}).$
	
	Now we would like to define intersection matrix between $H_n(\C^n,F^{-\infty};\mathbb{Z})$ and $H_n(\C^n,F^{+\infty};\mathbb{Z})$.
	
	First note that
	$$d_{2\Re(F)}=\bar{\p}_F+\p_F=e^{-F-\bar{F}}\circ d\circ e^{F+\bar{F}},\quad \Delta_{2\Re(F)}=2\Delta_F,$$
	then for any $\Delta_F$-harmonic form $w$, the modified forms
	$$e^{F+\bar{F}}w\quad\text{and}\quad e^{-F-\bar{F}}* w$$
	are $d$-closed. In particular, they give the representatives of $H^n(\C^n, F^{-\infty};\C)$ and $H^n(\C^n, F^{+\infty};\C)$, respectively (c.f. \cite{DY2020cohomology}).

	For each $\gamma_k^-\in H_n(\C^n,F^{-\infty};\mathbb{Z})$, the map
	$$w\mapsto \int_{\gamma_k^-}e^{F+\bar{F}}w,\quad w\in \ker{\Delta_{F(\cdot,u)}}$$
	is linear. Hence by Riesz representation, there exists $\alpha_k\in \ker{\Delta_{F(\cdot,u)}}$, such that
	\[\int_{\gamma_k^-}e^{F+\bar{F}}w=\int_{\C^n}w\wedge*\alpha_k=\int_{\C^n}e^{F+\bar{F}}w\wedge e^{-F-\bar{F}}*\alpha_k.\]

	Then set
	$$PD(\gamma_k^-):=e^{-F-\bar{F}}*\alpha_k\in H^n(\C^n,F^{+\infty};\C).$$
	Similarly, we can define $PD(\gamma_k^+),$ such that
	\[\int_{\gamma_k^+}e^{-F-\bar{F}}*w=\int_{\C^n}PD(\gamma^+_k)\wedge e^{-F-\bar{F}}*w\]
	for all $w\in\ker{\Delta_{F(\cdot,u)}}.$
	\begin{defn}\label{intermatrix}
		The intersection matrix $I=(I_{kl})_{\mu\times\mu}$ is defined by $$I_{kl}=\int_{\gamma_k^-}PD(\gamma_l^+)=\int_{\C^n}PD({\gamma}_l^+)\wedge PD({\gamma}_k^-).$$
		for $0\leq k,l\leq \mu-1.$

		While $I'$ is defined as a submatrix of $I$, that is $I_{kl}'=I_{kl}$ for $0\leq k,l\leq \mu'-1.$
		Now set $\I=I^{-1},\I'=(I')^{-1}.$
	\end{defn}
	
	\begin{rem}
		While there is a purely topological way to define the intersection matrix of the Lefschetz thimble, the definition provided above is more convenient for our purposes, since in our setting, Proposition \ref{riebil} below is trivial.
	\end{rem}
	By definition, the forms $PD(\gamma_k^-)$ and $PD(\gamma_k^+)$ depend on $u$, however,
	\begin{prop}
		The intersection matrix $I$ is locally constant, i.e. when $|u|$ is small, $\p I=\bar{\p} I=0.$
	\end{prop}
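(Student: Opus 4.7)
The plan is to identify $I_{kl}$ with a topological intersection number, from which local constancy in $u$ will follow immediately. By construction $PD(\gamma_l^+)$ is $d$-closed with exponential decay as $\Re F \to -\infty$ (via the Agmon estimates applied to the harmonic form entering its Riesz representation), so it represents a class $[PD(\gamma_l^+)] \in H^n(\C^n, F^{-\infty};\C)$, and the integral $I_{kl} = \int_{\gamma_k^-} PD(\gamma_l^+)$ depends only on $[\gamma_k^-] \in H_n(\C^n, F^{-\infty};\mathbb{Z})$ and $[PD(\gamma_l^+)]$.

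Next I invoke Poincar\'e--Lefschetz duality for the triple $(\C^n; F^{-\infty}, F^{+\infty})$: the cup product
\[
H^n(\C^n, F^{-\infty}) \otimes H^n(\C^n, F^{+\infty}) \longrightarrow H^{2n}(\C^n, F^{-\infty} \cup F^{+\infty}) \cong \C
\]
is non-degenerate, and cap with the fundamental class gives an isomorphism $H^n(\C^n, F^{-\infty}) \cong H_n(\C^n, F^{+\infty})$. Let $\eta_l \in H^n(\C^n, F^{-\infty})$ denote the Lefschetz dual of $[\gamma_l^+]$, characterized by $\int_{\C^n} \eta_l \wedge \omega = \int_{\gamma_l^+} \omega$ for every class $[\omega] \in H^n(\C^n, F^{+\infty})$.

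The heart of the argument is to prove $[PD(\gamma_l^+)] = [\eta_l]$. For each harmonic $w$ the form $e^{-F-\bar F}\ast w$ is $d$-closed (the identity $d_F^{\dagger}w = 0$ rewrites as $d(e^{-F-\bar F}\ast w) = 0$) and decays as $\Re F \to +\infty$, representing a class in $H^n(\C^n, F^{+\infty})$. The Hodge theorem (Theorem~\ref{Hodgethm}) together with the Kodaira--Hodge decomposition of Remark~\ref{KHD} gives a Hodge-theoretic isomorphism $\ker\Delta_F \xrightarrow{\sim} H^n(\C^n, F^{+\infty};\C)$ via $w \mapsto [e^{-F-\bar F}\ast w]$, so harmonic representatives exhaust every cohomology class on the right. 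The defining equation of $PD(\gamma_l^+)$ then reads
\[
\int_{\C^n}\bigl(PD(\gamma_l^+) - \eta_l\bigr) \wedge \omega = 0 \quad \text{for all } [\omega] \in H^n(\C^n, F^{+\infty}),
\]
and non-degeneracy of the cup product pairing forces $[PD(\gamma_l^+)] = [\eta_l]$.

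Combining the above, $I_{kl} = \int_{\gamma_k^-} \eta_l = \langle \gamma_k^-, \gamma_l^+ \rangle$, the topological intersection number, which is an integer. Since the bases $\{\gamma_k^-(u)\}$ and $\{\gamma_l^+(u)\}$ are constructed as continuous families in $u$ (via the monodromy operator on $V_{-1}$ and the rescaling $\Phi_t$ introduced earlier in this section), the function $u \mapsto I_{kl}(u)$ is continuous and $\mathbb{Z}$-valued on $M$, hence locally constant, giving $\partial I = \bar\partial I = 0$. The principal obstacle is the identification $[PD(\gamma_l^+)] = [\eta_l]$: it rests on the surjectivity of the harmonic parametrization $w \mapsto [e^{-F-\bar F}\ast w]$ onto $H^n(\C^n, F^{+\infty};\C)$, which ultimately reduces to the $L^2$ Hodge theory for $d_{2\Re F}$ recalled in Section~\ref{Hodgedecomposition}.
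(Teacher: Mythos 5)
Your strategy---identify $I_{kl}$ with the integer-valued topological intersection pairing $\langle\gamma_k^-,\gamma_l^+\rangle$ and conclude local constancy from continuity---is a genuinely different route from the paper's, and it is morally reasonable, but as written it has a gap at exactly the step you flag as the ``heart'' of the argument. The identification $[PD(\gamma_l^+)]=[\eta_l]$ rests on the claim that $w\mapsto[e^{-F-\bar{F}}\ast w]$ is an isomorphism from $\ker\Delta_F$ onto the \emph{relative} cohomology $H^n(\C^n,F^{+\infty};\C)$, and that the wedge-integration pairing between $H^n(\C^n,F^{-\infty})$ and $H^n(\C^n,F^{+\infty})$ is well defined and non-degenerate. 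Neither statement follows from Theorem~\ref{Hodgethm} or Remark~\ref{KHD}, which are purely $L^2$ results (an isomorphism of harmonic forms with $L^2$ $\bar{\partial}_F$-cohomology and the $L^2$ Kodaira--Hodge decomposition); the comparison between $L^2$ Witten cohomology and relative cohomology of the pair $(\C^n,F^{\pm\infty})$ is a separate theorem. Moreover, your preliminary reductions quietly integrate exact forms over the noncompact thimbles ($I_{kl}$ ``depends only on the classes'') and over all of $\C^n$ (the defining property of $\eta_l$); Stokes' theorem there requires decay/growth control on the representatives and on the primitives, and this analytic control is precisely where the difficulty lies --- compare the effort the paper spends in Appendix B (Lemma~\ref{decom}) just to justify the analogous manipulation in Proposition~\ref{riebil1}. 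Without supplying these comparison and decay arguments, the chain $I_{kl}=\int_{\gamma_k^-}\eta_l=\langle\gamma_k^-,\gamma_l^+\rangle$ is not established.

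For contrast, the paper avoids any identification with topological intersection numbers: for $|u|$ small it writes a $\Delta_{F(\cdot,u)}$-harmonic form as $w'=e^{(F-f)+\overline{F-f}}w$, which by the Agmon estimate (Lemma~\ref{epde12}) is $d_f$-closed with exponential decay, then uses the argument of Section~7.5 of \cite{DY2020cohomology} to write $w'-\Pi_0w'=d_f\nu$ with $\nu$ exponentially decaying and to see that $\Pi_0\circ e^{(F-f)+\overline{F-f}}$ is an isomorphism of $L^2$ cohomologies; this yields $PD(\gamma_k^-)(u,\bar u)-PD(\gamma_k^-)(0)=d\mu$ with $\mu$ decaying on $\Re(f)<0$, whence $I(u)=I(0)$. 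If you want to keep your topological route, you would need to prove the relative-cohomology comparison and the duality statements with the same kind of Agmon-estimate bookkeeping; that would in fact give a stronger statement (that $I$ \emph{is} the intersection matrix), but it is not a shortcut.
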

	\begin{proof}
		Suppose $w$ is $\Delta_{F(\cdot,u)}$-harmonic, then if $|u|$ is small, by Agmon estimate (Lemma \ref{epde12}), 
		$$w':=e^{(F(\cdot,u)-f)+\overline{F(\cdot,u)-f}}w$$ 
		has exponential decay and is $d_{2\Re(f)}$-closed. Then proceed as what we did in \cite[\S 7.5]{DY2020cohomology}, we can find $\nu$, s.t. $w'-\Pi_0 w'=d_{2\Re(f)}\nu$, and $\nu$ has exponential decay. Here $\Pi_0$ is the projection $L^2\to \ker{\Delta_f}$. Hence,
		\[\int_{\gamma_k^-}e^{F+\bar{F}}w=\int_{\gamma_k^-}e^{f+\bar{f}}w'=\int_{\gamma_k^-}e^{f+\bar{f}}\Pi_0w'.\]
		Again, as what we did in \cite[\S 7.5]{DY2020cohomology},
		$$\Pi_0\circ e^{(F(\cdot,u)-f)+\overline{F(\cdot,u)-f}}:H_{(2)}^*(\C^n,d_{2\Re(F)(\cdot,u)})\to H_{(2)}^*(\C^n,d_{2\Re(f)})$$ 
		is isomorphic, for the similar reason, one can see that $PD(\gamma_k^-)(u,\bar{u})-PD(\gamma_k^-)(0)=d \mu$ for some differential form $\mu$ with exponential decay on $\Re(f)<0.$
		
		Hence, $I$ is locally constant.
	\end{proof}

	By the definition of intersection matrix, the following Riemann bilinear formula is straightforward:
	\begin{prop}\label{riebil}
		If $w$ and $w'$ are harmonic, then
		$$\int_{\C^n}w\wedge*w'=\sum_{k,l}\left(\int_{\gamma_k^-}e^{F+\bar{F}}w\right)(\I)_{kl}\left(\int_{\gamma_l^+}e^{-F-\bar{F}}w'\right).$$
	\end{prop}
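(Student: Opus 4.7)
The plan is to use the Riesz representatives underlying the two families of period pairings to produce adapted bases of the harmonic space and then invert the intersection matrix. I would denote $B(u,v):=\int_{\C^n}u\wedge * v$, which is a $\C$-bilinear form on $\mathcal{H}^n$; it is symmetric because $u\wedge * v=v\wedge * u$ for complex $n$-forms in real dimension $2n$. The $PD(\gamma^\pm_k)$ construction already carried out in the excerpt furnishes harmonic representatives $\alpha_k,\tilde\beta_l\in\mathcal{H}^n$ via
\[
\int_{\gamma_k^-}e^{F+\bar{F}}v=\int_{\C^n}v\wedge *\alpha_k=B(v,\alpha_k),\qquad \int_{\gamma_l^+}e^{-F-\bar{F}}* v=\int_{\C^n}\tilde\beta_l\wedge* v=B(\tilde\beta_l,v).
\]
First I would argue that $\{\alpha_k\}$ and $\{\tilde\beta_l\}$ are both bases of $\mathcal{H}^n$: since $\{\gamma^-_k\}$ and $\{\gamma^+_l\}$ are bases of the respective relative homologies, and since the period pairings they induce on the twisted de Rham cohomology (represented by harmonic forms via Theorem~\ref{Hodgethm}) are perfect, the two Riesz maps $\mathcal{H}^n\to\C^\mu$ are isomorphisms. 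Unpacking the definition of the intersection matrix then gives $I_{kl}=\int_{\gamma_k^-}e^{F+\bar F}\tilde\beta_l=B(\tilde\beta_l,\alpha_k)$, so $I$ is precisely the Gram matrix of these two bases under $B$.

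Next I would expand the harmonic form $w$ in the $\tilde\beta$-basis as $w=\sum_l a_l\tilde\beta_l$, solve for the coefficients by pairing with $\gamma_k^-$, namely $\int_{\gamma_k^-}e^{F+\bar{F}}w=\sum_l I_{kl}a_l$ and therefore $a_l=\sum_k\mathcal{I}_{lk}\int_{\gamma_k^-}e^{F+\bar{F}}w$, and then compute
\[
\int_{\C^n}w\wedge * w'=B(w,w')=\sum_l a_l\,B(\tilde\beta_l,w')=\sum_l a_l\int_{\gamma_l^+}e^{-F-\bar{F}}* w',
\]
where the final equality is the defining property of $\tilde\beta_l$. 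Substituting the expression for $a_l$ yields the stated identity; the index placement on $\mathcal{I}$ is forced by whichever ordering convention was fixed when defining $I$.

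The hard part will be justifying the completeness claim above: that each family of period integrals actually detects every class in $\mathcal{H}^n$. This requires combining the identification $\mathcal{H}^n\cong H^n_{(2),\bar\partial_F}(\C^n)$ from Theorem~\ref{Hodgethm} with the twisting $d_F=e^{-F-\bar F}\,d\,e^{F+\bar F}$, which transports harmonic forms to closed representatives of the twisted de Rham cohomology $H^n(\C^n,F^{-\infty};\C)$ with appropriate decay (and symmetrically for $F^{+\infty}$), and then appealing to Poincar\'e--Lefschetz duality between these relative cohomologies and the Lefschetz-thimble homology. Once non-degeneracy of both period pairings is secured, everything else is bookkeeping: symmetry of $B$ on $n$-forms, the observation $PD(\gamma^+_l)=e^{F+\bar F}\tilde\beta_l$, and tracking whether the transpose of $\mathcal{I}$ or $\mathcal{I}$ itself appears on the right-hand side.
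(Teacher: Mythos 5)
Your derivation is correct, and it is precisely the dual-basis unwinding that the paper leaves implicit: the paper's entire ``proof'' of Proposition \ref{riebil} is the sentence that the formula is straightforward from Definition \ref{intermatrix}, so your write-up supplies exactly the bookkeeping the authors omit. Two remarks. First, the ``hard part'' you flag (that $\{\alpha_k\}$ and $\{\tilde\beta_l\}$ are bases of $\mathcal{H}^n$) can be closed more cheaply than by re-proving perfectness of the period pairing: the statement already presupposes that $\mathcal{I}=I^{-1}$ exists, and once $I$ is invertible your identity $I_{kl}=B(\tilde\beta_l,\alpha_k)$ forces the $\mu$ harmonic forms $\tilde\beta_l$ to be linearly independent, hence a basis, since $\dim\mathcal{H}^n=\mu$ equals the rank of $H_n(\C^n,F^{-\infty};\Z)$; if instead you insist on proving invertibility of $I$ itself (which the paper never does, just as it asserts invertibility of the period matrix $\mathcal{A}$ without proof), then the duality argument you sketch is indeed the standard route. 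Second, your caution about index placement is justified and in fact your computation is sharper than the printed statement: with the paper's convention $I_{kl}=\int_{\gamma_k^-}PD(\gamma_l^+)$ one gets $\int_{\C^n}w\wedge *w'=\sum_{k,l}\bigl(\int_{\gamma_k^-}e^{F+\bar F}w\bigr)\,\mathcal{I}_{lk}\,\bigl(\int_{\gamma_l^+}e^{-F-\bar F}*w'\bigr)$, i.e.\ the transpose of what is printed (harmless, but worth recording), and the last factor must carry the Hodge star as in your formula and in every later application of the proposition, since $e^{-F-\bar F}w'$ need not even be $d$-closed.
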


	\subsection{Two types of integrals: $\int_{\gamma^-}e^{F}A $ vs $\int_{\gamma^-}e^{F+\bar{F}}w$}\label{periodintegral}
	
	This subsection is dedicated to compare the two kinds of period integrals in the LG B-model
	$$\int_{\gamma^-}e^{F}A \quad\text{and}\quad\int_{\gamma^-}e^{F+\bar{F}}w,$$
	i.e., to prove Theorem \ref{lgcy1} (or Theorem \ref{132} below).
	
	Recall that we fix  a monomial representation $\{\phi_a\}_{a=0}^{\mu-1}$ of basis of $\Jac(f)$, where $\mu=\dim\Jac(f)$ and $\phi_0=1$. Moreover, we assume that $\w(\phi_a)\leq \w(\phi_b)$ if $a<b$ (See Definition \ref{quasi} for the definition of $\w$). 
	
Since $|u|$ is small, $\{\phi_a\}_{a=0}^{\mu-1}$ is still a representation of basis of $\Jac(F(\cdot,u)).$ Moreover, $A_a=\phi_adz_1...dz_n, w_a=S(A_a)$
	
	
	Recall that $\{e^{F+\bar{F}}w_a\}$ and  $\{e^{-F-\bar{F}}*w_a\}$ give a basis of $H^n(\C^n, F^{-\infty};\C)$ and $H^n(\C^n, F^{+\infty};\C)$ respectively, we consider
	$$\A_{ka}=\int_{\gamma_k^-}e^{F+\bar{F}}w_a, \quad \tilde{\A}_{ka}=\int_{\gamma_k^+}e^{-F-\bar{F}}* w_a.$$

	
	Since $\{e^FA_a\}$ and $\{e^{-F}* A_a\}$
	are also $d$-closed $n$-forms, which also give a basis of $H^n(\C^n, F^{-\infty};\C)$ and $H^n(\C^n, F^{+\infty};\C)$ respectively. We can also consider the following integral
	$$\B_{ka}=\int_{\gamma_k^-}e^FA_a,\quad \tilde{\B}_{ka}=\int_{\gamma_k^+}e^{-F}* A_a.$$

	Considering the pairing of these two types of integrals, one should expect the following Riemann bilinear relations.
	
	\begin{prop}\label{riebil1}
		If $w$ is harmonic, then
		$$\int_{\C^n}e^{-\bar{F}}A_a\wedge*w=\sum_{k,l}\left(\int_{\gamma_k^-}e^{F}A_a\right)(\I)_{kl}\left(\int_{\gamma_l^+}e^{-F-\bar{F}}w\right).$$
	\end{prop}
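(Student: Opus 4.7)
The plan is to reduce the identity to the Riemann bilinear formula (Proposition~\ref{riebil}) by replacing the non-harmonic form $e^{-\bar{F}}A_{a}$ by a harmonic form $W_{a}\in\ker\Delta_{F}$ sharing its periods against the Lefschetz thimbles and its $\eta$-pairings against all harmonic forms. Two observations drive the reduction. Since $A_{a}$ is a top-degree holomorphic $n$-form and $F$ is holomorphic, $d(e^{F}A_{a})=\partial F\wedge e^{F}A_{a}=0$, so $[e^{F}A_{a}]\in H^{n}(\C^{n},F^{-\infty};\C)$. Moreover, the Poincar\'e dual $PD(\gamma_{k}^{-})=e^{-F-\bar{F}}*\alpha_{k}$ is itself $d$-closed, a standard consequence of the harmonicity condition $d_{F}^{\dagger}\alpha_{k}=0$.

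The crux of the proof is the identity
\begin{equation}\label{eq:keypd}
\int_{\gamma_{k}^{-}}e^{F}A_{a}=\int_{\C^{n}}e^{F}A_{a}\wedge PD(\gamma_{k}^{-})=\int_{\C^{n}}e^{-\bar{F}}A_{a}\wedge*\alpha_{k}.
\end{equation}
The second equality is the definition of $PD(\gamma_{k}^{-})$. For the first, both functionals $\alpha\mapsto\int_{\gamma_{k}^{-}}\alpha$ and $\alpha\mapsto\int_{\C^{n}}\alpha\wedge PD(\gamma_{k}^{-})$ descend to well-defined linear functionals on $H^{n}(\C^{n},F^{-\infty};\C)$---the second because $PD(\gamma_{k}^{-})$ is $d$-closed and decays exponentially on $F^{+\infty}$, so integration by parts annihilates $d$-exact inputs with appropriate decay on $F^{-\infty}$---and they agree on harmonic representatives $\alpha=e^{F+\bar{F}}w$ by Definition~\ref{intermatrix}. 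Next, I would define $W_{a}\in\ker\Delta_{F}$ to be the unique harmonic form satisfying $\int W_{a}\wedge*v=\int e^{-\bar{F}}A_{a}\wedge*v$ for every $v\in\ker\Delta_{F}$; existence and uniqueness follow from the non-degeneracy of $\eta$ on the finite-dimensional space $\ker\Delta_{F}$. Specialising to $v=\alpha_{k}$ and invoking \eqref{eq:keypd} yields $\int_{\gamma_{k}^{-}}e^{F+\bar{F}}W_{a}=\int W_{a}\wedge*\alpha_{k}=\int e^{-\bar{F}}A_{a}\wedge*\alpha_{k}=\int_{\gamma_{k}^{-}}e^{F}A_{a}$.

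Applying Proposition~\ref{riebil} to the harmonic pair $W_{a}$, $w$ and substituting the period identity just obtained gives
\[\int W_{a}\wedge*w=\sum_{k,l}\Bigl(\int_{\gamma_{k}^{-}}e^{F+\bar{F}}W_{a}\Bigr)\I_{kl}\Bigl(\int_{\gamma_{l}^{+}}e^{-F-\bar{F}}w\Bigr)=\sum_{k,l}\Bigl(\int_{\gamma_{k}^{-}}e^{F}A_{a}\Bigr)\I_{kl}\Bigl(\int_{\gamma_{l}^{+}}e^{-F-\bar{F}}w\Bigr),\]
while the defining property of $W_{a}$ applied to the harmonic $v=w$ produces $\int W_{a}\wedge*w=\int e^{-\bar{F}}A_{a}\wedge*w$, completing the argument. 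The principal technical obstacle is analytic: because $e^{-\bar{F}}A_{a}$ grows like $e^{|F|}$, absolute integrability of all pairings, vanishing of boundary terms in the exhaustion argument, and the applicability of Stokes on the non-compact thimbles $\gamma_{k}^{-}$ (where $\Re F\to-\infty$) rely crucially on the Agmon estimates in Lemma~\ref{epde12}, which furnish decay of harmonic forms at a rate strictly dominating the growth of $e^{\pm F}$.
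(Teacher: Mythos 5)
Your reduction has the right shape, but it hides the entire analytic content of the proposition in the unproved first equality of \eqref{eq:keypd}. The defining property of $PD(\gamma_k^-)$ (Definition \ref{intermatrix}) only says $\int_{\gamma_k^-}e^{F+\bar F}w=\int_{\C^n}e^{F+\bar F}w\wedge PD(\gamma_k^-)$ for \emph{harmonic} $w$; to apply it to the class of $e^{F}A_a$ you assert that both functionals ``descend to $H^n(\C^n,F^{-\infty};\C)$'' and agree there. Making that precise requires exhibiting a primitive: one must write $e^{-\bar F}A_a=w'+d_{2\Re(F)}\beta$ with $w'$ harmonic and with $\beta$ controlled in two distinct ways --- $e^{F+\bar F}\beta$ must decay exponentially on the cone containing the thimbles (so Stokes on the non-compact $\gamma_k^-$ has no boundary term), and $\beta$ must grow strictly slower than $e^{\rho}$ globally (so $\int d_{2\Re(F)}\beta\wedge*w=\int\beta\wedge*d_{2\Re(F)}^{\dagger}w=0$ is a legitimate integration by parts against Agmon-decaying harmonic forms). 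The Agmon estimate you cite (Lemma \ref{epde12}) gives decay of \emph{harmonic} forms only; it says nothing about the primitive $\beta$, and there is no soft way to get it: $e^{-\bar F}A_a$ is not $L^2$ (it grows like $e^{|\Re F|}$ exactly in the directions swept out by $\gamma_k^-$), so the standard $L^2$ Hodge decomposition for $\Delta_{2\Re(F)}$ cannot be invoked on it. This is precisely why the paper's proof occupies Appendix B: Lemma \ref{decom} constructs the decomposition by introducing the auxiliary weight $\tilde f$ of Lemma \ref{tff4} and iterating Hodge decompositions for a chain of twisted differentials $d_{tw,k}=e^{-k\tilde f/4}\circ d_{tw,0}\circ e^{k\tilde f/4}$, peeling off the weight in six steps while keeping Agmon control of each correction term.

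Once such a decomposition is granted, your argument is essentially the paper's: your Riesz representative $W_a$ is nothing but the harmonic part $w'$, and the two bullet points after Lemma \ref{decom} are exactly your two claims that the thimble periods and the $\eta$-pairings of $e^{-\bar F}A_a$ agree with those of a harmonic form, after which Proposition \ref{riebil} finishes the proof. So the proposal is not wrong in outlook, but as written it assumes the key lemma rather than proving it; you need to either reproduce the construction of Lemma \ref{decom} (or an equivalent weighted-Hodge-theoretic argument producing a primitive with the stated decay/growth) before the identity $\int_{\gamma_k^-}e^FA_a=\int_{\C^n}e^FA_a\wedge PD(\gamma_k^-)$ and the equality $\int W_a\wedge *w=\int e^{-\bar F}A_a\wedge *w$ can be used.
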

	Proposition \ref{riebil1} is actually nontrivial. We will prove this proposition in the appendix.

	Since $\A$ and $\widetilde{\A}$ are invertible for $u\in M$, there exist matrix-valued functions $\T$ and $\widetilde{\T}$, such that $\B=\T\A$ and $\widetilde\B=\widetilde\T\tilde\A.$ More specifically,
	
	\[\int_{\gamma_k^-}e^{F}A_a=\sum_a(\T)_{ab}\int_{\gamma_k^-}e^{F+\bar{F}}w_b.\]

	
	\begin{thm}
		\label{132}
		$\T$ and $\tilde{\T}$ are (block) unit lower triangular matrix. More explicitly,
		\begin{align*}
		(\T)_{aa}&=1 \quad\mbox{ for $0\leq a\leq \mu-1$};\\
		(\T)_{ab}&=0 \quad\mbox{ if $a<b$};\\
		(\T)_{ab}&=\begin{cases}
		&0, \quad\mbox{ if $\frac{l(A_a)-l(A_b)}{n}\notin \mathbb{Z}^+$};\\
		&\frac{(-1)^{l_{ab}}}{l_{ab}!}\sum_{l_c=l_b}\int_{\C^n}\bar{F}^{l_{ab}}A_a\wedge*\bar{w}_c h^{cb}, \quad\mbox{ if $l_{ab}:=\frac{l(A_a)-l(A_b)}{n}\in \mathbb{Z}^+$.}
		\end{cases}
		\end{align*}
		
		Here $(h^{ab})$ is the inverse of $(h_{ab})$, $h_{ab}:=h(w_a,w_b).$
		
		In particular,
		\[\int_{\gamma_k^-}e^{F+\bar{F}}w_0=\int_{\gamma_k^-}e^Fdz_1\wedge\cdots\wedge dz_n,\]
		\[\int_{\gamma_k^+}e^{-F-\bar{F}}*w_0=\int_{\gamma_k^+}e^{-F}*dz_1\wedge\cdots\wedge dz_n,\]
	\end{thm}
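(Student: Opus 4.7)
The plan is to combine the algebraic identity $e^F = e^{F+\bar F}e^{-\bar F}$ with a $U(1)$-equivariant Hodge-type decomposition. Expanding the exponential as a power series and integrating over $\gamma_k^-$ gives
\[
\int_{\gamma_k^-} e^F A_a \;=\; \sum_{k\geq 0}\frac{(-1)^k}{k!}\int_{\gamma_k^-}e^{F+\bar F}\bar F^k A_a,
\]
so the task reduces to expressing each term on the right in the basis $\{\int_{\gamma_k^-}e^{F+\bar F}w_b\}$.

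I would next extend the $L^2$-Hodge decomposition of Theorem~\ref{Hodgethm} to the polynomially-growing form $\bar F^k A_a$, using a cut-off construction analogous to the one defining $S$ in Proposition~\ref{spectral}, to write
\[
\bar F^k A_a \;=\; H_k \,+\, d_F\mu_k,
\]
with $H_k$ a $\Delta_F$-harmonic $n$-form and $\mu_k$ an $(n-1)$-form of polynomial growth. Since the $U(1)$-action $\P$ of Lemma~\ref{Atheta} commutes with $\bar\partial_F$ and multiplication by $\bar F$ raises the $\P$-charge by $n$, the form $\bar F^k A_a$ carries $\P$-charge $l_a - kn$, forcing $H_k$ to lie in the span of $\{w_b : l_b = l_a - kn\}$. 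Under the period integral, the $d_F$-exact remainder contributes nothing, since
\[
\int_{\gamma_k^-}e^{F+\bar F}\,d_F\mu_k \;=\; \int_{\gamma_k^-}d\bigl(e^{F+\bar F}\mu_k\bigr) \;=\; 0
\]
by Stokes' theorem --- $e^{F+\bar F}$ decays exponentially at the boundary at infinity of $\gamma_k^-$ against the polynomial growth of $\mu_k$, so the Agmon estimate (Lemma~\ref{epde12}) justifies the vanishing of the boundary contribution.

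Assembling the pieces produces $\int_{\gamma_k^-}e^F A_a = \sum_b T_{ab}\int_{\gamma_k^-}e^{F+\bar F}w_b$, with the $U(1)$-matching argument forcing at most the single index $k = l_{ab}$ to contribute to each $T_{ab}$ and hence $T_{ab} = 0$ whenever $(l_a - l_b)/n \notin \Z_{\geq 0}$. For the diagonal blocks, writing $A_a = w_a - \bar\partial_F \nu_a = w_a + \partial_F \nu_a - d_F \nu_a$ isolates $T_{aa}=1$ and $T_{ab}=0$ within a same-charge block (since $\partial_F \nu_a$ has strictly lower $\P$-charge by Lemma~\ref{Atheta}, so it projects to zero in the $l_b = l_a$ subspace); this already establishes the claimed block unit lower triangular structure and the ``in particular'' identity $\int_{\gamma_k^-}e^{F+\bar F}w_0 = \int_{\gamma_k^-} e^F dz_1\wedge\cdots\wedge dz_n$. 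For $l_{ab}\in\Z^+$, the coefficient of $w_b$ inside $H_{l_{ab}}$ is extracted by Hermitian pairing with $\bar w_b$: because $\bar w_b$ is $\Delta_F$-harmonic, $h(d_F\mu_{l_{ab}}, w_b) = 0$, and a direct computation of the pairing produces the explicit integral formula stated in the theorem. The analogous assertions for $\tilde\T$ are proved symmetrically via the identity $e^{-F} = e^{-F-\bar F}e^{\bar F}$ over $\gamma^+$.

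The principal analytic obstacle will be rigorously extending the Hodge decomposition of Theorem~\ref{Hodgethm} to polynomially-growing forms and controlling the growth of $\mu_k$ quantitatively enough to justify the Stokes argument at infinity; both of these rely on the Agmon-type exponential-decay estimates developed in the appendix, and once they are in place the remainder of the proof is a clean application of $U(1)$-invariance together with Hermitian pairing.
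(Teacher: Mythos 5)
Your strategy---expand $e^{-\bar F}$ in powers of $\bar F$ and use $U(1)$-charge matching to isolate the one surviving power---is morally parallel to the paper's extraction of a Taylor coefficient, but the step everything rests on fails. The proposed decomposition $\bar F^k A_a = H_k + d_F\mu_k$ with $H_k$ harmonic cannot exist for $k\geq 1$: the right-hand side is automatically $d_F$-closed, whereas
\begin{equation*}
d_F\left(\bar F^{k}A_a\right)=\left(k\bar F^{k-1}+\bar F^{k}\right)\bar{\partial}\bar F\wedge A_a\neq 0,
\end{equation*}
so the individual terms of your expansion do not represent twisted de Rham classes, and no cut-off or Agmon-type refinement of Theorem \ref{Hodgethm} can produce such a splitting. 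A true Hodge decomposition of $\bar F^{k}A_a$ necessarily contains a co-exact piece $d_F^{\dagger}\rho_k$, whose period over $\gamma_k^-$ has no reason to vanish. The same problem infects your diagonal-block argument: under the thimble integral there is no $L^2$ projection available, and $\partial_F\nu_a$ is neither harmonic nor $d_F$-closed, so asserting that it ``projects to zero in the $l_b=l_a$ subspace'' conflates a period integral with an orthogonal projection. Only the full sum $e^{-\bar F}A_a$ is closed (for $d_{2\Re(F)}$), and that form grows exponentially on $\{\Re F<0\}$, i.e.\ exactly along the thimbles, so it lies far outside any ``polynomial-growth'' extension of the $L^2$ theory; this is precisely the hard analytic content, not a routine extension.

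The paper circumvents this in two complementary ways. In the main text it never decomposes $e^{-\bar F}A_a$ under the thimble integral: the Riemann bilinear formulas (Propositions \ref{riebil} and \ref{riebil1}) convert the unknown entries of $\T$ into the $L^2$ pairings $h(e^{-\bar F}A_a,w_c)$, the unitarity of $\P$ gives the scaling identity $h(e^{-\bar F}A_a,w_c)=t^{\,l_c-l_a}h(e^{-t^{n}\bar F}A_a,w_c)$, and the Agmon estimate (Lemma \ref{Agmonhol}) shows that $s\mapsto h(e^{-s\bar F}A_a,w_c)$ is holomorphic on $|s|<2$; an analytic continuation argument then kills the entries with $(l_c-l_a)/n\notin\Z$ or $l_c>l_a$ and identifies the surviving ones by differentiating at $s=0$, after which one inverts the nondegenerate charge-block Gram matrices $(h(w_b,w_c))_{l_b=l_c}$---a step your write-up also needs but does not address. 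The decomposition you want, $e^{-\bar f}A_a=w'+d_{2\Re(f)}\beta$ with quantitative growth control on $\beta$, is exactly what Lemma \ref{decom} in Appendix B proves, and it requires the auxiliary weight $\tilde{f}$ and a chain of intermediate twisted Laplacians $d_{tw,k}$, not merely the cut-off construction of Proposition \ref{spectral}. So the charge-matching skeleton of your argument is sound, but the term-by-term decomposition it relies on is false, and repairing it forces you onto either the paper's bilinear-formula route or its appendix-level analysis.
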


	\begin{proof}
		By Riemann bilinear formula (Proposition \ref{riebil1}), on the one hand,
		\begin{equation}\label{rieb1}
		\left(\int_{\gamma_k^-}e^{F}A_a\right)\I_{kl}\overline{\left(\int_{\gamma_l^+}e^{-F-\bar{F}}*{w}_c\right)}=\int_{\C^n}e^{-\bar{F}}A_a\wedge*\overline{w_c}=h(e^{-\bar{F}}A_a,w_c).
		\end{equation}
		On the other hand,
		\begin{equation}\label{rieb2}
		\begin{aligned}
		&\int_{\gamma_k^-}e^FA_a\I^{kl}\overline{\left(\int_{\gamma_l^+}e^{-F-\bar{F}}* w_c\right)}\\
		=~&\left(\sum_b\T_{ab}\int_{\gamma_k^-}e^{F+\bar{F}}w_b\right)\I_{kl}\overline{\left(\int_{\gamma_l^+}e^{-F-\bar{F}}* w_c\right)}\\
		=~&\sum_b\T_{ab}\int_{\C^n}w_b\wedge*\overline{w_c}\\
		=~&\sum_{b:l_b=l_c}\T_{ab}h(w_b,w_c).
		\end{aligned}
		\end{equation}
		
		Let $t=e^{i\theta}$, then
		\begin{equation} \label{htheta1}h(e^{-\bar{F}}A_a,w_c)=h(\P(\theta)e^{-\bar{F}}A_a,\P(\theta)w_c)=t^{l_c-l_a}h(e^{-t^n\bar{F}}A_a,w_c).
		\end{equation}
		
		\begin{itemize}
			\item If $\frac{l_c-l_a}{n}\notin \Z$, then take $t$ to be the $n$-th root of $1$, such that $t^{l_c-l_a}\neq1$, by (\ref{htheta1}), we see that
			$$h(e^{-\bar{F}}A_a,w_c)=0.$$
			This together with (\ref{rieb1}), (\ref{rieb2}) and the fact that the matrix $(h(w_b,w_c))_{l_b=l_c}$ is invertible imply that
			$$(\T)_{ab}=0,\quad\text{if }\frac{l_a-l_b}{n}\notin\Z.$$
			\item If $l_c-l_a=nl_{ca}$ for some integer $l_{ca}$, then we consider the function
			$$v_{ac}(s)=h(e^{-s\bar{F}}A_a,w_c).$$
			By Lemma \ref{Agmonhol} below, it is holomorphic in $s$ when $|s|<2$. Then
			\begin{itemize}
				\item if $l_{ca}>0$: when $|s|=1$, by (\ref{htheta1}) $s^{l_{ca}}v_{ac}(s)=v_{ac}(1)$, which implies $v_{ac}(s)=s^{-l_{ca}}v_{ac}(1)$ when $|s|=1$. However, by the properties of holomorphic function, we must have $v_{ac}(s)=s^{-l_{ca}}v_{ac}(1)$ when $0<|s|<2.$ Since $v_{ac}$ is holomorphic function on $|s|<2$, we must have $v_{ac}(1)=h(e^{-\bar{F}}A_a,w_c)=0.$
				$$\T_{ab}=0,\quad \mbox{ if } l_b=l_c.$$
				\item if $l_{ca}\leq0$: if follows from the same argument as above that $v_{ac}(s)=s^{-l_{ca}}v_{ac}(1)$ when $|s|<2$. Applying $\left(\frac{d}{ds}\right)^{-l_{ca}}\bigg|_{s=0}$ to both sides, we obtain
				$$h(e^{-\bar{F}}A_a,w_c)=\frac{1}{(-l_{ca})!}\int_{\C^n}\bar{F}^{-l_{ca}}A_a\wedge*\bar{w}_c.$$
				
				In particular, when $l_{ca}=0$, $$h(e^{-\bar{F}}A_a,w_c)=h(A_a,w_c)=h(w_a,w_c)-h(\bar{\p}_{F(\cdot,u)}\nu_a,w_c)=h(w_a,w_c).$$
			\end{itemize}
		\end{itemize}
		
		As a result, Theorem \ref{132} follows.
		
	\end{proof}

	\begin{lem}\label{Agmonhol}
		The function $v_{ac}(s)=\int_{\C^n}e^{-s\bar{F}}A_a\wedge*\bar{w}_c$ is holomorphic on the disk $|s|<2.$
	\end{lem}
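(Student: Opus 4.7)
The plan is to show that $v_{ac}$ is holomorphic on $|s|<2$ by combining pointwise holomorphicity of the integrand with a uniformly integrable dominating function, then invoking Morera's theorem, with the exchange of integration orders justified by Fubini. For each fixed $z\in\C^n$ the integrand $e^{-s\bar{F}(z)}\,A_a(z)\wedge *\bar{w}_c(z)$ is manifestly entire in $s$, since only the scalar factor $e^{-s\bar{F}(z)}$ depends on $s$ and it is entire. So the only real task is to control the integral over $\C^n$ uniformly in $s$ on compact subsets of the disk $\{|s|<2\}$.

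For this uniform control I would invoke the Agmon estimate cited in the appendix (Lemma~\ref{epde12}) and in \cite{DY2020cohomology}, applied to the $L^2$-harmonic form $w_c\in\ker\Delta_{F(\cdot,u)}$. Since $F$ is non-degenerate of degree $n$, these estimates yield exponential decay of $|w_c(z)|$ (and hence of $|*\bar{w}_c(z)|$) controlled in terms of the superpotential: roughly $|w_c(z)|\lesssim (1+|z|)^N e^{-\alpha|F(z)|}$ with a decay rate $\alpha$ large enough to absorb $|e^{-s\bar{F}(z)}|\leq e^{|s||F(z)|}$ for every $|s|<2$. Combined with the polynomial growth of $A_a=\phi_a\,dz_1\wedge\cdots\wedge dz_n$, this produces a dominant function in $L^1(\C^n)$ uniform in $s$ on any compact $K\subset\{|s|<2\}$.

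With a uniformly integrable dominant in hand, Fubini justifies the exchange
\[
\oint_{\partial\Delta} v_{ac}(s)\,ds \;=\; \int_{\C^n}\Bigl(\oint_{\partial\Delta} e^{-s\bar{F}(z)}\,ds\Bigr) A_a(z)\wedge *\bar{w}_c(z) \;=\; 0
\]
for any closed triangle $\Delta\subset K$, the inner $s$-integral vanishing by Cauchy's theorem applied to the entire function $s\mapsto e^{-s\bar{F}(z)}$. Morera's theorem then concludes that $v_{ac}$ is holomorphic on $|s|<2$.

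The hard part will be verifying that the Agmon decay rate is sharp enough to dominate $e^{|s||F|}$ precisely out to $|s|<2$. This radius is natural in view of Proposition~\ref{riebil}, whose proof already rests on integrability of $e^{F+\bar{F}}|w_c|^2$; morally $|w_c|$ decays like $e^{-(F+\bar{F})/2}$ against the Lebesgue measure, which is what admits the factor of two in the window $|s|<2$. Turning this heuristic into a genuine pointwise dominating bound, rather than merely an integrated $L^2$ statement, is the technical core of the argument, and is precisely where the Agmon machinery of \cite{DY2020cohomology} enters.
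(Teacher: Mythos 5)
Your overall scaffolding (pointwise holomorphy of the integrand in $s$, a dominating function uniform on compacts of $\{|s|<2\}$, then Fubini/Morera) matches the paper's strategy, but the decisive estimate is exactly the step you defer, and the heuristic you offer in its place does not support it. You assert that the Agmon estimates give $|w_c(z)|\lesssim (1+|z|)^N e^{-\alpha|F(z)|}$ with $\alpha$ "large enough" to absorb $e^{|s||F|}$ for every $|s|<2$; but for that you need $\alpha$ arbitrarily close to $2$, and this is the whole content of the lemma. The paper obtains it as follows: reduce to $u=0$, note $\{\bar\p_f,\bar\p_f^\dagger\}=\tfrac12\{d_{2\Re(f)},d_{2\Re(f)}^\dagger\}$ so that $w_c$ is harmonic for the Witten Laplacian of $2\Re(f)$, and apply Lemma \ref{epde12} to get $|w_c|\le C_b e^{-b\rho}\|w_c\|_{L^2}$ for \emph{every} $b\in(0,1)$, where $\rho$ is the Agmon distance for the potential $2|\nabla\Re(f)|^2$. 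The key observation is then that for holomorphic $f$ one has $|\nabla\,2\Re(e^{i\theta}f)|=|\nabla\,2\Re(f)|$ for all $\theta$, so Lemma \ref{propagm}(2) gives $\rho(z)\ge 2|\Re(e^{i\theta}f)(z)|$ for every $\theta$, hence $\rho\ge 2|f|$. Consequently, for $|s|\le 2a<2b<2$ one has $|e^{-s\bar f}|\le e^{2a|f|}\le e^{a\rho}$, and the remaining margin $e^{-(b-a)\rho}$ (with $\rho\gtrsim|z|^n$) absorbs the polynomial growth of $A_a$ and yields the uniform $L^1$ bound. Without this $\rho\ge 2|f|$ input, "large enough $\alpha$" is unsubstantiated.

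Moreover, the heuristic you give for the radius $2$ is misleading: "$|w_c|$ decays like $e^{-(F+\bar F)/2}$" means $e^{-\Re F}$, which blows up in directions where $\Re F<0$ and therefore cannot dominate $e^{|s||F|}$ in all complex directions; the correct mechanism is decay in the \emph{modulus} $|f|$ (through the phase-invariance of the Agmon distance noted above), not in $\Re F$. Likewise, Proposition \ref{riebil} is a statement about pairings of period integrals over Lefschetz thimbles, not an integrability statement of $e^{F+\bar F}|w_c|^2$ over $\C^n$, so it cannot serve as the source of the needed bound. The Morera/Fubini conclusion and the reduction to a fixed $u$ are fine and standard; the gap is the missing proof that the Agmon decay rate reaches (arbitrarily close to) $2|f|$.
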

	\begin{proof}
		Without loss of generality, it suffices to prove it when $u=0$. Note that $$\{\bar{\p}_{f},\bar{\p}_{f}^{\dagger}\}=\frac{1}{2}\{d_{2\Re(f)},d_{2\Re(f)}^{\dagger}\}.$$ 
		By Agmon estimate, there exists $C_b>0$ for any $b\in(0,1)$, such that
		\[|w_c|\leq C_be^{-b\rho}\|w_c\|_{L^2},\]
		where $\rho(z)$ is the Agmon distance between $z$ and $0$ with respect to Agmon metric $2|\nabla \Re(f)|^2g_0$ (See appendix for more details). It follows from the properties of Agmon distance and holomorphic functions that $\rho(z)\geq 2|\Re(e^{i\theta}f)|$ for any $\theta\in\R$. As a result, for any $a<b$,
		if $|s|\leq 2a$,
		\begin{align*}
		\int_{\C^n}e^{-s\bar{f}}A_a\wedge*\bar{w}_c&\leq C\int_{\C^n} e^{2a\Re(\frac{sf}{|s|})}e^{-b\rho}\mathrm{dvol}_{\C^n}\leq C\int_{\C^n} e^{-(b-a)\rho}\mathrm{dvol}_{\C^n}<\infty.
		\end{align*}
		Hence $v_{ac}(s)$ is holomorphic when $|s|<2.$
	\end{proof}

	\section{Calabi-Yau/Landau-Ginzburg Correspondence}\label{five}

	In this section, we assume that
	$$f:\C^n\rightarrow\C\quad n\geq3$$
	is a non-degenerate homogeneous polynomial of degree $n$. Regarding $[z_1,z_2,\ldots,z_n]$ as the homogeneous coordinate in $\mathbb{P}^{n-1}$, the zero set of $f$, denoted by $X_{f}$, defines a Calabi-Yau hypersurface in $\mathbb{P}^{n-1}$.

	Now consider the marginal deformation $F=f+\sum_{i=1}^s u_i\psi_i$ of $f$, then for $|u|$ small enough, $F$ determines a family of Calabi-Yau hypersurface $X_{F(\cdot,u)}$ parametrized by $u\in M$, which gives a deformation of $X_f$.

	\subsection{Residue maps}

	\def\res{\operatorname{res}}
	For each $\phi_a\in \Jac(F)'$, $\frac{l_a}{n}\in \Z^+$. Then set $n_a:=\frac{l_a}{n}$. Let
	$$\Omega:=\sum_{i=1}^n(-1)^{i+1}z_{i} d z_{1} \wedge \ldots \widehat{dz_i}\ldots \wedge d z_{n},$$
	and denote $$\Omega_a:=\frac{\phi_a\Omega}{F(\cdot,u)^{n_a}}.$$ Then one can see that $\Omega_a\in H^{n-1}(\mathbb{P}^{n-1}-X_{F(\cdot,u)})$.
	
	Let $\tau: H_{n-2}(X_{F(\cdot,u)})\to H_{n-1}(\mathbb{P}^{n-1}-X_{F(\cdot,u)})$ be the Leray coboundary map, one can define the residue map $\res:H^{n-1}(\mathbb{P}^{n-1}-X_{F(\cdot,u)})\to H^{n-2}(X_{F(\cdot,u)})$ by
	\[2\pi i\int_{\delta}\res(\Omega_a)=\int_{\tau(\delta)}\Omega_a.\]

	It was shown that (see, for example, \cite{carlson1980infinitesimal})
	
	\begin{prop} The residue map
		$$\res: H^{n-1}(\mathbb{P}^{n-1}-X_{F(\cdot,u)},\mathbb{C})\longrightarrow H^{n-2}(X_{F(\cdot,u)},\C)$$
		is injective, with the image being the $(n-2)$-th primitive cohomology of $X_{F(z,u)}$. Moreover, the residue map preserves the Hodge structure. That is,
		\begin{equation}\label{fil}\res(\Omega_a)\in \F^{n-n_a-1}H^{n-2}(X_{F(\cdot,u)})_{\prim}.
		\end{equation}
		In particular, $\res(\Omega_0)$ is a nowhere vanishing holomorphic $(n-2,0)$ form on $X_{F(\cdot,u)}$.
	\end{prop}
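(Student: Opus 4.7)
The plan is to reduce the proposition to the classical Carlson--Griffiths statement already recorded as Proposition~\ref{res}, applied fiberwise to the family of Calabi--Yau hypersurfaces $\{X_{F(\cdot,u)}\}_{u\in M}$. Since $f$ is a non-degenerate homogeneous polynomial of degree $n$ and $|u|$ is small, each $X_{F(\cdot,u)}$ remains a smooth hypersurface in $\mathbb{P}^{n-1}$, so the Hodge-theoretic statements carry through uniformly in $u$; no new analytic input beyond what is already in the cited papers is required.

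For injectivity and identification of the image with the primitive cohomology, this is Proposition~\ref{res}(1) verbatim. As a conceptual check I would extract it from the Poincar\'e residue/Gysin long exact sequence
\[
H^{n-1}(\mathbb{P}^{n-1}) \longrightarrow H^{n-1}(\mathbb{P}^{n-1} - X_{F(\cdot,u)}) \xrightarrow{\;\res\;} H^{n-2}(X_{F(\cdot,u)}) \xrightarrow{\;\iota_{*}\;} H^{n}(\mathbb{P}^{n-1}),
\]
where the leftmost map contributes nothing (a holomorphic form on $\mathbb{P}^{n-1}$ restricts to the complement but carries no residue along $X_{F(\cdot,u)}$) and $\ker(\iota_{*}) = H^{n-2}(X_{F(\cdot,u)})_{\prim}$ by the Lefschetz decomposition together with hard Lefschetz.

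For the filtration statement \eqref{fil}, I would translate the $U(1)$-charge of $\phi_a$ into its polynomial degree: since $Q_i = 1$ and $l_a = n\,n_a$, we have $\deg \phi_a = l_a - n = n(n_a - 1)$. Plugging $\phi = \phi_a$ and $k = n_a$ into Proposition~\ref{res}(2) immediately gives $\res(\Omega_a) \in \F^{n - n_a - 1} H^{n-2}(X_{F(\cdot,u)})_{\prim}$. Setting $\phi_0 = 1$ and $n_0 = 1$ then yields
\[
\res(\Omega_0) \in \F^{n-2} H^{n-2}(X_{F(\cdot,u)})_{\prim} = H^{n-2,0}(X_{F(\cdot,u)}),
\]
so $\res(\Omega_0)$ is a holomorphic $(n-2,0)$-form.

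It remains to verify that $\res(\Omega_0)$ vanishes nowhere. I would argue by adjunction: because $\deg F = n = \dim\mathbb{P}^{n-1} + 1$, the canonical bundle $K_{X_{F(\cdot,u)}} = (K_{\mathbb{P}^{n-1}} + X_{F(\cdot,u)})|_{X_{F(\cdot,u)}}$ is trivial, and $\Omega/F$ is a global meromorphic section of $K_{\mathbb{P}^{n-1}}(X_{F(\cdot,u)}) \cong \mathcal{O}_{\mathbb{P}^{n-1}}$ with a simple pole along $X_{F(\cdot,u)}$, hence its Poincar\'e residue generates $K_{X_{F(\cdot,u)}}$ everywhere. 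Equivalently, a direct local computation in an affine chart where $\partial_i F \neq 0$ gives $\res(\Omega_0) = \pm dz_1 \wedge \cdots \widehat{dz_i} \cdots \wedge dz_{n-1}/\partial_i F$, which is pointwise nonzero on $X_{F(\cdot,u)}$. The main (very mild) obstacle here is bookkeeping: aligning the $U(1)$-charge convention with the pole order $n_a$ and with the filtration index, so that the hypothesis of Proposition~\ref{res}(2) is met in the exact form used.
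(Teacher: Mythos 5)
Your proposal is correct and matches the paper's approach: the paper offers no independent argument for this proposition, simply citing Carlson--Griffiths (i.e.\ Proposition \ref{res}) applied fiberwise to the smooth hypersurfaces $X_{F(\cdot,u)}$, and your degree bookkeeping $\deg\phi_a=l_a-n=n(n_a-1)$ together with the adjunction/local-residue computation for the nowhere vanishing of $\res(\Omega_0)$ is exactly the standard way to fill in the details. One small touch-up: in your Gysin-sequence check, the vanishing of $H^{n-1}(\mathbb{P}^{n-1};\C)\to H^{n-1}(\mathbb{P}^{n-1}-X_{F(\cdot,u)};\C)$ (which is what actually gives injectivity, since Proposition \ref{res}(1) as stated only identifies the image) should be justified by noting that $H^{n-1}(\mathbb{P}^{n-1};\C)$ is either zero or spanned by a power of the hyperplane class $h$, and $h$ restricts to zero on the complement because $n\,h=[X_{F(\cdot,u)}]$ is trivialized there by $F$ --- rather than by the informal ``no residue'' remark.
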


	\begin{defn}\label{resmap}
		We define the map $R_u:\Jac(F)'\to H^{n-2}(X_F)$ via
		\[R_u(\phi_a):=\res\left(\Omega_a\right).\]
		In particular, $R_u(1)$ is a nowhere vanishing holomorphic $(n-2,0)$ form on $X_F$.
	\end{defn}



	\subsection{The correspondence between the period integrals}
	
	By Theorem \ref{132}, we also have
	\begin{equation}\label{haref}\int_{\gamma_k^-}e^{F+\bar{F}}w_a=\int_{\gamma_k^-}\left(e^FA_a-\sum_{b:l_b<l_a}(\T)_{ab}(u,\bar{u})e^FA_b\right).\end{equation}
	\begin{defn}
		We define the map $r_u:\H'\to H^{n-2}(X_F)_{\prim}$ via
		\[r_u(w_a)=2\pi i\left((-1)^{n_a-1}(n_a-1)!R_u(\phi_a)-\sum_{b:l_b<l_a}(\T)_{ab}(u,\bar{u})(-1)^{n_b-1}(n_b-1)!R_u(\phi_b)\right).\]
	\end{defn}
	\begin{rem}\label{rsha}
		By (\ref{fil}), one can see that if $l(A)=l$ and $\frac{l}{n}\in\Z$, then
		$$r_u(S(A))\in \F^{n-1-\frac{l}{n}}H^{n-2}(X_F)_{\prim}.$$
	\end{rem}
	
	\begin{lem}\label{ruan}
		For $0\leq k\leq \mu'-1$, $\phi_a\in\Jac(F)'$, $A_a=\phi_adz_1\wedge\cdots\wedge dz_n$,
		\[\int_{\gamma_k^-}e^{F}A_a=2\pi i(-1)^{n_a-1}(n_a-1)!\int_{\delta_k}R_u(\phi_a),\]
		and
		\[\int_{\gamma_k^+}e^{-F}*A_a=2\pi i(-1)^{n_a}(n_a-1)!\int_{\delta_k}*R_u(\phi_a).\]
	\end{lem}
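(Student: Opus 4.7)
My strategy is a direct two-step computation: first, the explicit fibered description $\gamma_k^-=\bigcup_{t>0}(\Phi_t)^*\sigma_k$ reduces the $n$-dimensional thimble integral to a Gamma integral in $t$ times a classical period integral over $\sigma_k\subset V_{-1}$; second, the Carlson--Griffiths residue formula (Proposition~\ref{res}) converts the period into an integral of $R_u(\phi_a)$ over $\delta_k\subset X_F$, using the natural $n$-to-$1$ projection $\pi\colon V_{-1}\to\mathbb{P}^{n-1}\setminus X_F$, $z\mapsto[z]$.

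For the first step, I parameterize $\gamma_k^-$ by $(t,y)$ with $t>0$ and $y\in\sigma_k\subset V_{-1}$, via the embedding $\zeta=t^{1/n}y$. Homogeneity of $F$ and $\phi_a$ (of degrees $n$ and $n(n_a-1)$, respectively) yields $F(\zeta)=-t$ and $\phi_a(\zeta)=t^{n_a-1}\phi_a(y)$, while a direct computation shows
\[
dz_1\wedge\cdots\wedge dz_n\big|_{\gamma_k^-}=t\,dy_1\wedge\cdots\wedge dy_n+\tfrac{1}{n}\,dt\wedge\Omega(y),
\]
with the first summand vanishing for dimensional reasons when $y$ is restricted to the $(n-1)$-real-dimensional cycle $\sigma_k$. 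Integrating in $t$ gives the Gelfand--Leray reduction
\[
\int_{\gamma_k^-}e^FA_a=\frac{1}{n}\int_0^\infty e^{-t}t^{n_a-1}\,dt\,\int_{\sigma_k}\phi_a\Omega=\frac{(n_a-1)!}{n}\int_{\sigma_k}\phi_a\Omega.
\]

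For the second step, observe that $\Omega_a=\phi_a\Omega/F^{n_a}$ is $\mathbb{C}^*$-invariant and $F\equiv-1$ on $V_{-1}$, so $\pi^*\Omega_a=(-1)^{n_a}\phi_a\Omega$ on $V_{-1}$. In the compactification $\bar V_{-1}$ defined by $F+z_{n+1}^n=0$, the normal coordinate $w_{n+1}$ to $X_F$ in $\bar V_{-1}$ satisfies $w_{n+1}^n=-F(1,w)$, so $\bar\pi\colon\bar V_{-1}\to\mathbb{P}^{n-1}$ is ramified of index $n$ along $X_F$. Consequently the Leray tube $\sigma_k=\tau(\delta_k)\subset\bar V_{-1}$ pushes forward as $\pi_*\sigma_k=n\,\tau_{\mathbb{P}}(\delta_k)$, where $\tau_{\mathbb{P}}$ denotes the Leray coboundary in $\mathbb{P}^{n-1}\setminus X_F$. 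Applying Proposition~\ref{res} then gives
\[
\int_{\sigma_k}\phi_a\Omega=(-1)^{n_a}\!\int_{\pi_*\sigma_k}\Omega_a=n(-1)^{n_a}\cdot 2\pi i\!\int_{\delta_k}R_u(\phi_a),
\]
and combining with the first step produces the first identity, with the sign $(-1)^{n_a-1}$ fixed by the standard orientation convention on the Lefschetz thimble.

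The second identity for $\int_{\gamma_k^+}e^{-F}*A_a$ is proved along exactly the same lines: $\gamma_k^+$ fibers over $V_1$ (where $F\equiv+1$), which removes the sign $(-1)^{n_a}$ from the residue identification, and the two Hodge star operations are matched via the observation that $*\colon\Omega^{n,0}(\mathbb{C}^n)\to\Omega^{n,0}(\mathbb{C}^n)$ and the corresponding Hodge star on the one-dimensional $H^{n-n_a-1,\,n_a-1}$ piece on $X_F$ are both scalar multiplications, whose ratio accounts for the factor $(-1)^{n_a}$ on the right-hand side. The principal obstacle throughout is the meticulous bookkeeping of signs and orientations---for the Lefschetz thimbles, for the two Leray coboundaries on $\bar V_{-1}$ and on $\mathbb{P}^{n-1}$, for the ramified cover $\bar\pi$, and for the Hodge star conventions on $\mathbb{C}^n$ versus $X_F$---which is what produces the precise signs $(-1)^{n_a-1}$ and $(-1)^{n_a}$ stated in the lemma.
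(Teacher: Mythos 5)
Your route is genuinely different from the paper's. The paper reduces to $u=0$, fibres the thimble over the critical-value line to get $\int_{\gamma_k^-}e^{f}A_a=\int_0^\infty e^{-t}\int_{\sigma_k(t)}\frac{A_a}{df}\,dt$, rewrites the Gelfand--Leray integral as a tube integral of $\frac{A_a}{f+t}$, raises the pole order to $n_a$ by differentiating in $t$ (following Lemma A.2 of \cite{chiodo2014landau}), and then quotes Theorem 4.2 of \cite{Fan2020LGCYCB} to identify $\int_{T(\sigma_k(t))}\frac{A_a}{(f+t)^{n_a}}$ with $(2\pi i)^2\int_{\delta_k}R_0(\phi_a)$. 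You instead parameterize the thimble by $(t,y)\mapsto t^{1/n}y$, split off the Gamma integral $\int_0^\infty e^{-t}t^{n_a-1}dt=(n_a-1)!$, and identify $\int_{\sigma_k}\phi_a\Omega$ with $2\pi i\,(-1)^{n_a}n\int_{\delta_k}R_u(\phi_a)$ by observing that $\phi_a\Omega/F^{n_a}$ is basic, restricting it to $V_{-1}$, and pushing the tube $\sigma_k=\tau(\delta_k)$ through the degree-$n$ cover $\bar V_{-1}\to\mathbb{P}^{n-1}$ branched along $X_F$. This is legitimate and in fact more self-contained: it amounts to reproving the Fan--Lan--Yang identity instead of citing it, it works for every $u$ without the reduction to $u=0$ (since $F(\cdot,u)$ is still homogeneous of degree $n$), and the covering-space step $\bar\pi_*\tau_{\bar V_{-1}}(\delta_k)=n\,\tau_{\mathbb{P}}(\delta_k)$ is correct.

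Two points need repair. First, the sign. With the orientation you implicitly use ($dt\wedge\mathrm{or}(\sigma_k)$ on $\gamma_k^-$), your computation gives $2\pi i(-1)^{n_a}(n_a-1)!\int_{\delta_k}R_u(\phi_a)$, i.e.\ the opposite of the stated $(-1)^{n_a-1}$, and you cannot simply declare this "fixed by the standard orientation convention": the sign is the content of the lemma and is used downstream (e.g.\ $(\T)_{aa}=1$ in Theorem \ref{132} and the Weil--Peterson and $tt^*$ comparisons). The convention that reproduces the lemma is the Gelfand--Leray one implicit in the paper's first displayed equality, under which Euler's identity $dF\wedge\Omega=nF\,dz_1\wedge\cdots\wedge dz_n$ gives $\frac{A_a}{dF}\big|_{V_{-t}}=\frac{\phi_a\Omega}{nF}=-\frac{\phi_a\Omega}{nt}$, so the extra minus sign enters in your Step 1; you must fix one convention for $\gamma_k^{\pm}$, $\sigma_k$, $\delta_k$ and the two Leray tubes and track it, rather than adjust the answer at the end. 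Second, the $\gamma_k^+$ identity. Your matching of Hodge stars "as scalar multiplications whose ratio accounts for $(-1)^{n_a}$" does not work as stated: $*$ on $\C^n$ is indeed a constant on $(n,0)$-forms, but $R_u(\phi_a)$ is not of pure Hodge type --- it only lies in $\F^{n-n_a-1}H^{n-2}(X_F)_{\prim}$ --- so $*R_u(\phi_a)$ is not a scalar multiple of $R_u(\phi_a)$; the middle-dimensional primitive star acts by $i^{p-q}$ times a universal constant, which differs on the lower filtration pieces. To close this you must either control those lower pieces or prove the star comparison for the pure-type combinations $r_u(w_a)$, i.e.\ directly for the statement (\ref{periodlgcy0}) in Proposition \ref{rpqdegree} that is actually used later. (The paper's own proof is silent on the second identity, but your argument as written does not establish it.)
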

	\begin{proof}
		It suffices to compute at $u=0.$
		
		Let $\sigma_k(t)=(\Phi_T)^*\sigma_k\in H_n(V_{-t}).$ For each $t\neq 0$, let
		$$\tilde{c}_t(\theta):=-t+\frac{te^{i\theta}}{2},\quad T(\sigma_k)(t):=\cup_{\theta}P_{\theta}(\sigma_k(-\frac{t}{2})),$$
		where $P_{\theta}: H_{n}(V_{-\frac{t}{2}})\to H_n(V_{\tilde{c}_t(\theta)})$ is the parallel transport along $\tilde{c}_t.$
		Then
		\begin{align}
		\begin{split}
		\int_{\gamma_k^-}e^{f}A_a&=\int_0^\infty e^{-t}\int_{\sigma_k(t)}\frac{A_a}{df}dt\\
		&=\frac{1}{2\pi i}\int_0^\infty e^{-t}\int_{T(\sigma_k(t))}\frac{A_a}{f+t}dt.
		\end{split}
		\end{align}
		
		Let $v(t)=\int_{T(\sigma_k(t))}\frac{A_a}{f+t},$ then $v(t)=t^{n_a-1}v(1)$ (recall that $n_a=\frac{l_a}{n}$), proceed as in the proof of \cite[Lemma A.2]{chiodo2014landau}, and take the differential of $t$, one has
		\[v(t)=(-1)^{n_a-1}t^{n_a-1}\int_{T(\sigma_k(t))}\frac{A_a}{(f+t)^{n_a}}\]
		
		It follows from  \cite[Theorem 4.2]{Fan2020LGCYCB} that
		\[\int_{T(\sigma_k(t))}\frac{A_a}{(f+t)^{n_a}}=\lim_{t\to 0}\int_{T(\sigma_k(t))}\frac{A_a}{(f+t)^{n_a}}=(2\pi i)^2\int_{\delta_k}R_0(\phi_a).\]
	\end{proof}
	
	We show that the real structure and the bi-grading are preserved under $r$:
	\begin{thm}\label{rpqdegree}
		For $0\leq k\leq\mu'-1$, one has
		\begin{equation}\label{periodlgcy}\int_{\gamma_k^-}e^{F+\bar{F}}w=\int_{\delta_k} r_u(w)\end{equation}
		and \begin{equation}\label{periodlgcy0}\int_{\gamma_k^+}e^{-F-\bar{F}}*w=\int_{\delta_k} *r_u(w)\end{equation}
		for any $w\in\H'.$
		
		Moreover, $r_u(\bar{w}_a)=\overline{r_u(w_a)}$ represent the same class in $H^*(X_F)$ and $r_u(w_a)\in H^{n-n_a-1,n_a-1}(X_F)_{\prim}$.
	\end{thm}
	\begin{proof}
		By Lemma \ref{ruan} and (\ref{haref}), for $0\leq k\leq\mu'-1$, any $w\in\H'$,
		\begin{equation}\label{ruan11}
		\int_{\gamma_k^-}e^{F+\bar{F}}w=\int_{\delta_k} r_u(w).
		\end{equation}
		
		In particular, take $w=\bar{w}_a$, by (\ref{ruan11}),
		
		\begin{equation}\label{ruan111}
		\int_{\gamma_k^-}e^{F+\bar{F}}\bar{w}_a={\int_{\delta_k} r_u(\bar{w}_a)}.
		\end{equation}
		
		Moreover,
		\begin{equation}\label{ruan112}
		\int_{\gamma_k^-}e^{F+\bar{F}}\bar{w}_a=\overline{\int_{\gamma_k^-}e^{F+\bar{F}}{w}_a}=\overline{\int_{\delta_k} r_u({w}_a)}={\int_{\delta_k} \overline{r_u({w}_a)}}.
		\end{equation}
		
		Thus,(\ref{ruan111}) and (\ref{ruan112}) and Remark \ref{rsha} implies that \begin{equation}\label{ruan1}r_u(\bar{w}_a)=\overline{r_u(w_a)}\in \overline{\F^{n-n_a-1}H^{n-2}(X_F)_{\prim}},\end{equation}
		where $n_a:=\frac{l_a}{n}.$
		
		One the other hand, since $\bar{w}_a=S(\kappa(\phi_adz_1\wedge\cdots\wedge dz_n))$, by Remark \ref{rsha},
		$$r_u(\bar{w}_a)\in \F^{n_a-1}H^{n-2}(X_F)_{\prim}.$$
		As a result,
		\begin{equation} \label{rbwa}
		r_u(\bar{w}_a)\in \overline{\F^{n-n_a-1}H^{n-2}(X_F)_{\prim}}\cap \F^{n_a-1}H^{n-2}(X_F)_{\prim}=H^{n_a-1,n-n_a-1}(X_F)_{\prim}.
		\end{equation}
		
		(\ref{ruan1}) and (\ref{rbwa}) tell us that $r_u(w_a)\in H^{n-n_a-1,n_a-1}(X_F)_{\prim}.$
	\end{proof}

	\begin{thm}\label{fanlan}
		If $\phi_a,\phi_b\in \Jac(F)'$, then
		$$\int_{\C^n}w_a\wedge*w_b=\frac{(2\pi)^2}{2^{n-2}}\int_{X_F}r_u(w_a)\wedge*r_u(w_b).$$
	\end{thm}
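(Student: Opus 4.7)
The plan is to expand both sides of the identity as bilinear expressions in periods over the $(n-2)$-cycles $\{\delta_k\}_{k=0}^{\mu'-1}$ and then compare the two intersection matrices that appear.

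I begin on the LG side by applying the Riemann bilinear formula (Proposition \ref{riebil}):
\[
\int_{\C^n} w_a \wedge * w_b \;=\; \sum_{k,l=0}^{\mu-1} \left(\int_{\gamma_k^-} e^{F+\bar{F}} w_a\right) (\I)_{kl} \left(\int_{\gamma_l^+} e^{-F-\bar{F}} * w_b\right),
\]
using the version of the formula with $*w_b$ in the second factor, compatible with Proposition \ref{rpqdegree}. Since $w_a, w_b \in \H'$, I expect both period factors to vanish whenever $k$ or $l$ exceeds $\mu'-1$: the cycles $\sigma_k$ for $k \ge \mu'$ are vanishing cycles lying outside $\mathrm{Im}(\tau)$, so the argument of Lemma \ref{ruan} (where the exponential period is recomputed by deforming to the monodromy cycle $T(\sigma_k(t))$) returns zero on such thimbles. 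Hence the sum collapses to $0 \le k,l \le \mu'-1$, the block submatrix $I'$ and its inverse $\I'$ replace $I$ and $\I$, and Proposition \ref{rpqdegree} converts each period to a CY period, yielding
\[
\int_{\C^n} w_a \wedge * w_b \;=\; \sum_{k,l=0}^{\mu'-1} \left(\int_{\delta_k} r_u(w_a)\right) (\I')_{kl} \left(\int_{\delta_l} * r_u(w_b)\right).
\]

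For the CY side, I would apply the classical Riemann bilinear formula on the compact hypersurface $X_F$, with the basis $\{\delta_k\}_{k=0}^{\mu'-1}$ of $H_{n-2}(X_F)_0$ and its intersection matrix $J$:
\[
\int_{X_F} r_u(w_a) \wedge * r_u(w_b) \;=\; \sum_{k,l=0}^{\mu'-1} \left(\int_{\delta_k} r_u(w_a)\right) (\J)_{kl} \left(\int_{\delta_l} * r_u(w_b)\right),
\]
where $\J = J^{-1}$. This formula is essentially the one the authors promise to prove in the appendix, and it uses only that $r_u(w_a), r_u(w_b) \in H^{n-2}(X_F)_{\prim}$ by Proposition \ref{rpqdegree}.

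The main obstacle is the purely topological identity comparing the two intersection pairings,
\[
I'_{kl} \;=\; \frac{2^{n-2}}{(2\pi)^2}\, J_{kl}, \qquad \text{equivalently} \qquad \I' \;=\; \frac{(2\pi)^2}{2^{n-2}}\, \J.
\]
The $(2\pi)^2$ factor arises from two applications of the Leray coboundary map $\tau$, once inside each of $\gamma_k^- = \bigcup_{t>0}(\Phi_t)^*\tau(\delta_k)$ and $\gamma_l^+$, with each tube contributing a factor of $2\pi i$ and the signs conspiring to give $(2\pi)^2$. The $2^{n-2}$ reflects the normalization of the Hodge star on middle-degree forms of $\C^n$ (real dimension $2n$) versus the Hodge star on the complex $(n-2)$-fold $X_F$. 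Establishing this identity will require unwinding the construction $\gamma_k^\pm = \bigcup_{t>0}(\Phi_t)^*\sigma_k$, keeping careful track of orientations on the tube neighborhood defining $\tau$, and using the homogeneity of $F$ to handle the dilation direction. Once the comparison of intersection matrices is in hand, substitution into the two expansions above immediately produces the theorem.
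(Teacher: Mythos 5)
Your reduction runs the paper's logic backwards, and the step you defer is exactly the content of the theorem. In the paper, the comparison of intersection matrices $\I'=\frac{(2\pi)^2}{2^{n-2}}\I^{CY}$ is Theorem \ref{intersec}, and it is \emph{deduced from} Theorem \ref{fanlan} by precisely the two Riemann-bilinear expansions you write down (run in the opposite direction). So invoking that identity to prove Theorem \ref{fanlan} is circular unless you supply an independent proof, and your proposal does not: it only asserts that the constant should follow from ``unwinding the construction'' of $\gamma_k^{\pm}=\cup_{t>0}(\Phi_t)^*\sigma_k$ with careful orientations, attributing $(2\pi)^2$ to two Leray tubes and $2^{n-2}$ to a Hodge-star normalization. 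That heuristic is not a proof, and it is doubtful it can be made one at the purely topological level: the matrices $I'$ and $I^{CY}$ are built from integral cycles, so a transcendental proportionality factor cannot come from orientation bookkeeping alone; in the paper the constant is genuinely analytic/algebraic in origin. The paper's actual proof is fiberwise and avoids bilinear relations entirely: first the $U(1)$-charge argument (Corollary \ref{wawb} together with Proposition \ref{conjcharge}) shows both sides vanish unless $l_a+l_b=n^2$; in the remaining case the LG pairing $\int_{\C^n}w_a\wedge*w_b$ is identified with the Grothendieck residue pairing of $\phi_a,\phi_b$ (Theorem 3.4 of \cite{shen2016explicit}), and the residue pairing is converted to $\int_{X_F}R_u(\phi_a)\wedge*R_u(\phi_b)$ by the Carlson--Griffiths formula (Proposition \ref{carlson}); the factor $\frac{(2\pi)^2}{2^{n-2}}$, the factorials $(n_a-1)!(n_b-1)!$ and the signs absorbed into $r_u$ all come from that comparison.

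Two secondary points. First, your collapse of the bilinear sum to $0\le k,l\le\mu'-1$ needs more care: even granting that the periods vanish for $k\ge\mu'$ (which in the paper follows from the monodromy vanishing (\ref{monovan}) combined with the expansion (\ref{haref}) of $\int_{\gamma_k^-}e^{F+\bar F}w_a$ in holomorphic periods, valid because $\phi_a\in\Jac(F)'$ forces the relevant $A_b$ to lie in $\Jac(F)'$ as well), dropping rows and columns of $\I=I^{-1}$ does not automatically let you replace the surviving block of $I^{-1}$ by $(I')^{-1}$; that substitution requires a block structure of $I$ with respect to monodromy-invariant versus non-invariant cycles, which you assert but do not argue. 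Second, the CY-side bilinear formula on $X_F$ with $*r_u(w_b)$ in the second slot and the primitive classes $\delta_k$ also deserves a justification analogous to the paper's appendix, rather than being cited as classical without comment. But the decisive problem is the first one: as written, the proposal replaces the theorem by an equivalent unproved statement.
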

	\begin{proof}
		By Lemma \ref{wawb} and Lemma \ref{conjcharge}, if $l_a+l_b\neq n^2$, we have
		$$\int_{\C^n}w_a\wedge*w_b=0.$$
		While by Proposition \ref{rpqdegree}, if $l_a+l_b\neq n^2$, we also have
		$$\int_{X_F}r_u(w_a)\wedge*r_u(w_b)=0.$$
		
		Hence it suffices to prove the case of $l_a+l_b=n^2$. By (\ref{fil}), Theorem \ref{rpqdegree} and the definition of $r_u$ and $R_u$
		\be \label{last1}r_u(w_a)=(2\pi i)(-1)^{n_a-1}(n_a-1)!R_u(\phi_a)^{n-n_a-1,n_a-1},\ee
		where $R_u(\phi_a)^{n-n_a-1,n_a-1}$ denotes the $(n-n_a-1,n_a-1)$ component of $R_u(\phi_a).$
		
		By (\ref{last1}),  \cite[Theorem 3.4]{shen2016explicit} and Proposition \ref{carlson},
		\begin{align*}&\ \ \ \ \int_{\C^n}w_a\wedge*w_b\\
		&=\frac{(-1)^{(n_a-1)(n_b-1)+1}(n_a-1)!(n_b-1)!}{2^{n-2}}\int_{X_F}R_u(\phi_a)^{n-n_a-1,n_a-1}\wedge*R_u(\phi_b)^{n-n_b-1,n_b-1}\\
		&=\frac{(2\pi)^2}{2^{n-2}}\int_{X_F} r_u(w_a)\wedge*r_u(w_b).\end{align*}
	\end{proof}
	
	
	\subsection{The correspondence for intersection matrices}
	Here we use the same notation as in \cref{not}.

	If $\phi_a\in \Jac(F)'$, for $k\geq\mu'$,
	\[\int_{\gamma_k^-}e^{F}A_a=\int_0^\infty e^{-t}\int_{\sigma_k}\frac{A_a}{dF},\]
	where $\frac{A_a}{dF}$ is the Gelfand-Leray form (c.f. \cite[Lemma 10.3]{arnold2012singularities}).
	
	Integration by substitution tells us that
	$$\int_{\sigma_k}\frac{A_a}{dF}=0,\quad \text{if } M_{c_t}\sigma_k\neq\sigma_k,$$
	which implies if $k\geq \mu'$
	\begin{equation}\label{monovan}\int_{\gamma_k^-}e^{f}A_a=0\end{equation}
	
	\begin{thm}\label{intersec}
		$\I'=\frac{\pi^2}{2^{n-4}}\I^{CY}.$
	\end{thm}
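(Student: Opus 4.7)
The strategy is to apply the Hermitian version of the Riemann bilinear identity (as derived in the proof of Theorem \ref{132}) simultaneously on the Landau--Ginzburg and Calabi--Yau sides, and to observe that a block-diagonality property of the intersection matrix $I$ identifies $\mathcal{I}'$ with the top-left block of $\mathcal{I}$. The first step is to show that the period matrices $\mathcal{A}=(\mathcal{A}_{ka})$ and $\tilde{\mathcal{A}}=(\tilde{\mathcal{A}}_{lb})$ are block diagonal with respect to the partition $\{0,\dots,\mu'-1\}\sqcup\{\mu',\dots,\mu-1\}$. One half, $\mathcal{A}_{ka}=0$ for $a<\mu'$ and $k\ge\mu'$, follows by inverting the block lower-triangular matrix $\mathcal{T}$ of Theorem \ref{132} to express $\mathcal{A}_{ka}$ as a linear combination of $\int_{\gamma_k^-}e^F A_b$ with $b<\mu'$, then applying (\ref{monovan}). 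The other half, $\mathcal{A}_{ka}=0$ for $a\ge\mu'$ and $k<\mu'$, is the main new input: the same inversion expresses $\mathcal{A}_{ka}$ as a combination of $\int_{\gamma_k^-}e^F A_b$ with $b\ge\mu'$, each of which is a Gelfand--Leray type integral $\int_0^\infty e^{-t}\int_{\sigma_k(t)}\frac{A_b}{dF}\,dt$. Using the explicit action $\Phi_t^*A_b = t^{-l_b/n}A_b$ and $\Phi_t^*dF = t^{-1}dF$, one computes $M_{c_t}^*\bigl(\tfrac{A_b}{dF}\bigr)=e^{-2\pi i l_b/n}\tfrac{A_b}{dF}$, a nontrivial character since $b\ge\mu'$; combined with $M_{c_t}\sigma_k=\sigma_k$ for $k<\mu'$, this forces $\int_{\sigma_k}\tfrac{A_b}{dF}=0$. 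The parallel argument handles $\tilde{\mathcal{A}}$.

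Since the metric matrix $H=(h(w_a,w_b))$ is block diagonal by Corollary \ref{wawb} and Proposition \ref{conjcharge}, the Hermitian Riemann bilinear identity $H=\mathcal{A}^t\mathcal{I}\,\overline{\tilde{\mathcal{A}}}$ (from the proof of Theorem \ref{132}) forces $\mathcal{I}$, and hence $I=\mathcal{I}^{-1}$, to be block diagonal; in particular $\mathcal{I}'=(I')^{-1}$ equals the top-left block $(\mathcal{I})_{11}$. For $a,b<\mu'$, the restricted identity on the Landau--Ginzburg side reads $h(w_a,w_b) = (\mathcal{A}_{11})^t(\mathcal{I}_{11})\overline{\tilde{\mathcal{A}}_{11}}$, while on the Calabi--Yau side, the classical Hermitian Riemann bilinear identity for the basis $\{\delta_k\}_{k<\mu'}$ of $H_{n-2}(X_F)_0$ gives
$$\int_{X_F} r_u(w_a)\wedge *\overline{r_u(w_b)} = \sum_{k,l=0}^{\mu'-1}\left(\int_{\delta_k} r_u(w_a)\right)\mathcal{I}^{CY}_{kl}\,\overline{\left(\int_{\delta_l} *r_u(w_b)\right)}.$$
By Proposition \ref{rpqdegree} the matrix entries on the right coincide with $\mathcal{A}_{ka}$ and $\tilde{\mathcal{A}}_{lb}$ respectively; and Theorem \ref{fanlan} applied with $w_b$ replaced by $\bar{w}_b$ (using the identity $r_u(\bar{w}_b)=\overline{r_u(w_b)}$ of Proposition \ref{rpqdegree}) gives $h(w_a,w_b)=\frac{(2\pi)^2}{2^{n-2}}\int_{X_F} r_u(w_a)\wedge *\overline{r_u(w_b)}$. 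Equating the two expressions for $h(w_a,w_b)$, canceling the invertible factors $\mathcal{A}_{11}$ and $\overline{\tilde{\mathcal{A}}_{11}}$, and inserting $\mathcal{I}'=(\mathcal{I})_{11}$ from the previous step yields $\mathcal{I}' = \frac{(2\pi)^2}{2^{n-2}}\mathcal{I}^{CY} = \frac{\pi^2}{2^{n-4}}\mathcal{I}^{CY}$.

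I expect the monodromy vanishing above to be the main obstacle: without it one only obtains block upper-triangularity of $\mathcal{A}$, which is not enough to identify $\mathcal{I}'$ with $(\mathcal{I})_{11}$. A secondary, more routine, point is to justify the Hermitian form of the classical Riemann bilinear identity on $X_F$ used here, but this is standard and reduces to a careful sign and normalization check against the conventions of Proposition \ref{rpqdegree}.
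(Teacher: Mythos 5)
Your proposal is correct, and its core is the same as the paper's argument: compute $h(w_a,w_b)$ (equivalently $\int_{\C^n}w_a\wedge*w_b$) once via the Landau--Ginzburg Riemann bilinear formula cut down by the monodromy vanishing (\ref{monovan}), and once via Theorem \ref{fanlan} together with the classical bilinear identity on $X_F$, then identify the two period matrices through Lemma \ref{ruan}/Proposition \ref{rpqdegree} and cancel the invertible restricted period matrices. Where you genuinely add something is the passage from $\I=I^{-1}$ to $\I'=(I')^{-1}$: the paper writes the restricted identity with $\I'$ immediately after invoking (\ref{monovan}), but the vanishing of the periods over $\gamma_k^{\pm}$ with $k\geq\mu'$ only reduces the full sum to the top-left block of $\I$, and this block equals $(I')^{-1}$ only once one knows $I$ is block diagonal (the Schur complement correction must vanish). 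Your monodromy-character computation, forcing $\int_{\sigma_k}A_b/dF=0$ for invariant $\sigma_k$ when $n\nmid l_b$, yields the complementary vanishing and hence full block diagonality of $\A$ and $\tilde{\A}$; combined with the charge block-diagonality of $(h(w_a,w_b))$ from Corollary \ref{wawb} and the relation $H=\A^{t}\,\I\,\overline{\tilde\A}$, this does force $\I$ (hence $I$) to be block diagonal, so your identification $\I'=(\I)_{11}$ is sound. This is a welcome explicit justification of a step the paper treats as immediate (it can alternatively be obtained from semisimplicity of the finite-order monodromy, provided the cycles $\sigma_k$, $k\geq\mu'$, are taken in the image of $M_{c_t}-\Id$); note that the same implicit point arises for the $\gamma_k^{+}$-periods, where the analogue of (\ref{monovan}) is also needed. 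Two small notational cautions: your partition of the form index by ``$a<\mu'$'' should be read as $\{a:\phi_a\in\Jac(F)'\}$, since the weight ordering does not list the integral-charge monomials first; and applying Theorem \ref{fanlan} to $\bar w_b$ requires its $C^{\infty}(M)$-linear extension together with $r_u(\bar w_b)=\overline{r_u(w_b)}$, which you correctly take from Proposition \ref{rpqdegree}.
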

	\begin{proof}
		On the one hand, by (\ref{monovan}) and Riemann bilinear formula,
		\begin{align*}
		\int_{\C^n}w_a\wedge*w_b&=\sum_{0\leq k,l< \mu'}\left(\int_{\gamma_k^-}e^{F+\bar{F}}w_a\right)(\I')_{kl}\left(\int_{\gamma_l^+}e^{-F-\bar{F}}*w_b\right)
		\\
		&=\sum_{0\leq k,l< \mu'}\left(\int_{\delta_k}r_u(w_a)\right)(\I')_{kl}\left(\int_{{\delta}_l}*r_u(w_b)\right)\mbox{ (By (\ref{haref}) and Lemma \ref{ruan})}.
		\end{align*}
		
		On the other hand, by Theorem \ref{fanlan},
		
		\begin{align*}
		\int_{\C^n}w_a\wedge*w_b&=\frac{\pi^2}{2^{n-4}}\int_{X_F}r_u(w_a)\wedge*r_u(w_b)\\
		&=\frac{\pi^2}{2^{n-4}}\sum_{0\leq k,l< \mu'}\left(\int_{\delta_k}r_u(w_a)\right)(\I^{CY})_{kl}\left(\int_{{\delta}_l}*r_u(w_b)\right).
		\end{align*}
		
		Hence, $\I'=\frac{\pi^2}{2^{n-4}}\I^{CY}.$
	\end{proof}
	
	\begin{proof}[Proof of Theorem \ref{133}]
		By Theorem \ref{intersec} and Lemma \ref{ruan}, it's easy to see that
		\[\int_{\C^n}w_0\wedge*\bar{w}_0=\frac{\pi^2}{2^{n-4}}\int_{X_F}r_u(w_0)\wedge *\overline{r_u(w_0)}.\]
		
		Then $G_{i\bar{j}}=G_{i\bar{j}}^{CY}$ follows from the following:
		$$G_{i\bar{j}}=-\p_i\bar{\p}_{\bar{j}}\log(\int_{\C^n}w_0\wedge*\bar{w}_0),\quad G^{CY}=-\p_i\bar{\p}_{\bar{j}}\log(\int_{X_F}r_u(w_0)\wedge *\overline{r_u(w_0)}).$$
	\end{proof}
	
	To show Calabi-Yau/Landau-Ginzburg correspondence for $tt^*$ geometry, we first show that:
	\begin{lem}\label{gmc}
		Let $u=(u_1,...,u_s)$ be local coordinates of  $M$, and $\p_i=\p_{u_i}$.
		\[\partial_i\int_{\gamma_k^-}e^{F+\bar{F}}w_a=\int_{\gamma_k^-}e^{F+\bar{F}}(D_{i}+C_{i})w_a,\]
		\[\bar{\p}_{\bar{i}}\int_{\gamma_k^-}e^{F+\bar{F}}w_a=\int_{\gamma_k^-}e^{F+\bar{F}}(\bar{D}_{\bar{i}}+\bar{C}_{\bar{i}})w_a.\]
	\end{lem}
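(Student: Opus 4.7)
The plan is to differentiate the period integral under the integral sign and show that the non-harmonic contributions of $\partial_i w_a + \psi_i w_a$ (respectively $\partial_{\bar{i}} w_a + \bar\psi_i w_a$) can be written as $d$ of a sufficiently decaying form on $\gamma_k^-$, so that Stokes' theorem kills them. The key algebraic input is that, although these forms are not harmonic, they are $d_F$-closed.

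Since $w_a$ and its $u$-derivatives enjoy Agmon-type exponential decay along the thimbles, one may differentiate under the integral; using $\partial_i F = \psi_i$, one obtains
\[
\partial_i \int_{\gamma_k^-} e^{F+\bar F} w_a = \int_{\gamma_k^-} e^{F+\bar F}\bigl(\psi_i w_a + \partial_i w_a\bigr).
\]
Since $D_i = \Pi_u\circ \partial_i$ and $C_i = \Pi_u\circ \psi_i$, setting $\eta := (I - \Pi_u)(\partial_i w_a + \psi_i w_a)$, it remains to show $\int_{\gamma_k^-} e^{F+\bar F}\eta = 0$.

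For the key algebraic step, note that the commutators
\[
[\partial_i, \bar\partial_F] = \partial\psi_i \wedge \quad (\text{since } \partial_i F = \psi_i), \qquad [\partial_i, \partial_F] = 0 \quad (\text{since } \partial_i \bar F = 0),
\]
combined with $\bar\partial_F w_a = \partial_F w_a = 0$ (consequences of $\Delta_F w_a = 0$) and the direct computations $\bar\partial_F(\psi_i w_a) = \psi_i \bar\partial_F w_a = 0$ and $\partial_F(\psi_i w_a) = \partial\psi_i \wedge w_a$, give
\[
\bar\partial_F(\partial_i w_a + \psi_i w_a) = -\partial\psi_i \wedge w_a, \qquad \partial_F(\partial_i w_a + \psi_i w_a) = \partial\psi_i \wedge w_a,
\]
and hence $d_F(\partial_i w_a + \psi_i w_a) = 0$. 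Since $\eta$ is $L^2$, $d_F$-closed, and orthogonal to $\ker \Delta_F$, the Kodaira-Hodge decomposition of Remark \ref{KHD} (note $\Delta_{2\Re F} = 2\Delta_F$) yields $\eta = d_F \beta$ for some $\beta$. Applying the conjugation $e^{F+\bar F}(d_F \alpha) = d(e^{F+\bar F}\alpha)$ of Remark \ref{KHD} gives $e^{F+\bar F}\eta = d(e^{F+\bar F}\beta)$, so Stokes' theorem produces
\[
\int_{\gamma_k^-} e^{F+\bar F}\eta = \int_{\partial\gamma_k^-} e^{F+\bar F}\beta,
\]
which vanishes because $\partial\gamma_k^- \subset F^{-\infty}$ where $e^{2\Re F}$ decays exponentially while $\beta$ has at most moderate growth. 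The second identity is proved identically, using $[\partial_{\bar i}, \bar\partial_F] = 0$ and $[\partial_{\bar i}, \partial_F] = \bar\partial\bar\psi_i \wedge$ to again obtain $d_F(\bar\psi_i w_a + \partial_{\bar i} w_a) = 0$, followed by the same exactness-plus-Stokes argument.

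The main obstacle is justifying that $\beta$ decays fast enough for the boundary integral to vanish. This is supplied by the Agmon estimates of the appendix: $w_a$, $\partial_i w_a$, and $\psi_i w_a$ all have Agmon decay, so $\eta$ does too, and elliptic regularity applied to the solution $\beta = d_F^{\dagger} G_F \eta$ transfers enough decay to make $e^{F+\bar F}\beta$ vanish at $F^{-\infty}$. A secondary technical point is the legitimacy of differentiating under the integral in the first step, which also rests on the uniform Agmon decay of $w_a(u)$ and its $u$-derivatives on a neighborhood of the base point.
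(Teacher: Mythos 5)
Your proposal is correct and follows essentially the same route as the paper's proof: differentiate under the integral, observe that $(\partial_i+\psi_i)w_a$ is $d_{2\Re(F)}$-closed (you check this by explicit commutators with $\bar{\partial}_F,\partial_F$, while the paper conjugates by $e^{F+\bar{F}}$ and uses $d$-closedness), split off the harmonic projection $(D_i+C_i)w_a$, and kill the exact remainder by Stokes' theorem using a primitive with sufficient decay coming from the Hodge decomposition and Agmon estimates. The one point you assert rather than establish --- the Agmon decay of $\partial_i w_a$, hence of $\eta$, which feeds the decay of the primitive $\beta$ --- is exactly what the paper handles by invoking the twisted-decomposition argument behind Proposition \ref{riebil1}; your version can be closed by the same appendix estimates (e.g.\ applying Lemma \ref{epde12} to $\Delta_F(\partial_i w_a)=-(\partial_i\Delta_F)w_a$), so the difference is presentational rather than substantive.
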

	\begin{proof}
		First, notice that
		\begin{equation}\label{dplusc1}\partial_i\int_{\gamma_k^-}e^{F+\bar{F}}w_a=\int_{\gamma_k^-}\partial_i(e^{F+\bar{F}}w_a)=\int_{\gamma_k^-}e^{F+\bar{F}}(\partial_i+\psi_i)w_a.\end{equation}
		
		Since $e^{F+\bar{F}}w_a$ is $d$-closed for all $u$ and $\partial_{i}$ commutes with $d$, $e^{F+\bar{F}}(\partial_i+\psi_i)w_a=\partial_{i}(e^{F+\bar{F}}w_a)$ is also $d$-closed, which implies that $(\partial_i+\psi_i)w_a$ is $d_{2\Re(F)}$-closed. Proceed as in the proof of Proposition \ref{riebil1}, one shows that \[(\partial_i+\psi_i)w_a=\Pi_u((\partial_i+\psi_i)w_a)+d_{2\Re(F)}\alpha=(D_i+C_i)w_a+d_{2\Re(F)}\alpha\]
		for some differential form $\alpha$ with exponential decay.
		
		As a result,
		\begin{equation}\label{dplusc2}\int_{\gamma_k^-}e^{F+\bar{F}}(\partial_i+\psi_i)w_a
		=\int_{\gamma_k^-}e^{F+\bar{F}}(D_{i}+C_{i})w_a.
		\end{equation}
		
		By (\ref{dplusc1}) and (\ref{dplusc2}), one has
		\[\partial_i\int_{\gamma_k^-}e^{F+\bar{F}}w_a=\int_{\gamma_k^-}e^{F+\bar{F}}(D_{i}+C_{i})w_a.\]
		
		Similarly, one can show that
		\[\bar{\p}_{\bar{i}}\int_{\gamma_k^-}e^{F+\bar{F}}w_a=\int_{\gamma_k^-}e^{F+\bar{F}}(\bar{D}_{\bar{i}}+\bar{C}_{\bar{i}})w_a.\]
	\end{proof}
	
	\begin{proof}[Proof of Theorem \ref{134}]
		First, let $r_u'=2^{\frac{4-n}{2}}\pi r_u.$
		By Theorem \ref{rpqdegree}, the real structure is preserved by $r_u'$. By  Theorem \ref{fanlan}, the bilinear pairing is also preserved by $r_u'.$ Hence, if we have
		\begin{flalign}
		\begin{split}\label{gaussmanin}
		r_u'(Dw_a)&=D^{CY}r_u'(w_a),\quad r_u'(\bar{D}w_a)=\bar{D}^{CY}r_u'(w_a),\\
		r_u'(Cw_a)&=C^{CY}r_u'(w_a),\quad r_u'(\bar{C}w_a)=\bar{C}^{CY}r_u'(w_a)
		\end{split}
		\end{flalign}
		for $\phi_a\in\Jac(F)',$
		$r_u':\H'\to H^{n-2}(X_F)_{\prim}$ induces an isomorphism of $tt^*$ structure. 
		
		Now we prove (\ref{gaussmanin}). By (\ref{periodlgcy}), one has
		\[\partial_i\int_{\gamma_k^-}e^{F+\bar{F}}w=\partial_i\int_{\delta_k} r_u'(w),\]
		\[\bar{\p}_{\bar{i}}\int_{\gamma_k^-}e^{F+\bar{F}}w=\bar{\p}_{\bar{i}}\int_{\delta_k} r_u'(w).\]
		Hence by Lemma \ref{gmc}, Theorem \ref{rpqdegree} and the definition of Gauss-Manin connection $D^{CY}+C^{CY}$,
		\begin{equation}\label{gaussmanin1}
		\begin{aligned}
		&r_u'\left((D+C)w_a\right)=(D^{CY}+C^{CY})r_u'(w_a),\\
		&r_u'\left((\bar{D}+\bar{C})w_a\right)=(\bar{D}^{CY}+\bar{C}^{CY})r_u'(w_a).
		\end{aligned}
		\end{equation}
		
		Suppose $w_b=S(A_b)$ such that $l(A_b)=l(A_a)$, then
		by Riemann bilinear formula and Theorem \ref{rpqdegree},
		\[\int_{\C^n} w_a\wedge*\bar{w}_b=\int_{X_F}r_u'(w_a)\wedge*\overline{r_u'(w_b)}.\]
		As a result,
		\[\partial_i\int_{\C^n} w_a\wedge*\bar{w}_b=\partial_i\int_{X_F}r_u'(w_a)\wedge*\overline{r_u'(w_b)}.\]
		
		Then notice that
		\[\partial_i\int_{\C^n} w_a\wedge*\bar{w}_b=\int_{\C^n} D_iw_a\wedge*\bar{w}_b,\]
		and
		\[\partial_i\int_{X_F}r_u'(w_a)\wedge*\overline{r_u'(w_b)}=\int_{X_F}D^{CY}_ir_u'(w_a)\wedge*\overline{r_u'(w_b)}.\]
		
		As a result,
		\[r_u'(Dw_a)=D^{CY}r_u'(w_a).\]
		Similarly,
		\[r_u'(\bar{D}w_a)=\bar{D}^{CY}r_u'(w_a).\]
		
		Together with (\ref{gaussmanin1}), we have (\ref{gaussmanin}).

	\end{proof}
	
	Finally, we summarize our results. Let $F$ be the marginal deformation of a non-degenerate homogeneous polynomial $f$ of degree $n$ on $\C[z_1,\ldots,z_n]$. The (small) $tt^*$ structure in the Calabi-Yau B-model and Landau-Ginzburg B-model are given by
	\begin{itemize}
		\item $tt^*$ structure in the Calabi-Yau B model:
		$$\left(H^{n-2}(X_F)_{\prim}, \bar{\cdot},D^{CY}, \bar{D}^{CY}, C^{CY},\bar{C}^{CY},g\right).$$
		\item small $tt^*$ structure in the Landau-Ginzburg B model:
		$$\left(\mathcal{H}',\bar{\cdot}, D,\bar{D}, C,\bar{C},h\right).$$
		Here $\bar{\cdot}$ denotes the complex conjugations.
	\end{itemize}
	The map $r_u':\mathcal{H}'\rightarrow H^{n-2}(X_F)_{\prim}$ is an isomorphism of two $tt^*$ geometry structures above. In particular, $r_u'$ not only preserves Hodge filtration and bilinear pairing, but also the real structure.
	
	\appendix
	
	\section{Agmon Estimate}
	In this section, we let $V\in C^\infty(\R^n)$ be a non-negative function with finite isolated zeros. Moreover
	\[\lim_{|x|\to\infty}\frac{|\nabla V|}{(V+1)^{3/2}}(x)=0.\]
	
	The metric $\tilde{g}:=Vg_0$ is called Agmon metric with respect to $V$, where $g_0$ is the standard metric on $\R^n$. Let $\tilde{d}$ be the distance function induced by the Agmon metric, and $\rho(x):=\tilde{d}(x,0).$
	We summarize several nice properties of Agmon distance here (c.f. \cite{DY2020cohomology} )
	\begin{lem}\label{propagm}
		\begin{enumerate}
			\item $|\nabla \rho|^2=V$, almost everywhere;
			\item If $|\nabla f|\leq V$, then $|f(x)-f(y)|\leq \tilde{d}(x,y)$. In particular, if $f(0)=0$, $|f(x)|\leq \rho(x).$
		\end{enumerate}
	\end{lem}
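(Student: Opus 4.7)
The plan is to treat $\rho$ as a solution of the Eikonal equation for the (possibly degenerate) Agmon metric $\tilde{g}=Vg_0$, and to deduce (2) from a standard path-length estimate. A preliminary step is to note that $\rho$ is locally Lipschitz with respect to the Euclidean metric on any compact set on which $V$ is bounded: by definition $\tilde{d}(x,y)$ is bounded above by the Agmon length of the straight segment joining $x$ and $y$, so for any unit vector $v\in\R^n$ one has
\[
\rho(x+tv)-\rho(x)\leq\int_0^t\sqrt{V(x+sv)}\,ds.
\]
By Rademacher's theorem $\rho$ is then differentiable almost everywhere, and (1) becomes a pointwise statement.

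For (1), at a point $x$ of differentiability the segment estimate above gives $v\cdot\nabla\rho(x)\leq\sqrt{V(x)}$ for every unit $v\in\R^n$, hence $|\nabla\rho(x)|^2\leq V(x)$. For the reverse inequality I would take a (near-)minimizing curve $\gamma$ from $0$ to $x$ in the Agmon metric, parametrized by Agmon arc length so that $\rho(\gamma(s))=s$ and $|\gamma'(s)|_{g_0}=V(\gamma(s))^{-1/2}$. Differentiating the identity $\rho\circ\gamma(s)=s$ at values of $s$ arbitrarily close to the endpoint gives
\[
1=\nabla\rho(\gamma(s))\cdot\gamma'(s)\leq|\nabla\rho(\gamma(s))|\cdot V(\gamma(s))^{-1/2},
\]
which combined with the first inequality yields $|\nabla\rho|^2=V$ a.e.

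For (2), interpreting the hypothesis in the form that $f$ is $1$-Lipschitz with respect to $\tilde{g}$ (i.e.\ $|\nabla f|^2\leq V$), for any smooth path $\gamma:[0,1]\to\R^n$ joining $x$ and $y$ one has
\[
|f(y)-f(x)|=\Bigl|\int_0^1\nabla f(\gamma(t))\cdot\gamma'(t)\,dt\Bigr|\leq\int_0^1\sqrt{V(\gamma(t))}\,|\gamma'(t)|\,dt=L_{\tilde{g}}(\gamma).
\]
Taking the infimum over paths proves the first half of (2), and specializing to $y=0$ with $f(0)=0$ yields $|f(x)|\leq\rho(x)$.

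The main obstacle is the lower bound in (1), which relies on the existence of near-minimizing Agmon curves with controlled behaviour; because $\tilde{g}$ degenerates at the finitely many isolated zeros of $V$, one handles this either by restricting to the open set $\{V>0\}$ or by approximating with the genuine Riemannian metrics $(V+\epsilon)g_0$ and letting $\epsilon\downarrow 0$. The decay hypothesis $|\nabla V|/(V+1)^{3/2}\to 0$ at infinity ensures that minimizing sequences do not escape to infinity, so a standard Arzela--Ascoli argument supplies the needed (near-)minimizer.
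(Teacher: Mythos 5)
The paper itself offers no proof of this lemma---it is simply quoted from \cite{DY2020cohomology}---so your argument stands on its own. Part (2) and the upper bound $|\nabla\rho|^2\le V$ in (1) are correct as you present them, and your reading of the hypothesis in (2) as $|\nabla f|^2\le V$ (i.e.\ $f$ is $1$-Lipschitz for $\tilde{g}$) is the intended one: the literal ``$|\nabla f|\le V$'' would not give the path estimate and would not cover the paper's actual application to $2\Re(e^{i\theta}f)$, for which $|\nabla f|^2=V$.

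The lower bound in (1), however, has a genuine gap as written. The chain-rule step ``$1=\nabla\rho(\gamma(s))\cdot\gamma'(s)$ at values of $s$ arbitrarily close to the endpoint'' is not justified: Rademacher gives differentiability of $\rho$ only off a Lebesgue-null set, and the curve $\gamma$ is itself null, so $\rho$ need not be differentiable at any point $\gamma(s)$; and even if it were, you would obtain the inequality at the points $\gamma(s)$ rather than at the point $x$ where differentiability was assumed. Separately, the existence of genuine minimizers with $\rho(\gamma(s))=s$ is asserted via Arzel\`a--Ascoli ``because the decay hypothesis prevents escape to infinity,'' but $|\nabla V|/(V+1)^{3/2}\to0$ allows $V$ to decay at infinity, so minimizing sequences may well escape and exact minimizers need not exist. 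Both problems disappear if you run the standard near-minimizer argument at the endpoint itself: fix $x\ne0$ with $V(x)>0$ at which $\rho$ is differentiable, take a path $\gamma$ from $0$ to $x$ of Agmon length $\le\rho(x)+\epsilon$, and let $y$ be the last point of $\gamma$ with $|y-x|=h$; then $\rho(y)\le\rho(x)+\epsilon-\tilde{d}(y,x)$, while continuity of $V$ gives $\tilde{d}(y,x)\ge\bracket{\sqrt{V(x)}-o_h(1)}h$, so differentiability of $\rho$ at $x$ yields $\nabla\rho(x)\cdot\frac{y-x}{h}\le-\sqrt{V(x)}+o_h(1)+\epsilon/h$; choosing $\epsilon\ll h$ and letting $h\to0$ gives $|\nabla\rho(x)|\ge\sqrt{V(x)}$. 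Since the zero set of $V$ is finite, this holds almost everywhere and completes (1) without any existence theory for geodesics.
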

	
	\begin{lem}\label{Agmonre1}
		\def\dvol{\mathrm{dvol}}
		
		Assume that $w\in L^2(\R^n),0\leq u\in L^2(M)$, such that $(\Delta +V)u\leq w$ outside a compact subset $K\subset M$ in the weak sense (where the interior of $K$ contains all the zeros of $V$). That is
		\[\int_{\R^n-K}\left(\nabla u\nabla v+V u v\right) \dvol\leq\int_{\R^n-K}w\cdot v~\dvol, \ \ \forall\ 0\leq v\in C_c^\infty(M-K).\]
		If $\int_{\R^n-K}|w|^2V^{-1}\exp(2b\rho)\dvol<\infty$ for some $b\in(0,1)$, then for any compact set $L$ such that $L^\circ\supset K$, one has
		\begin{align}\begin{split}\label{techeq}\int_{\R^n-L}V|u|^2\exp(2b\psi)\dvol&\leq C(b,K,L)  \int_{\R^n-K}|w|^2V^{-1}\exp(2b\psi)\dvol \\
		&+ C(b,K,L) \int_{L-K}V|u|^2\exp(2b\psi)\dvol. \end{split}\end{align}

	\end{lem}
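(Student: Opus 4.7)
The approach is the standard weighted $L^2$ energy estimate used to establish Agmon-type exponential decay, with the Agmon distance $\rho$ (written $\psi$ in the statement) as the weight. The plan is to test the weak differential inequality against a carefully chosen nonnegative cutoff weight, then absorb the resulting $|\nabla\rho|^2 u^2 e^{2b\rho}$-terms into the good term $\int V u^2 e^{2b\rho}$ via Lemma \ref{propagm}(1), exploiting $b^2<1$.

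More explicitly, I would fix a smooth cutoff $\eta$ with $\eta\equiv 0$ on a neighbourhood of $K$, $\eta\equiv 1$ outside $L$, and $0\leq\eta\leq 1$, and regularize $\rho$ by $\rho_N:=\min(\rho,N)$ so that $\nabla\rho_N$ is a.e.\ defined with $|\nabla\rho_N|^2\leq V$. Testing the hypothesis against the admissible nonnegative function $v=\eta^2 u\, e^{2b\rho_N}$ (which can be approximated by compactly supported smooth functions by Friedrichs mollification, using $u\in L^2$ and $\rho_N$ bounded) gives an inequality whose left-hand side contains $\int \eta^2|\nabla u|^2 e^{2b\rho_N}+\int V\eta^2 u^2 e^{2b\rho_N}$ and whose right-hand side contains $\int wv$ together with two cross terms arising from $\nabla\eta$ and from $\nabla\rho_N$. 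The $\nabla\rho_N$ cross term is handled by Cauchy--Schwarz with weight splitting:
\[
2b\,\bigl|\eta^2 u(\nabla\rho_N\cdot\nabla u)\bigr|e^{2b\rho_N}
\leq \eta^2|\nabla u|^2 e^{2b\rho_N}+b^2 V\eta^2 u^2 e^{2b\rho_N},
\]
so that the $|\nabla u|^2$ piece exactly cancels the good gradient term, while the $b^2Vu^2$ piece can be absorbed into the left, leaving coefficient $(1-b^2)>0$ in front of $\int V\eta^2 u^2 e^{2b\rho_N}$. The source term $\int wv$ is treated by another Cauchy--Schwarz as $\tfrac{1}{2}\int V\eta^2 u^2 e^{2b\rho_N}+C\int|w|^2 V^{-1}e^{2b\rho_N}$, and the $\nabla\eta$ cross term is supported in the compact annulus $L\setminus K$ where $V$ and $e^{2b\rho_N}$ are bounded; a Caccioppoli-type manipulation (multiply the inequality by a slightly larger cutoff first, then apply Cauchy--Schwarz) converts it into $C(b,K,L)\int_{L-K}V|u|^2 e^{2b\rho_N}\, \dvol$. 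Finally, monotone convergence as $N\to\infty$ yields \eqref{techeq}.

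The main obstacle is the low regularity of $\rho$. The eikonal identity $|\nabla\rho|^2=V$ holds only almost everywhere and $\rho$ is merely locally Lipschitz, so one cannot differentiate it classically; this is the reason for introducing $\rho_N$ and for needing $|\nabla V|/(V+1)^{3/2}\to 0$, which guarantees completeness of the Agmon metric and enough control on $\rho$ to justify Rademacher-type differentiation and the bound $|\nabla\rho_N|^2\leq V$ uniformly in $N$. A secondary delicate point is the admissibility of $v=\eta^2 ue^{2b\rho_N}$ in the weak formulation: since the hypothesis only gives the inequality against smooth compactly supported $v\geq 0$, one must approximate by $v_\epsilon=(\eta^2 u_\epsilon e^{2b\rho_N})_+\cdot\chi_R$ where $u_\epsilon$ is a mollification and $\chi_R$ a spatial cutoff, pass to the limit $\epsilon\to 0$ using $L^2$ convergence of $u_\epsilon,\nabla u_\epsilon$ on $\mathrm{supp}(\eta)\cap B_R$, and then let $R\to\infty$ using the a priori integrability $\int|w|^2 V^{-1}e^{2b\rho}<\infty$.
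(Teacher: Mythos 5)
Your proposal is correct, but it is worth noting that the paper does not actually prove this lemma: its ``proof'' is a one-line citation of Lemma 3.1 in \cite{DY2020cohomology}. What you have written is, in effect, the standard self-contained Agmon weighted-energy argument that underlies that cited result: test the weak inequality with $\eta^2 u e^{2b\rho_N}$, use $|\nabla\rho_N|^2\leq V$ a.e.\ from Lemma \ref{propagm}, exploit $b^2<1$ to absorb the weight term, and localize the cutoff errors to $L-K$ where $V$ is bounded above and below and $e^{2b\rho}$ is bounded, then let $N\to\infty$. Two small remarks on execution. First, with the \emph{exact} cancellation you wrote ($\eta^2|\nabla u|^2$ fully used up against the $\nabla\rho_N$ cross term), the $\nabla\eta$ cross term still contains $\nabla u$ and genuinely needs your Caccioppoli step on a slightly larger annulus (where the resulting $\int V^{-1}w^2$ piece is dominated by the hypothesis since $e^{2b\rho}\geq 1$); a slightly slicker variant is to do the Cauchy--Schwarz lossily, keeping $\delta\int\eta^2|\nabla u|^2e^{2b\rho_N}$ on the left (possible because $b<1$ allows $b^2/(1-\delta)<1$), which then absorbs both the $\nabla\eta$ term and the errors from the spatial cutoff $\chi_R$. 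Second, the passage $R\to\infty$ in the admissibility step is driven not by the integrability of $w$ but by the fact that $u\in L^2$ and $\rho_N\leq N$ is bounded, so the $\nabla\chi_R$ error terms are $O(R^{-1})e^{2bN}\|u\|_{L^2}^2$ and vanish; the hypothesis on $w$ is only needed at the final monotone-convergence stage in $N$. With these routine adjustments your plan goes through, and its value over the paper's treatment is precisely that it makes the appendix self-contained rather than dependent on an external reference.
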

	
	\begin{proof}
		This is exactly  \cite[Lemma 3.1]{DY2020cohomology}.
	\end{proof}

	First, let us recall the De Giorgi-Nash-Moser theorem:
	\begin{thm}\label{moserde}
		
		Let $B_r:=\{x\in\R^n:|x|<r\}.$ 
		Suppose that $u\in L^2(B_r)$, $w\in L^N(B_r)$ for some $N>n/2$, s.t. $\Delta u\leq w$ in the weak sense (and $u\geq 0.$).
		Then 
		\[\sup_{y\in B_r}u(y)\leq {C}(r^{-n/2}\|u\|_{L^2(B_{2r})}+r^{-n/N}\|w\|_{L^N(B_{2r})}),\]
		where $C>0$ is a constant depending on $n$ and $N.$
	\end{thm}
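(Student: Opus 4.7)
The plan is to prove this by the classical Moser iteration scheme. First, by rescaling $u_r(x):=u(rx)$ (which sends $\Delta u\leq w$ on $B_{2r}$ to $\Delta u_r\leq r^2 w(rx)$ on $B_2$), it suffices to prove the statement for $r=1$: namely,
\[
\sup_{B_1}u \leq C\bigl(\|u\|_{L^2(B_2)}+\|w\|_{L^N(B_2)}\bigr),
\]
and then track how the two norms rescale to recover the stated bound. Set $k:=\|w\|_{L^N(B_2)}$ and introduce the shifted subsolution $\bar u:=u+k$; this classical device allows the inhomogeneous right-hand side $w$ to be absorbed into the linear term, because $w\leq|w|\leq k^{-1}|w|\,\bar u$ pointwise.

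Next, I will establish a Caccioppoli-type energy estimate. For $p\geq 2$ and radii $1\leq \rho_1<\rho_2\leq 2$, I test the weak inequality against $v=\eta^{2}\bar u^{\,p-1}$, where $\eta\in C_c^\infty(B_{\rho_2})$ is a cutoff with $\eta\equiv 1$ on $B_{\rho_1}$ and $|\nabla\eta|\leq 2/(\rho_2-\rho_1)$. Expanding $\nabla v$ and applying Young's inequality to the cross term $\eta\bar u^{\,p-1}\nabla u\cdot\nabla\eta$ produces
\[
\int \eta^{2}\bar u^{\,p-2}|\nabla\bar u|^{2}\,\leq\,\frac{C}{(p-1)^{2}}\int \bar u^{\,p}|\nabla\eta|^{2}\;+\;\frac{C}{(p-1)\,k}\int|w|\,\eta^{2}\bar u^{\,p}.
\]
Rewriting $\eta^{2}\bar u^{\,p-2}|\nabla\bar u|^{2}$ as $\frac{4}{p^{2}}|\nabla(\eta\bar u^{\,p/2})|^{2}$ up to harmless cutoff corrections yields a gradient bound on $\eta\bar u^{\,p/2}$ of the form $Cp$ times the right-hand side above.

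The third step is to apply the Sobolev inequality $\|\eta\bar u^{\,p/2}\|_{L^{2\chi}}^{2}\leq C\|\nabla(\eta\bar u^{\,p/2})\|_{L^{2}}^{2}$ with $\chi=n/(n-2)$ (and any $\chi>1$ when $n=2$), together with Hölder's inequality on the $w$-term: the assumption $N>n/2$ is equivalent to $N/(N-1)<\chi$, so
\[
\frac{1}{k}\int|w|\,\eta^{2}\bar u^{\,p}\;\leq\;\frac{\|w\|_{L^{N}(B_{\rho_2})}}{k}\,\bigl\|\eta^{2}\bar u^{\,p}\bigr\|_{L^{N/(N-1)}(B_{\rho_2})}\;\leq\;\bigl\|\eta\bar u^{\,p/2}\bigr\|_{L^{2\chi}}^{\,2\theta}\bigl\|\bar u^{\,p/2}\bigr\|_{L^{2}(B_{\rho_2})}^{\,2(1-\theta)}
\]
for some $\theta<1$, after interpolating between $L^{2}$ and $L^{2\chi}$. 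Because the $L^{2\chi}$ factor enters with exponent strictly less than $2$, it can be absorbed into the left-hand side by Young's inequality. The outcome is a reverse Hölder inequality
\[
\|\bar u\|_{L^{p\chi}(B_{\rho_1})}\,\leq\,\Bigl(\tfrac{Cp}{(\rho_2-\rho_1)^{2}}\Bigr)^{1/p}\,\|\bar u\|_{L^{p}(B_{\rho_2})}.
\]

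Finally, I iterate along $p_j:=2\chi^{j}$ and $\rho_j:=1+2^{-j}$. Taking logarithms and summing the geometric series $\sum_{j}\chi^{-j}\log(Cp_j/(\rho_j-\rho_{j+1})^{2})$, which converges because $\sum j\chi^{-j}<\infty$, yields
\[
\sup_{B_1}\bar u\;=\;\lim_{j\to\infty}\|\bar u\|_{L^{p_j}(B_{\rho_j})}\;\leq\;C\,\|\bar u\|_{L^{2}(B_{2})}\;\leq\;C\bigl(\|u\|_{L^{2}(B_{2})}+\|w\|_{L^{N}(B_{2})}\bigr),
\]
as required. The main technical obstacle is handling the inhomogeneous term $w$: if one works with $u$ itself instead of $\bar u=u+k$, the iteration accumulates error terms that cannot be summed. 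The cure is precisely the shift by $k$ combined with the assumption $N>n/2$, which places the Hölder-conjugate exponent $N/(N-1)$ strictly below the Sobolev exponent $\chi$ and so lets the $w$-term be absorbed at every stage of the iteration.
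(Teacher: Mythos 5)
Your proposal is correct, and it is essentially the proof the paper outsources: the paper's entire ``proof'' of Theorem \ref{moserde} is a citation to Theorem 4.1 of Han--Lin, and that theorem is proved there by exactly the Moser iteration you carry out (shift $\bar u=u+\|w\|_{L^N}$ to absorb the inhomogeneity, Caccioppoli with test function $\eta^2\bar u^{p-1}$, Sobolev plus interpolation using $N>n/2\iff N/(N-1)<n/(n-2)$, iterate along $p_j=2\chi^j$). Two small caveats, both about the statement rather than your argument. First, the inequality $\Delta u\le w$ must be read with the nonnegative (geometer's) Laplacian $\Delta=-\sum_i\partial_i^2$, i.e.\ $u$ is a \emph{subsolution} $\int\nabla u\cdot\nabla v\le\int wv$ for $v\ge 0$ --- this is what your Caccioppoli inequality silently uses, and it is the convention forced by the Bochner--Kato application in \eqref{boc}; with the opposite sign the claim is false (e.g.\ $u=|x|^{2-n}$, $w=0$, $n=3$). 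Second, your rescaling actually produces $r^{2-n/N}\|w\|_{L^N(B_{2r})}$ on the source term, not $r^{-n/N}$; the bound as stated then follows only because $r\le 1$ in the paper's application (so $r^{2-n/N}\le r^{-n/N}$) --- worth a sentence when you ``track how the norms rescale.'' Otherwise the outline is sound; the only routine detail omitted is the truncation $\bar u_M=\min(\bar u,M)$ needed to make $\eta^2\bar u^{p-1}$ an admissible test function for large $p$.
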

	\begin{proof}
		See  \cite[Theorem 4.1]{han2011elliptic} for a reference.
	\end{proof}
	
	\begin{lem}\label{epde12}
		Suppose $u,w$ satisfy the same conditions as in Lemma \ref{Agmonre1} for any compact set containing the zeros of $V$. Moreover, assume that
		$w$ satisfies
		\[\|w\|_{L^{N}_{wt}}^{N}:=\int_{M}|w|^{N}\exp(Nb\rho)dvol<\infty,\]
		\[\|w\|_{L^{2}_{wt}}^{2}:=\int_{M}|w|^{2}\exp(2b\rho)dvol<\infty.\]
		Then for $a\in(0,b)$,
		
		\[|u|\leq C(\psi,V,n,N,a,b)\left(\|u\|_{L^2}+\|w\|_{L^{N}_{wt}}+\|w\|_{L^{2}_{wt}}\right)\exp(-a\rho).\]
		
	\end{lem}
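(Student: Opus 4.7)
The strategy is a standard two-step bootstrap: first obtain a weighted $L^2$ bound via Lemma \ref{Agmonre1}, then promote it to a pointwise bound using De Giorgi--Nash--Moser (Theorem \ref{moserde}) on small balls, carefully tracking the Agmon weight $e^{b\rho}$.

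Applying Lemma \ref{Agmonre1} to a compact $K$ containing all zeros of $V$, and a slightly larger compact $L$ with $L^{\circ}\supset K$, yields
\[
\int_{\R^n-L} V|u|^2 e^{2b\rho}\, dvol \;\leq\; C\bigl(\|w\|_{L^2_{wt}}^2 + \|u\|_{L^2}^2\bigr),
\]
where the local term has been absorbed using that $V$ and $\rho$ are bounded on $L$. For $y$ outside $L$, I choose a ball radius $r(y):=c(1+V(y))^{-1/2}$ with $c$ a small absolute constant. The growth hypothesis $|\nabla V|/(V+1)^{3/2}\to 0$ at infinity guarantees $V(x)\asymp V(y)$ on $B_{2r(y)}(y)$, and Lemma \ref{propagm}(1) then gives $|\rho(x)-\rho(y)|\leq C$ there. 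Since $Vu\geq 0$, the hypothesis $(\Delta+V)u\leq w$ implies $\Delta u\leq w$, so Theorem \ref{moserde} delivers
\[
u(y) \;\leq\; C\bigl(r^{-n/2}\|u\|_{L^2(B_{2r}(y))} + r^{-n/N}\|w\|_{L^N(B_{2r}(y))}\bigr).
\]

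Trading each local norm for its weighted version on $B_{2r(y)}(y)$ costs a factor $e^{-b\rho(y)}$ (by the above control of $\rho$) and produces polynomial prefactors $V(y)^{(n-2)/4}$ and $V(y)^{n/(2N)}$ respectively (substituting $r\sim V(y)^{-1/2}$ and, for the $L^2$ term, using $V(x)\asymp V(y)$ to pull $V$ inside the integral). The final step is to absorb these polynomial factors into the exponential, consuming a small fraction of the decay rate: along a minimizing Agmon geodesic from a fixed base point with $V>0$, the hypothesis $|\nabla V|\leq\epsilon V^{3/2}$ at infinity (valid for any $\epsilon>0$) gives $\tfrac{d}{ds}\sqrt{V}\leq \tfrac{\epsilon}{2}\sqrt{V}$ in Agmon arc length $s$, whence $V(y)^{1/2}\leq C e^{\epsilon\rho(y)/2}$. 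Choosing $\epsilon$ small compared with $b-a$ converts any fixed power of $V(y)$ into a factor bounded by $e^{(b-a)\rho(y)}$, producing the stated $e^{-a\rho(y)}$ decay. For $y\in L$ the bound is immediate from standard interior elliptic regularity on a fixed finite covering, since $\rho$ is bounded there.

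The one genuinely nontrivial point is this last absorption step: one must convert the qualitative growth condition on $\nabla V$ into the quantitative bound $V(y)\leq Ce^{c\rho(y)}$ for arbitrarily small $c$, which is exactly where the slack $a<b$ is exploited. The remaining ingredients --- the weighted Caccioppoli-type inequality of Lemma \ref{Agmonre1}, the sup bound from Theorem \ref{moserde}, and the Eikonal-type identity $|\nabla\rho|^2=V$ from Lemma \ref{propagm} --- enter only mechanically.
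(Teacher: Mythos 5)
Your proposal is correct and follows essentially the same route as the paper: a weighted $L^2$ (Agmon) estimate from Lemma \ref{Agmonre1}, a local sup bound via De Giorgi--Nash--Moser on balls whose radius shrinks with $V$, comparison of $\rho$ and $V$ across such balls, and absorption of the resulting polynomial factors of $V$ into the exponential slack $b-a$ via the growth condition on $|\nabla V|/(V+1)^{3/2}$ (the paper cites \cite{DY2020cohomology} for the bound $V\leq Ce^{(b-a)\rho}$, which you re-derive along Agmon geodesics). The only cosmetic difference is your radius choice $r\sim(1+V(y))^{-1/2}$ versus the paper's Agmon-ball construction with $r=1/(2\sup_{\tilde B_2}V)$; both yield prefactors that are absorbed identically.
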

	
	\begin{proof}
		First, we fix a compact subset ${K}$, such that outside ${K}$, $\frac{|\nabla V|}{(V+1)^{3/2}}\leq\frac{(b-a)}{2},$
		and let $L:=\{x\in\R^n: \tilde{d}(x,{K})\leq3\}.$
		
		\def\tb{\tilde{B}}
		Denoted $\tb_r(x):=\{y\in\R^n:\tilde{d}(y,x)<r\}$.
		For $x_0\notin L$,  set $l=\sup_{x\in \tb_2(x_0)}V(x)$, and $r=1/(2l).$ Then one can easily verify that $B_{2r}(x_0)\subset \tb_1(x_0)$.
		
		Choose $y_0\in \overline{\tb_2(x_0)}$ so that $V(y_0)\in (l/2,l].$
		By Lemma \ref{Agmonre1} and De Giorgi-Nash-Moser estimate (Theorem \ref{moserde}),
		\begin{align}\begin{split}\label{add3}
		&|u(x_0)|^2e^{2b\rho(x_0)}\leq C(n,N)(r^{-n}\|u\|^2_{L^2(B_{2r}(x_0))}e^{2b\rho(x_0)}+r^{-2n/N}\|w\|^2_{L^N(B_{2r}(x_0))}e^{2b\rho(x_0)})\\
		&\leq{C(n,N,b)}\left({r^{-n}}\int_{\tb_1(x_0)}|u|^2(y)e^{2b\rho(y)}dy +r^{-2n/N}(\int_{\tilde{B}_1(x_0)}|w(y)|^Ne^{Nb\rho(y)}dy)^{2/N}\right)\\
		&\leq C(n,N,b,a,V)\left({r^{-n}}\int_{L-K}|u|^2e^{2b\rho}dy+r^{-n}\int_{\R^n-K}|w|^2e^{2b\rho}dy +r^{-2n/N}\|w\|_{L^N_{wt}}^2\right)\\
		&\leq C(n,N,b,a,V)\left(|V(y_0)|^n\|u\|_{L^2}^2+|V(y_0)|^{n}\|w\|_{L^2_{wt}}^2 +|V(y_0)|^{2n/N}\|w\|_{L^N_{wt}}^2\right).
		\end{split}\end{align}
Here the first inequality follows from the fact that 
\be\label{add1}
|\rho(y)-\rho(x_0)|\leq2.
\ee
		Proceeding as in \cite{DY2020cohomology}, one has \begin{equation}\label{add2}|V(y_0)|^2 \leq C(V,a,b)\exp((b-a) \rho(y_0)).\end{equation}
		Hence, by \eqref{add1}, \eqref{add2} and \eqref{add3}, the result follows for $x_0\notin L$
		
		For $x_0\in L$, we have classical De Giorgi-Nash-Moser estimate
		\[|u(x_0)|\leq C(a,b,V,N,n)(\|u\|_{L^2}+\|w\|_{L^N}). \]
		
		Since in $L$, $\exp(-a \rho)$ has an upper and a lower bound depending on $a,b$ and $V$, the result follows.
	\end{proof}
	
	\subsection{Witten deformation and Agmon estimate}\label{witagm}

	Suppose that $f$ is a non-degenerate homogeneous polynomial on $\C^n$, then for any $a\in(0,1)$, there exists $r_0:=r_0(a)>0$, s.t outside $B_{r_0}:=\{x\in \R^n:|x|\leq r_0\}$, the Witten Laplacian $\Delta_{2\Re(f)}\geq \Delta+a|2\nabla \Re(f)|^2$ (c.f. \cite{DY2020cohomology} and \cite{zhang2001lectures}), i.e. for any smooth form $w$,
	\[g_0(\Delta_{2\Re(f)}w,w)(p)\geq g_0((\Delta+a|2\nabla \Re(f)|^2)w,w)(p)\]
	for any point $p\notin B_{r_0}.$
	
	Then if $\Delta_{2\Re(f)} u=v$ for some differential forms $u$ and $v$, Bochner formula and Kato's
	inequality tells us that
	\be\label{boc}(\Delta+a|2\nabla \Re(f)|^2)|u|\leq |v|\ee
	weakly (outside $B_{r_0}$).
	
	In this case, let $\rho$ be the Agmon distance with respect to $V:=|\nabla 2\Re(f)|^2$. The Agmon estimate discussed in the previous section is applicable for Witten Laplacian.
	\section{Proof of Proposition \ref{riebil1}}
	It suffices to prove the case of $u=0$. 
	
	Recall that by our construction, $\gamma_k=\sup_{t>0}\Phi_t^*(\sigma_k)$ for some $\sigma_k\in H_{n-2}(V_{-1}).$ 
	Let \[\epsilon_1:=(2\sup_{z\in\cup_{k=0}^{\mu-1}\sigma_k}|z|)^{-n}.\]
	
	Let \[V:=\{z\in\C^n:|z|=1, |\Re(f)|\geq \epsilon_1\},\] and $U$ be the cone
	\[U:=\{z\in\C^n:\frac{z}{|z|}\in V\},\]
	then since $f$ is homogeneous of degree $n$, one has $\cup_{k=0}^{\mu-1}\gamma_k\subset U.$
	\def\tf{\tilde{f}}
	\begin{lem}\label{tff4}
		For every sufficiently small $\epsilon>0$, there exists a smooth function $\tf:\C^n\to \R$, such that
		\begin{itemize}
			\item $\tf\geq |\Re(f)|.$ Moreover, $\tf=|\Re(f)|$ in $U.$
			\item $|\nabla \tf|\leq (1+c(n,f)\epsilon)|\nabla \Re(f)|$ for some $c(n,f)>0.$ Hence by Proposition \ref{propagm},  we also have \be\label{tfrho}\tf\leq\frac{(1+c(n,f)\epsilon)}{2}\rho.\ee
			\item $\tf\geq \epsilon |z|^n$.
		\end{itemize}
	\end{lem}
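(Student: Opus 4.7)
The plan is to construct $\tilde f$ as the degree-$n$ homogeneous extension of a smoothed version of $|\Re(f)|$ on the unit sphere. Let $g := \Re(f)|_{S^{2n-1}}$, so that $V = \{\omega \in S^{2n-1} : |g(\omega)| \geq \epsilon_1\}$, and set $C_0 := \min_{|\omega|=1} |\nabla \Re(f)(\omega)|^2$; this is strictly positive, since $f$ has an isolated critical point only at the origin and $|\nabla \Re(f)|^2 = \sum_i |\partial_{z_i} f|^2$ is a continuous positive function on the compact sphere.

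First, I would pick a smooth $M_\delta\colon\R^2\to\R$ approximating $\max$ with $M_\delta \geq \max(a,b)$ pointwise, $M_\delta(a,b)=a$ when $a\geq b+2\delta$, $M_\delta(a,b)=b$ when $a\leq b-2\delta$, and $0\leq \partial_a M_\delta,\partial_b M_\delta\leq 1$ (available by convolving the convex function $\max$ against a symmetric bump of width $\delta$). Taking $\delta:=\epsilon/3$, so that $\epsilon+2\delta<\epsilon_1$ (for $\epsilon$ small) and $\epsilon-2\delta>0$, I define on the sphere
\[
\tilde g(\omega):=M_\delta(|g(\omega)|,\epsilon),
\]
and extend via $\tilde f(z):=|z|^n\tilde g(z/|z|)$ for $z\neq 0$, $\tilde f(0):=0$. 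On $V$ we have $|g|\geq\epsilon_1>\epsilon+2\delta$, so $\tilde g=|g|$; on the open set $\{|g|<\epsilon-2\delta\}$, which contains a neighborhood of $\{g=0\}$, we have $\tilde g\equiv\epsilon$ (constant, hence smooth, which also repairs the non-smoothness of $|g|$ along its zero set). Thus $\tilde g\in C^\infty(S^{2n-1})$ with $\tilde g\geq\max(|g|,\epsilon)$. The first bullet $\tilde f\geq|\Re(f)|$ with equality on $U$ follows from the degree-$n$ homogeneity of $\Re(f)$ and the identity $\tilde g=|g|$ on $V$; the third bullet $\tilde f\geq\epsilon|z|^n$ is immediate from $\tilde g\geq\epsilon$.

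For the gradient comparison I would use the polar decomposition: for any degree-$n$ function $h(z)=r^n\phi(\omega)$ one has $|\nabla h|^2=r^{2n-2}(n^2\phi^2+|\nabla_{\omega}\phi|^2)$, where $\nabla_\omega$ denotes the spherical gradient. Applied to $h=\tilde f$ and $h=\Re(f)$, the problem reduces to bounding $n^2\tilde g^2+|\nabla_\omega\tilde g|^2$ by $(1+c\epsilon)^2(n^2 g^2+|\nabla_\omega g|^2)$ on the sphere. On $V$ both sides agree. On the region where $\tilde g\equiv\epsilon$, the left side equals $n^2\epsilon^2$ while the right side is at least $(1+c\epsilon)^2 C_0$, so the inequality is trivial for $\epsilon<\sqrt{C_0}/n$. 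In the transition band $|g|\in[\epsilon-2\delta,\epsilon+2\delta]$, the bound $|\partial_a M_\delta|\leq 1$ together with the smoothness of $|g|$ there (since $|g|\geq\epsilon-2\delta>0$) gives $|\nabla_\omega\tilde g|\leq|\nabla_\omega g|$, while the angular excess satisfies
\[
n^2\tilde g^2-n^2 g^2\leq n^2\bigl((\epsilon+2\delta)^2-(\epsilon-2\delta)^2\bigr)=8n^2\epsilon\delta=\tfrac{8}{3}n^2\epsilon^2.
\]
Since $n^2 g^2+|\nabla_\omega g|^2\geq C_0$ on the sphere, this excess is absorbed into $((1+c\epsilon)^2-1)\cdot C_0$ by taking $c=c(n,f):=\tfrac{4n^2}{3C_0}$ and $\epsilon$ sufficiently small. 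The final inequality $\tilde f\leq\tfrac{1+c\epsilon}{2}\rho$ then follows from Lemma \ref{propagm}(2), the Agmon potential being $V=|2\nabla\Re(f)|^2=4|\nabla\Re(f)|^2$.

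The main obstacle is the simultaneous requirement of exact equality $\tilde f=|\Re(f)|$ on $U$ (which forbids any rounding there) and the $(1+c\epsilon)$ gradient control with $c$ independent of $\epsilon$. The smoothed-max construction with $\delta\sim\epsilon$ is the key device: it freezes $\tilde g$ to $|g|$ wherever $|g|$ is large and to the constant $\epsilon$ wherever $|g|$ is small, introducing a narrow interpolation band of width $O(\epsilon)$ in which the gradient excess is only $O(\epsilon^2)$. That excess is just small enough to be absorbed into the $(1+c\epsilon)$-factor, thanks to the universal lower bound $|\nabla\Re(f)|^2\geq C_0|z|^{2n-2}$ afforded by the non-degeneracy of $f$. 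A secondary technical issue is $C^\infty$-smoothness at the origin: the above $\tilde f$ is smooth on $\C^n\setminus\{0\}$ and vanishes continuously at $0$, but is not $C^\infty$ at $0$ unless $\tilde g$ arises from a polynomial. Since this lemma is only used inside Agmon estimates on $\C^n - K$ for compact $K\ni 0$, this is immaterial; alternatively one may multiply by a radial cutoff vanishing on a tiny ball around $0$, which preserves all three conclusions away from that ball.
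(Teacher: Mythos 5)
Your construction is correct and is essentially the paper's own argument: the paper likewise defines $\tf(r,\theta)=r^n\,\eta\bigl(\Re(f)(1,\theta)\bigr)$ with a one-variable cutoff $\eta$ that equals $|x|$ for $|x|\gtrsim\epsilon$, is pinned near $\epsilon$ for small $|x|$, and has $|\eta'|\le 1$ (your smoothed max $M_\delta(|g|,\epsilon)$ is exactly such an $\eta$ applied to $g=\Re(f)|_{S^{2n-1}}$), and then uses the same polar-coordinate splitting $|\nabla h|^2=r^{2n-2}(n^2\phi^2+|\nabla_\omega\phi|^2)$ together with the lower bound $\inf_{S^{2n-1}}|\nabla\Re(f)|>0$ to absorb the $O(\epsilon^2)$ radial excess into the factor $1+c(n,f)\epsilon$. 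The only nitpick is quantitative: in the transition band one only has $\tilde g\le\max(|g|,\epsilon)+\delta\le\epsilon+3\delta$, so your excess bound and the value of $c(n,f)$ change by a harmless constant, which does not affect the argument.
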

	\begin{proof}
		
		Let $\eta\in C^\infty(\R)$, $\epsilon<\epsilon_1$, such that
		\begin{itemize}
			\item $\epsilon/2\leq\eta(x)\leq \epsilon$ if $|x|\leq\epsilon,$  and $\eta(x)=|x|$,  if $|x| \geq \epsilon$;
			\item $\eta\geq \epsilon/2$;
			\item $|\eta'|\leq 1$.
		\end{itemize}

		Let $(r,\theta)$ be the polar coordinates of $\C^n$, and $\nabla^{\theta}$ be gradient with respect to the standard metric on $S^{2n-1}:=\{z\in\C^n:|z|=1\}$. Now set $\tf(r,\theta):=r^n\eta\circ\Re(f)(1,\theta)$.

		By our construction, one can see that 
		\begin{itemize}
			\item \be\label{tff1}\mbox{$\epsilon|z|^n/2\leq\tf\leq\epsilon|z|^n$ if $z\notin U$, and $\tf(z)=|\Re(f)|(z)$ if $z\in U$}\ee
			\item \be\label{tff2}\tf\geq \epsilon |z|^n/2;\ee
			\item \be\label{tff}|\nabla^{\theta} \tf|\leq |\nabla^{\theta} \Re(f)|.\ee
		\end{itemize}

		Note that in $U$, $|\nabla \tf|= |\nabla \Re(f)|$.
		In polar coordinates, 
		\begin{align}\begin{split}\label{tff3}&|\nabla \Re(f)(r,\theta)|^2=n^2r^{2n-2}|\Re(f)(1,\theta)|^2+r^{2n-2}|\nabla^{\theta} \Re(f)(1,\theta)|^2,\\
		&|\nabla \tf(r,\theta)|^2=n^2r^{2n-2}\tf(1,\theta)+r^{2n-2}|\nabla^{\theta} \tf (1,\theta)|^2.\end{split} \end{align}
		
		Let $\epsilon_2:=\inf_{|z|=1}| \Re(f)|$, then by (\ref{tff1}), (\ref{tff}) and (\ref{tff3}), one can see easily that outside $U$,
		\[|\nabla \tf|\leq (1+\frac{\epsilon}{n\epsilon_2})|\nabla \Re(f)|.\]
	\end{proof}
	
	Notice that $e^{-\bar{f}}A_a$ is $d_{2\Re(f)}$ closed, although $e^{-\bar{f}}A_a$ is not $L^2$ integrable, one still have the following formulation of Hodge decomposition:
	\begin{lem}\label{decom}
		One has the following decomposition
		\[e^{-\bar{f}}A_a=w'+d_{2\Re(f)}\beta,\]
		where $w'$ is a harmonic form. Moreover, $\beta$ satisfies
		\begin{enumerate}[(a)]
			\item\label{estim1} $e^{f+\bar{f}}\beta$ has exponential decay on $U$, i.e., there exists $c,C>0$, such that $|e^{f+\bar{f}}\beta|\leq Ce^{-c|z|^n}$ in $U.$
			\item\label{estim2} There exist $a\in(0,1), C>0$, such that $|\beta|\leq Ce^{a\rho},$ where $\rho$ is the Agmon distance.
		\end{enumerate}
	\end{lem}
	Once we have the decomposition in the lemma, one can see 
	\be\label{int1}\int_{\gamma_k^-}e^{f}A_a=\int_{\gamma_k^-}e^{f+\bar{f}}w'\ee
	and for any harmonic form $w$,
	\be\label{int2}\int_{\C^n} e^{-\bar{f}}A_a\wedge*w=\int_{\C^n}w'\wedge*w.\ee
	(This is because, since $\beta$ satisfies the estimate above, by Stoke formula
	
	\[\int_{\gamma_k^-}d e^{f+\bar{f}}\beta=0,\]
	and integration by parts,
	\[\int_{\C^n}d_{2\Re(f)}\beta\wedge*w=\int_{\C^n}\beta\wedge* d_{2\Re(f)}^\dagger w=0.)\]
	
	By (\ref{int1}), (\ref{int2}) and Proposition \ref{riebil}, one obtains Proposition \ref{riebil1}.
	
	Now it suffices to prove Lemma \ref{decom}.
	\begin{proof}[Proof of Lemma \ref{decom}]
		We fix a function $\tf$ that satisfies the conditions in Lemma \ref{tff4} for a fixed $\epsilon<\min\{\epsilon_1,\frac{1}{16c(n,f)}\}$. 
		
		\begin{enumerate}[Step 1.]
			\item Let 
			$$d_{tw,0}:=e^{-2\Re(f)-3\tf/2}\circ d\circ e^{2\Re(f)+3\tf/2}=e^{-3\tf/2}\circ d_{2\Re(f)}\circ e^{3\tf/2},$$ 
			and 
			$$\Delta_{tw,0}:=d_{tw,0}d_{tw,0}^{\dagger}+d_{tw,0}^{\dagger}d_{tw,0}$$ be the Witten Laplacian with respect to $d_{tw,0}$. Then one can see that $e^{-3\tf/2-\bar{f}}A_a\in L^2(\C^n)$ (since $|e^{-3\tf/2-\bar{f}}A_a|\leq Ce^{-\tf/2}|z|^{l_a}\leq Ce^{-\epsilon|z|^n/2}|z|^{l_a}$), 
			and \[d_{tw}e^{-3\tf/2-\bar{f}}A_a=0.\] As a reselt, we have Hodge decomposition (c.f. \cite{DY2020cohomology})
			\[e^{-3\tf/2-\bar{f}}A_a={w}_0+d_{tw,0}\beta_0,\]
			where $w_0$ is $\Delta_{tw,0}$-harmonic, and we can also assume that $\beta_0\in \Im(d_{tw,0}^{\dagger}).$
			
			Since $\left|\nabla \left(2\Re(f)+3\tf/2\right)\right|\geq 2|\nabla \Re(f)|-3/2|\nabla \tf|\geq 13/32|\nabla \Re(f)|$, Agmon estimate tells us that
			\be\label{wzero}|w_0|\leq Ce^{-3\rho/16}\ee
			for some $C>0$.
			
			Moreover, by our choice of $\beta_0$, one has
			\[d_{tw,0}^{\dagger}e^{-3\tf/2-\bar{f}}A_a=d_{tw,0}^{\dagger}\tilde{w}_0+d_{tw,0}^{\dagger}d_{tw,0}\beta_0=d_{tw,0}^{\dagger}d_{tw,0}\beta_0=\Delta_{tw,0}\beta_0.\]

			Let $\epsilon_3:=n\sup_{|z|=1}|\nabla 2\Re(f)|$, then $\rho\leq \epsilon_3|z|^n$. Hence, there exists $b>0$, such that
			$e^{b\rho}d_{tw,0}^{\dagger}e^{-3\tf/2-\bar{f}}A_a$ is $L^2$ and $L^N$ integrable for some $N>n.$
			Hence, by Agmon estimate again, there exists $c<b$, such that
			\be|\beta_0|\leq Ce^{-c\rho}.\ee
			
			Now let $\tilde{\beta}_0:=e^{3\tf/2}\beta_0$, $\tilde{w}_0=e^{3\tilde{f}/2}w_0$, then
			$e^{-\bar{f}}A_a=\tilde{w}_0+d_{2\Re(f)}\tilde{\beta}_0.$
			Moreover, $\tilde{w}_0$ is $d_{2\Re(f)}$-closed. In $U$, $\tilde{f}=|\Re(f)|$, hence for some $c'>0$, \[|e^{f+\bar{f}}\tilde{\beta}_0|\leq Ce^{-\tf/2-c\rho}\leq Ce^{-c'|z|^n}.\] 
			In addition, \[|\tilde\beta_0|\leq Ce^{3\tf/2-c\rho}\leq Ce^{(3(1+\epsilon c(n,f))/4-c)\rho}.\] Hence $\tilde{\beta}_0$ satisfies the estimate (\ref{estim1}) and (\ref{estim2}) above.
			
			\item Although $\tilde{w}_0$ is $d_{2\Re(f)}$-closed, it is not harmonic. Also, $\tilde{w}_0$ may not be $L^2$ integrable. To continue, we will use the techniques similar to those used in \cite[\S 7.5]{DY2020cohomology}.
			Let $$d_{tw,1}=e^{-\tf/4}\circ d_{tw,0}\circ e^{\tf/4}=e^{-5/4\tf-2\Re(f)}\circ d\circ e^{5/4\tf+2\Re(f)},$$ and $\Delta_{tw,1}$ be the Witten Laplacian with respect to $d_{tw,1}.$
			
			By (\ref{wzero}) and (\ref{tfrho}), $\alpha_1:=e^{\tf/4}w_0$ is $d_{tw,1}$ closed and $L^2$-integrable. Hence, we have Hodge decomposition 
			\[\alpha_1=w_1+d_{tw,1}\beta_1,\]
			where $w_1$ is $\Delta_{tw,1}$ harmonic, and we may assume $\beta_1\in \Im(d_{tw,1}^{\dagger})$.
			
			Since 
			\begin{align*}
			\left|\nabla \left(5/4\tf+2\Re(f)\right)\right|&\geq 2|\nabla \Re(f)|-5/4|\nabla \tf|\\
			&\geq2|\nabla \Re(f)|-3/2|\nabla \tf|\geq 13/32|\nabla \Re(f)|, 
			\end{align*}
			we also have $|w_1|\leq C e^{-3\rho/16}.$ Similarly, $|\beta_1|\leq Ce^{-c\rho}$ for some $c>0.$
			
			Then, let $\tilde{w}_1=e^{5\tf/4}w_1,$ $\tilde{\beta}_1=e^{5\tf/4}\beta_1$, then we have
			\[\tilde{w}_0=\tilde{w}_1+d_{2\Re(f)}\tilde{\beta}_1.\]
			Here $\tilde{\beta}_1$ satisfies the estimates stated above, and $\tilde{w}_1$ is $d_{2\Re(f)}$ closed (but may not be $L^2$ integrable again).
			\item Now let $d_{tw,k}=e^{-k\tf/4}\circ d_{tw,0}\circ e^{k\tf/4}$. Repeating the arguments in Step 2 for 6 times, eventually we get
			$\alpha_6=e^{\tilde{f}/4} w_5$ is $d_{tw,6}(=d_{2\Re(f)})$ closed and $L^2$ integrable. Hence we have Hodge decomposition
			\[\alpha_6=w_6+d_{2\Re(f)}\beta_6,\] where $w_6$ is $\Delta_{2\Re(f)}$ harmonic, and $\beta_6$ satisfies the estimates stated above.
		\end{enumerate}
		Eventually, set $w'=w_6$, $\beta:=\tilde{\beta}_0+\cdots+\tilde{\beta}_5+\beta_6$, we finish the proof.
	\end{proof}
	
	\section{Modified Real Structure on $\Jac(F)$}\label{last}
	
	Let ${\phi_a}$ be a monomial representative of a basis for $\Jac(f)'$, such that the degree of $\phi_a$ is increasing. When $|u|$ is small, ${\phi_a}$ also serves as a monomial representative of a basis for $\Jac(F)'$.
	
	Additionally, $\Jac(F)'$ has a pole order filtration given by
	\[F^p:=Span\{\phi\in \Jac(F)'| \frac{deg(\phi)}{n}\in\mathbb{N},\frac{deg(\phi)}{n}\leq n-2-p\},\]
	as well as a Grothendieck residue pairing $Q$ given by
	$$Q(\phi_1,\phi_2)=\frac{\tilde{c}_{ab}}{(2\pi i)^n}\int_{\Gamma(\epsilon)}\frac{\phi_1\phi_2 dz_1\wedge\cdots\wedge dz_n}{\p_1F(\cdot,u)\cdots\p_nF(\cdot,u)},\forall \phi_1,\phi_2\in \Jac(F),$$
	where $\tilde{c}_{ab}$ is defined as $\tilde{c}_{ab}=\frac{(-1)^{a(a-1)/2+b(b-1)/2+n+(b-1)^2}}{(a-1)!(b-1)!}$, and $\Gamma(\epsilon)=\{|\p_iF|=\epsilon: i=1,\ldots,n\}$.

	\def\bp{\tilde{\phi}}
	Let $(\T_{ab})$ be the transition matrix derived in Theorem \ref{lgcy1}, and define $$\bp_a:=\phi_a-\sum_{b\neq a}\frac{(-1)^{n_b-1}(n_b-1)!}{(-1)^{n_a-1}(n_a-1)!}\T_{ab}\phi_b.$$ Let $w_a$ be the harmonic form w.r.t. $\phi_a$. Since $w_a$ is a basis, we have $\bar{w}_a:=\sum_bM_{\bar{a}b} w_b$. We then define the real structure $\tau$ on $\Jac(F)'$ as follows:
	\[\tau(\bp_a):=\sum_{b}M_{\bar{a}b}\bp_b.\]

	From the construction of $r$ and Theorem \ref{134}, we know that the real structure $\tau$ is preserved by $R$, meaning that for any $\phi\in \Jac(F)$, we have $R_u(\tau(\phi))=\overline{R_u(\phi)}$.
	
	Moreover, because $\T$ is lower triangular and Theorem \ref{rpqdegree} holds, we know that $\tau$ is compatible with the pole order filtration on $\Jac(F)'$. That is, for all $p$, we have $\Jac(F)'\cong F^p\oplus \tau(F^{n-1-p})$.

	It follows from the fact that the diagonal of $\T$ is $1$, Theorem \ref{fanlan}, and \cite[Theorem 3.4]{shen2016explicit} that 
	\begin{equation}\label{better}Q(R_u(\phi_1),R_u(\tau(\phi_2)))=\int_{X_F}R_u(\phi_1)\wedge \overline{R_u(\phi_2)}.\end{equation}
	\def\tt{\tilde{\tau}}
	\begin{rem}
		In \cite[(4.2)]{CECOTTI1991N}, the real structure $\tt$ is simply given by 
		\[\tt(\phi_a):=\sum_{b}M_{\bar{a}b}\phi_b.\]
		While $\tt$ is compatible with the pole order filtration, it is not preserved under the map $R$.
		
		In \cite{cecotti1991topological}, Cecotti-Vafa provide a less refined structure for the matrix $\T$ and show that (see \cite[(A.10)]{cecotti1991topological})
		\[Q(R_u(1),R_u(\tt(1)))=c(u,\bar{u})\int_{X_F}R_u(1)\wedge \overline{R_u(1)}\]
		for a positive function $c(u,\bar{u})$. However, with a better understanding of $\T$, we have (\ref{better}).
	\end{rem}

	\bibliography{lib}
	\bibliographystyle{plain}
	
\end{document}